\theoremstyle{plain}
\newtheorem{theorem}{Theorem}		
\newtheorem{corollary}{Corollary}		
\newtheorem{lemma}{Lemma}		
\newtheorem*{corollary*}{Corollary}		
\theoremstyle{definition}
\newtheorem*{definition*}{Definition}		
\newtheorem*{assumption*}{Assumptions}		
\theoremstyle{remark}
\newtheorem{remark}{Remark}		
\newtheorem*{remark*}{Remark}		
\newtheorem*{example*}{Example}		
\newcommand{\mainnl}{%
  \refstepcounter{AlgoLine}%
  \nlset{\arabic{AlgoLine}:}%
  }
\newcommand{\routinenl}{%
  \refstepcounter{AlgoLine}%
  \nlset{\arabic{AlgoLine}:}
  }
\let\mathbb=\varmathbb
\newcommand{\acdef}[1]{\textit{\acl{#1}} \textup{(\acs{#1})}\acused{#1}}		
\colorlet{MyBlue}{MediumBlue}
\colorlet{MyGreen}{DarkGreen!85!Black}
\definecolor{Powder}{HTML}{FEFEFA}
\definecolor{Cloud}{HTML}{F5F5F5}
\definecolor{charcoallight}{HTML}{707070}
\definecolor{tealblue}{HTML}{275666}
\definecolor{oldlavender}{rgb}{0.47, 0.41, 0.47}
\definecolor{mediumpersianblue}{rgb}{0.0, 0.4, 0.65}
\definecolor{mediumelectricblue}{rgb}{0.01, 0.31, 0.59}
\definecolor{egyptianblue}{rgb}{0.0, 0.3, 0.6}
\colorlet{linkcolor}{black}
\newcommand{\para}[1]{\smallskip\paragraph*{\textbf{#1}}}
\crefname{app}{Appendix}{Appendices}
\crefname{algorithm}{Alg.}{Algs.}
\crefname{algorithmenv}{Alg.}{Algs.}
\newcommand{\revise}[1]{#1}		
\def\beginrev{}		
\def\endedit{\color{black}}		
\newcounter{proofpart}
\DeclarePairedDelimiter{\bracks}{[}{]}		
\DeclarePairedDelimiter{\abs}{\lvert}{\rvert}		
\DeclarePairedDelimiterX{\setdef}[2]{\{}{\}}{#1:#2}		
\DeclarePairedDelimiterXPP{\exclude}[1]{\mathopen{}\setminus}{\{}{\}}{}{#1}
\newcommand{\cf}{cf.\xspace}		
\newcommand{\eg}{e.g.,\xspace}		
\newcommand{\ie}{i.e.,\xspace}		
\newcommand{\textpar}[1]{\textup(#1\textup)}		
\newcommand{\txs}{\textstyle}		
\DeclareMathOperator{\bigoh}{\mathcal O}		
\DeclareMathOperator{\poly}{poly}		
\newcommand{\eps}{\varepsilon}		
\newcommand{\mgeq}{\succcurlyeq}
\renewcommand{\Lipschitz}{L_\genericfunction}%
\renewcommand{\gradlipschitz}{L}%
\newenvironment{biographies}{\expandafter\comment}{\expandafter\endcomment}
\newcommand{\obsolete}[1]{}
\newcommand{\nocolsep}{\arraycolsep=1.4pt\def\arraystretch{1}}
\newcommand{\vast}{\bBigg@{4}}
\newcommand{\Vast}{\bBigg@{5}}
\newcommand{\Real}{\mathbb{R}}
\newcommand{\Realplus}{\Real_{>0}}
\newcommand{\Complex}{\mathbb{C}}
\newcommand{\hermfunction}{\textup{Herm}}
\newcommand{\Herm}[1]{\hermfunction\left(#1\right)}
\newcommand{\setst}{:}
\newcommand{\lebesguemeasure}{\mu}
\newcommand{\lebesgue}[1]{\lebesguemeasure(#1)}
\newcommand{\grad}{\nabla}
\newcommand{\gradk}[1]{\grad_{#1}}
\newcommand{\projgrad}{\scalebox{1}[1.22]{\rotatebox[origin=c]{180}{$\triangle$}}}
\newcommand{\projgradk}[1]{\projgrad_{#1}}
\newcommand{\inner}[2]{\langle{#1},{#2}\rangle}
\newcommand{\subdifffunction}[1]{\delta{#1}}
\newcommand{\subdiff}[2]{\subdifffunction{#1}(#2)}
\newcommand{\fenchelcoupling}{F}
\newcommand{\fenchel}[2]{\fenchelcoupling({#1},#2)}
\newcommand{\expectationfunction}{\mathbb{E}}
\newcommand{\expectation}[1]{\expectationfunction[#1]}
\newcommand{\Expectation}[1]{\expectationfunction\left[#1\right]}
\newcommand{\bigexpectation}[1]{\expectationfunction\big[#1\big]}
\newcommand{\probafunction}{\mathbb{P}}
\newcommand{\proba}[1]{\probafunction(#1)}
\newcommand{\Proba}[1]{\probafunction\left(#1\right)}
\newcommand{\bigproba}[1]{\probafunction\big(#1\big)}
\newcommand{\Bigproba}[1]{\probafunction\Big(#1\Big)}
\newcommand{\vect}[1]{(#1)}
\newcommand{\factorial}[1]{{#1}!}
\newlength\bshft
\def\fakebold#1{\ThisStyle{\ooalign{$\SavedStyle#1$\cr%
  \kern-\bshft$\SavedStyle#1$\cr%
  \kern\bshft$\SavedStyle#1$}}}
\newcommand{\trmsphere}{\maketr{\mathbb{S}}}
\newcommand{\trmspherei}[1]{\trmsphere_{#1}}
\newcommand{\trmball}{\maketr{\mathbb{B}}}
\newcommand{\trmballi}[1]{\trmball_{#1}}
\newcommand{\volume}[1]{\textup{vol}(#1)}
\newcommand{\Volume}[1]{\textup{vol}\left(#1\right)}
\newcommand{\zero}[1]{o(#1)}
\newcommand{\magnitude}[1]{\mathcal{O}(#1)}
\newcommand{\Magnitude}[1]{\mathcal{O}\left(#1\right)}
\newcommand{\bigtheta}[1]{\Theta(#1)}
\newcommand{\Bigtheta}[1]{\Theta\left(#1\right)}
\newcommand{\cond}{|}
\newcommand{\as}{\textup{a.s.}}
\newcommand{\As}{\textup{A.s.}}
\newcommand{\indicatorfunction}[1]{\mathbbm{1}_{#1}}
\newcommand{\indicator}[2]{\indicatorfunction{#1}(#2)}
\newcommand{\ceil}[1]{\lceil{#1}\rceil}
\newcommand{\maketr}[1]{\bar{#1}}
\newcommand{\twitchoperatorfunction}{T}
\newcommand{\twitchkoperator}[1]{\makeuserk{\twitchoperatorfunction}{#1}}
\newcommand{\twitchk}[3]{\twitchkoperator{#1}(#2,#3)}
\newcommand{\twitchFlipschitz}{\Lambda_{\text{F}}}
\newcommand{\supermartfunction}{S}
\newcommand{\supermartn}[1]{\makeiteraten{\supermartfunction}{#1}}
\newcommand{\lyapunovfunction}{\mathcal{L}}
\newcommand{\lyapunovprobfunction}[1]{\lyapunovfunction_{#1}}
\newcommand{\lyapunov}[1]{\lyapunovfunction({#1})}
\newcommand{\lyapunovprob}[2]{\lyapunovprobfunction{#1}({#2})}
\newcommand{\Lyapunov}[1]{\lyapunovfunction\left({#1}\right)}
\newcommand{\defeq}{:=}
\newcommand{\tracefunction}{\textup{tr}}
\newcommand{\trace}[1]{\tracefunction(#1)}
\newcommand{\exponentialfunction}{\exp}
\newcommand{\exponential}[1]{\exponentialfunction(#1)}
\newcommand{\Exponential}[1]{\exponentialfunction\left(#1\right)}
\newcommand{\Bigexponential}[1]{\exponentialfunction\Big(#1\Big)}
\newcommand{\brackexponential}[1]{\exponentialfunction[#1]}
\newcommand{\logarithm}[1]{\textup{log}(#1)}
\newcommand{\powlogarithm}[2]{\textup{log}^{#2}(#1)}
\newcommand{\norm}[1]{\|#1\|}
\newcommand{\Norm}[1]{\left\|#1\right\|}
\newcommand{\dualsymbol}{*}
\newcommand{\dualnorm}[1]{\norm{#1}_{\dualsymbol}}
\newcommand{\dualNorm}[1]{\Norm{#1}_{\dualsymbol}}
\newcommand{\frobeniuslabel}{\textup{F}}
\newcommand{\frobeniusnorm}[1]{\|#1\|_{\frobeniuslabel}}
\newcommand{\frobeniusNorm}[1]{\left\|#1\right\|_{\frobeniuslabel}}
\newcommand{\modulus}[1]{|#1|}
\newcommand{\makeuserk}[2]{#1_{#2}}
\newcommand{\makeiteraten}[2]{#1_{#2}} 
\newcommand{\makeuserkiteraten}[3]{{#1}_{#2,#3}}
\newcommand{\msg}{\mathbf{x}}
\newcommand{\msgk}[1]{\makeuserk{\msg}{#1}}
\newcommand{\signal}{\mathbf{y}}
\newcommand{\noise}{\mathbf{z}}
\newcommand{\chmat}{\mathbf{H}}
\newcommand{\chmatk}[1]{\makeuserk{\chmat}{#1}}
\newcommand{\users}{\mathcal{K}}
\newcommand{\kk}{K}
\newcommand{\nbk}{k}
\newcommand{\altaltnbk}{j}
\newcommand{\altnbk}{\nbk^{\prime}}
\newcommand{\nn}{N}
\newcommand{\mm}{M}
\newcommand{\mmk}[1]{\mm_{#1}}
\newcommand{\identity}{\mathbf{I}}
\newcommand{\identityd}[1]{\identity_{#1}}
\newcommand{\identityk}[1]{\makeuserk{\identity}{#1}}
\newcommand{\channel}{\mathbf{H}}
\newcommand{\channelk}[1]{\makeuserk{\channel}{#1}}
\newcommand{\Mat}{\mathcal{Q}}
\newcommand{\trMat}{\maketr\Mat}
\newcommand{\matsolfunction}{Q}
\newcommand{\matsol}[1]{\matsolfunction({#1})}
\newcommand{\matsolkfunction}[1]{\matsolfunction_{#1}}
\newcommand{\matsolk}[2]{\matsolkfunction{#1}({#2})}
\newcommand{\aggregatematsolfunction}{\matsolfunction}
\newcommand{\aggregatematsol}[1]{\aggregatematsolfunction{#1}}
\newcommand{\multiplier}{\gamma}
\newcommand{\mat}{\mathbf{Q}}
\newcommand{\matn}[1]{\makeiteraten{\mat}{#1}}
\newcommand{\matk}[1]{\makeuserk{\mat}{#1}}
\newcommand{\matkn}[2]{\makeuserkiteraten{\mat}{#1}{#2}}
\newcommand{\Dualmat}{\mathcal{Y}}
\newcommand{\Dualmatk}[1]{\makeuserk{\Dualmat}{#1}}
\newcommand{\dualmat}{\mathbf{Y}}
\newcommand{\dualmatn}[1]{\makeiteraten{\dualmat}{#1}}
\newcommand{\dualmatk}[1]{\makeuserk{\dualmat}{#1}}
\newcommand{\dualmatkn}[2]{\makeuserkiteraten{\dualmat}{#1}{#2}}
\newcommand{\altdualmat}{\dualmat^\prime}
\newcommand{\altmat}{\mat^\prime}
\newcommand{\altmatk}[1]{\makeuserk{\altmat}{#1}}
\newcommand{\meanmat}{\bar{\mathbf{Q}}}
\newcommand{\meanmatn}[1]{\makeiteraten{\meanmat}{#1}}
\newcommand{\trZmat}{\maketr{\mathcal{Z}}}
\newcommand{\trZmatk}[1]{\makeuserk{\trZmat}{#1}}
\newcommand{\zmat}{\mathbf{Z}}
\newcommand{\zmatn}[1]{\makeiteraten{\zmat}{#1}}
\newcommand{\zmatkn}[2]{\makeuserkiteraten{\zmat}{#1}{#2}}
\newcommand{\trzmat}{\maketr\zmat}
\newcommand{\trzmatk}[1]{\makeuserk{\trzmat}{#1}}
\newcommand{\trzmatkn}[2]{\makeuserkiteraten{\trzmat}{#1}{#2}}
\newcommand{\cmat}{\mathbf{C}}
\newcommand{\cmatk}[1]{\makeuserk{\cmat}{#1}}
\newcommand{\stepsize}{\gamma}
\newcommand{\stepsizen}[1]{\stepsize_{#1}}
\newcommand{\probas}{\Pi}
\newcommand{\probasU}[1]{\makeuserk{\probas}{#1}}
\newcommand{\prob}{\pi}
\newcommand{\probk}[1]{\makeuserk{\prob}{#1}}
\newcommand{\Updating}{U}
\newcommand{\Updatingn}[1]{\makeiteraten{\Updating}{#1}}
\newcommand{\radius}{\rho}
\newcommand{\radiusk}[1]{\makeuserk{\radius}{#1}}
\newcommand{\deltan}[1]{\delta_{#1}}
\newcommand{\bias}{\mathbf{B}}
\newcommand{\biaskn}[2]{\makeuserkiteraten{\bias}{#1}{#2}}
\newcommand{\devmat}{\mathbf{U}}
\newcommand{\devmatkn}[2]{\makeuserkiteraten{\devmat}{#1}{#2}}
\newcommand{\momega}{\mathbf{\omega}}
\newcommand{\zmatk}[1]{\makeuserk{\zmat}{#1}}
\newcommand{\optMat}{\Mat^\star}
\newcommand{\optmat}{\mat^\star}
\newcommand{\optrate}{\ratefunction^{\ast}}
\newcommand{\optmatk}[1]{\makeuserk{\optmat}{#1}}
\newcommand{\ratevalue}{\hat{R}}
\newcommand{\ratefunction}{R}
\newcommand{\rate}[1]{\ratefunction({#1})}
\newcommand{\ratekfunction}[1]{\makeuserk{\ratefunction}{#1}}
\newcommand{\ratek}[2]{\ratekfunction{#1}({#2})}
\newcommand{\DAfunction}{\textup{\texttt{MXL}}}
\newcommand{\DAkfunction}[1]{\makeuserk{\DAfunction}{#1}}
\newcommand{\DAk}[2]{\DAkfunction{#1}({#2})}
\newcommand{\DAplusfunction}{\textup{\texttt{MXL+}}}
\newcommand{\DApluskfunction}[1]{\makeuserk{\DAplusfunction}{#1}}
\newcommand{\DAplusk}[2]{\DApluskfunction{#1}({#2})}
\newcommand{\MDSfunction}{Z}
\newcommand{\MDSn}[1]{\MDSfunction_{#1}}
\newcommand{\randomMDSfunction}{\altMDSfunction}
\newcommand{\randomMDSn}[1]{\randomMDSfunction_{#1}}
\newcommand{\altMDSfunction}{X}
\newcommand{\lipschitzkl}[2]{\lipschitz_{#1#2}}
\newcommand{\compactset}{C}
\newcommand{\compactsetk}[1]{\compactset_{#1}}
\newcommand{\estgradratefunction}{\hat{V}} 
\newcommand{\estgrad}{\hat{\mathbf{V}}} 
\newcommand{\estgradn}[1]{\makeiteraten{\estgrad}{#1}}
\newcommand{\estgradk}[1]{\makeuserk{\estgrad}{#1}}
\newcommand{\estgradkn}[2]{\makeuserkiteraten{\estgrad}{#1}{#2}}
\newcommand{\partialestgrad}{\hat{\mathbf{V}}} 
\newcommand{\partialestgradn}[1]{\makeiteraten{\partialestgrad}{#1}}
\newcommand{\partialestgradkn}[2]{\makeuserkiteraten{\partialestgrad}{#1}{#2}}
\newcommand{\estgradratekfunction}[1]{\makeuserk{\estgradratefunction}{#1}}
\newcommand{\estgradratek}[2]{\estgradratekfunction{#1}({#2})}
\newcommand{\offset}{\rho}
\newcommand{\offsetk}[1]{\makeuserk{\offset}{#1}}
\newcommand{\offsetn}[1]{\makeiteraten{\offset}{#1}}
\newcommand{\genericfunction}{f}
\newcommand{\generic}[1]{\genericfunction({#1})}
\newcommand{\dgenericfunction}{\genericfunction^\prime}
\newcommand{\dgeneric}[1]{\dgenericfunction({#1})}
\newcommand{\regfunction}{h}
\newcommand{\reg}[1]{\regfunction({#1})}
\newcommand{\conjregfunction}{\regfunction^*}
\newcommand{\conjreg}[1]{\conjregfunction({#1})}
\newcommand{\power}{P}
\newcommand{\powerk}[1]{\power_{#1}}
\newcommand{\contrmatfunction}{\tilde\mat}
\newcommand{\contrmatkfunction}[1]{\contrmatfunction_{#1}}
\newcommand{\contrmatkdeltaZ}[3]{\contrmatkfunction{#1}^{#2}(#3)}
\newcommand{\contrmatkdeltaZk}[4]{\contrmatkfunction{#1}^{#2}_{#4}(#3)}%
\newcommand{\compactmat}{\mathbf{C}}
\newcommand{\compactmatn}[1]{\makeiteraten{\compactmat}{#1}}
\newcommand{\testmat}{\hat\mat}
\newcommand{\testmatk}[1]{\testmat_{#1}}
\newcommand{\testmatn}[1]{\makeiteraten{\testmat}{#1}}
\newcommand{\testmatkn}[2]{\makeuserkiteraten{\testmat}{#1}{#2}}
\newcommand{\contrzmat}{\mathbf{W}}
\newcommand{\contrzmatk}[1]{\contrzmat_{#1}}
\newcommand{\filtration}{\mathcal{F}}
\newcommand{\filtrationn}[1]{\makeiteraten{\filtration}{#1}}
\newcommand{\refereq}[2]{\overset{\text{\tiny{#1}}}{#2}}
\newcommand{\hidestoryboard}[1]{}
\newcommand{\SPSAplus}{SPSA+{}}
\newcommand{\randomMXLzeroplus}{AMXL0$^{+}${}}
\newcommand{\anyMXLzeroplus}{(UCD-)MXL0$^{+}${}}
\newcommand{\randomMXL}{AMXL0$^{+}$}
\newcommand{\CDrandomMXL}{UCD-MXL0$^{+}${}}
\renewcommand{\DAfunction}{\textup{\tt{MXL0}}} 
\renewcommand{\DAplusfunction}{\textup{\tt{MXL0}}^{\textup{\tt{+}}}}
\newcommand{\passfunction}{\textup{\tt{Pass}}}
\newcommand{\passkfunction}[1]{\passfunction_{#1}}
\def\shadezero{0.7} 
\def\shaderandom{0.9} 
\def\shadezeroplus{0.4} 
\definecolor{MXLzeroshade}{rgb}{\shadezero,\shadezero,\shadezero}
\definecolor{MXLrandomshade}{rgb}{\shaderandom,\shaderandom,\shaderandom}
\definecolor{MXLzeroplusshade}{rgb}{\shadezeroplus,\shadezeroplus,\shadezeroplus}
\def\lightness{0.7} 
\definecolor{claret}{rgb}{0.6,0.09,0.2}
\definecolor{DodgerB}{rgb}{0.118,0.565,1}
\definecolor{turquoise}{rgb}{0,\lightness,\lightness}
\definecolor{MXLonecol}{rgb}{0,0,0}
 \colorlet{MXLzerocol}{claret!70!white}
 \colorlet{MXLzeropluscol}{DodgerB}
 \colorlet{randomMXLzeropluscol}{turquoise}
\definecolor{IWFcolor}{rgb}{0,0.5,1.0}
\definecolor{SWFcolor}{rgb}{0,0,1.0}
\definecolor{MXLcolor}{rgb}{0,0,0}
\definecolor{MXL0color}{rgb}{0.8,0.1,0.3}
\definecolor{MXL+color}{rgb}{0,0.5,0.2}
\definecolor{AMXL+color}{rgb}{0,0.8,0.5}
\definecolor{darkcol}{rgb}{0.20,0.20,0.20}
\definecolor{lightcol}{rgb}{0.95,0.95,0.95}
\definecolor{envelopecol}{rgb}{0.50,0.50,0.50}
\def\prd{1.0}
\def\palaon{7pt/\prd}
\def\palaoff{1pt/\prd}
\def\palbon{6pt/\prd}
\def\palboff{2pt/\prd}
\def\palcon{5pt/\prd}
\def\palcoff{3pt/\prd}
\def\paldon{4pt/\prd}
\def\paldoff{4pt/\prd}
\def\paleon{3pt/\prd}
\def\paleoff{5pt/\prd}
\def\palfon{2pt/\prd}
\def\palfoff{6pt/\prd}
\def\palgon{1pt/\prd}
\def\palgoff{7pt/\prd}
\tikzset{dashdot/.style={
postaction={draw,randomMXLzeropluscol!100!MXLzeropluscol,dash pattern= on \palaon off \palaoff,dash phase=\palaon/2},
postaction={draw,randomMXLzeropluscol!80!MXLzeropluscol,dash pattern= on \palbon off \palboff,dash phase=\palbon/2},
postaction={draw,randomMXLzeropluscol!60!MXLzeropluscol,dash pattern= on \palcon off \palcoff,dash phase=\palcon/2},
postaction={draw,randomMXLzeropluscol!50!MXLzeropluscol,dash pattern= on \paldon off \paldoff,dash phase=\paldon/2},
postaction={draw,randomMXLzeropluscol!40!MXLzeropluscol,dash pattern= on \paleon off \paleoff,dash phase=\paleon/2},
postaction={draw,randomMXLzeropluscol!30!MXLzeropluscol,dash pattern= on \palfon off \palfoff,dash phase=\palfon/2},
postaction={draw,randomMXLzeropluscol!20!MXLzeropluscol,dash pattern= on \palgon off \palgoff,dash phase=\palgon/2},
}}
\renewcommand{\@algocf@capt@boxed}{above}
\newcommand{\removelatexerror}{\let\@latex@error\@gobble}
\def\Ae{\texttt{\textbf{For }}}
\def\Fe{\texttt{\textbf{For }}}
\def\Do{\texttt{\textbf{do }}}
\def\Get{\texttt{\textbf{Get }}}
\def\InParallel{\texttt{\textbf{in parallel}}}
\def\inlineIf{\texttt{\textbf{If }}}
\def\inlineThen{\texttt{\textbf{ then }}}
\def\inlineElse{\texttt{\textbf{ else }}}
\newcommand{\inlineAteachDIP}[2]{\Fe{#1}{ \Do{#2}\InParallel}}
\newcommand{\inlineIfElse}[3]{\inlineIf{#1}{\inlineThen{#2}}{\inlineElse{#3}}}
\def\Play{\texttt{\textbf{Transmit with }}}
\def\Set{\texttt{\textbf{Set }}}
\def\Sample{\texttt{\textbf{Sample }}}
\newcommand{\Sampleuniformly}[2]{\Sample{#1}{ \texttt{\textbf{uniformly over }}{#2}}}
\newcommand{\Drawfrom}[2]{\Draw{#1}{ \texttt{\textbf{according to }}{#2}}}
\def\Draw{\texttt{\textbf{Draw set of active users }}}
\newcounter{eq}
\newcounter{hypothesis}
\newcommand{\hyplabel}[1]{{\refstepcounter{hypothesis}\label{#1}\tag{H\thehypothesis}}}
\newcounter{subhypothesis}
\newcommand{\trconditionstepsize}{\textup{(\refeq{nsss}a)}}
\newcommand{\trconditionsquared}{\textup{(\refeq{nsss}b)}}
\renewcommand{\makeuserk}[2]{#1_{#2}}
\renewcommand{\makeiteraten}[2]{{#1}_{#2}} 
\renewcommand{\makeuserkiteraten}[3]{{#1}_{#2,#3}}
\renewcommand{\optMat}{\Mat^*}
\renewcommand{\optmat}{\mat^*}
\renewcommand{\lyapunov}[2]{\lyapunovfunction(#2;{#1})}
\renewcommand{\lyapunovprob}[3]{\lyapunovprobfunction{#1}(#3;{#2})}
\renewcommand{\Lyapunov}[2]{\lyapunovfunction\left(#2;{#1}\right)}
\renewcommand{\aggregatematsol}[1]{\aggregatematsolfunction({#1})}
\newcommand{\limitpoints}{\mathcal{S}}
\renewcommand{\compactset}{\trMat}
\newcommand{\itr}{t}
\newcommand{\altitr}{s}
\newcommand{\altaltitr}{u}
\renewcommand{\frobeniuslabel}{2}
\newcommand{\withdagger}[1]{#1^\dagger}
\renewcommand{\matsolfunction}{\bm{\Lambda}}
\newcommand{\vbound}{\bar{v}}
\newcommand{\vboundk}[1]{\vbound_{#1}}
\renewcommand{\maketr}[1]{#1}
\newcommand{\normalize}[2]{{#2}}
\newcommand{\dimension}{d}
\newcommand{\dimensionk}[1]{\makeuserk{\dimension}{#1}}
\newcommand{\Lipschitz}{L}
\newcommand{\gradlipschitz}{\meanlipschitz}
\renewcommand{\contrmatkfunction}[1]{\contrmatfunction}
\newcommand{\strongly}{\kappa}
\renewcommand{\multiplier}{\nu}
\newcommand{\mome}{\mu}
\newcommand{\momfunction}{\tilde\mome}
\newcommand{\mom}[2]{\momfunction({#1},{#2})}
\renewcommand{\momega}{\bm{\zeta}}
\renewcommand{\lipschitzkl}[2]{\lambda_{#1#2}}
\newcommand{\meanlipschitzk}[1]{\lambda_{#1}}
\newcommand{\meanlipschitz}{\lambda}
\newcommand{\meansqdimension}{\tilde\dimension}
\newcommand{\meandimension}{\bar\dimension}
\newcommand{\maxdimension}{\hat\dimension}
\newcommand{\maxmm}{\hat\mm}
\renewcommand{\estgrad}{\mathbf{V}} %
\renewcommand{\estgradratefunction}{\estgrad} %
\newcommand{\basicestgradratek}[3]{\estgradratekfunction{#1}({#2})}
\renewcommand{\estgradratek}[4]{\estgradratekfunction{#1}({#2},#3;#4)}
\renewcommand{\maxdimension}{\dimension}
\renewcommand{\maxmm}{\mm}
\newcommand{\probdistfunction}{\upsilon}
\newcommand{\probdist}[1]{\probdistfunction_{#1}}
\newcommand{\errortolerance}{\varepsilon}
\newcommand{\errorproba}{\alpha}
\newcommand{\zmargin}{\eps}
\newcommand{\Acoefficientfunction}{A}
\newcommand{\Acoefficient}[2]{\Acoefficientfunction(#1,#2)}
\newcommand{\Bcoefficientfunction}{B}
\newcommand{\Bcoefficient}[2]{\Bcoefficientfunction(#1,#2)}
\newcommand{\Bcoefficientprobfunction}[1]{\Bcoefficientfunction_{#1}}
\newcommand{\Bcoefficientprob}[3]{\Bcoefficientprobfunction{#1}(#2,#3)}
\newcommand{\Ccoefficientfunction}{C}
\newcommand{\Ccoefficient}[2]{\Ccoefficientfunction(#1,#2)}
\newcommand{\Ccoefficientprobfunction}[1]{\Ccoefficientfunction_{#1}}
\newcommand{\Ccoefficientprob}[3]{\Ccoefficientprobfunction{#1}(#2,#3)}
\renewcommand{\trZmat}{\trMat^*}
\renewcommand{\twitchFlipschitz}{2}
\newcommand{\squaredhalftwitchFlipschitz}{}
\newcommand{\sqrthalftwitchFlipschitz}{}
\newcommand{\sqrtsqrthalftwitchFlipschitz}{}
\newcommand{\halftwitchFlipschitz}{}
\newcommand{\twopowersevensquaredtwitchFlipschitz}{2^9}
\newcommand{\twosquaredtwitchFlipschitz}{8}
\newcommand{\twotwitchFlipschitz}{4}
\newcommand{\fourtwitchFlipschitz}{8}
\newcommand{\mutehalftwitchFlipschitz}{}
\newcommand{\forcesqrthalftwitchFlipschitzoneoverstrongly}{1}
\newcommand{\renrmlzdcsteps}[1]{\phi(#1)}\newcommand{\repownrmlzdcsteps}[2]{\phi^{#2}(#1)}
\newcommand{\renrmlzdstuff}[1]{\hat\chi(#1)}
\newcommand{\renrmlzdcstepsprobfunction}[1]{\hat\phi_{#1}}
\newcommand{\renrmlzdcstepsprob}[2]{\renrmlzdcstepsprobfunction{#1}(#2)}
\newcommand{\repownrmlzdcstepsprob}[3]{\renrmlzdcstepsprobfunction{#1}^{#3}(#2)}
\newcommand{\basicvtpfunction}{\hat v_{2}}
\newcommand{\basicvtp}[2]{\basicvtpfunction(#1)}
\newcommand{\basicvopfunction}{\hat v_{1}}
\newcommand{\basicvop}[2]{\basicvopfunction(#1)}
\newcommand{\basicvopOfunction}{\hat v_{1,\offset}}
\newcommand{\basicvopO}[2]{\basicvopOfunction(#1)}
\renewcommand{\basicvtpfunction}{\hat{v}}
\renewcommand{\basicvopfunction}{v}
\renewcommand{\basicvopOfunction}{v_{\offset}}
\renewcommand{\offset}{\rho}
\newcommand{\stepsizecoefficient}[1]{\stepsize}
\newcommand{\deltacoefficient}[1]{\delta}
\newcommand{\cststepsize}{\tilde\stepsize}
\newcommand{\cstdelta}{\tilde\delta}
\renewcommand{\trmspherei}[1]{\trmsphere^{#1}}
\renewcommand{\trmballi}[1]{\trmball^{#1}}
\renewcommand{\altnbk}{\ell}
\newcommand{\globalnorm}{\eqref{globalnorm}}%
\newcommand{\globaldualnorm}{\eqref{globalnorm}}%
\newcommand{\oldfirstcondition}{(\refeq{oldcondition}b)}
\newcommand{\oldsecondcondition}{(\refeq{oldcondition}a)}
\newcommand{\cstnormbound}{\vbound}
\newcommand{\normboundsymbol}{\bar{v}}
\newcommand{\normboundfunction}[2]{\normboundsymbol_{#1,#2}}%
\newcommand{\normbound}[4]{\normboundfunction{#1}{#2}(#3,#4)}
\newcommand{\coef}[1]{\tau_{#1}}
\providecommand{\dimension}{d}
\renewcommand{\radius}{r}
\renewcommand{\Dualmat}{\trZmat}
\renewcommand{\trZmat}{\mathcal{Y}}%
\newcommand{\indicatorsign}{\mathbf{1}}
\renewcommand{\indicatorfunction}[1]{\indicatorsign_{#1}}
\begin{document}


\title{Fast Optimization \revise{with Zeroth-Order Feedback}\\in Distributed, Multi-User MIMO Systems}

\author{%
Olivier Bilenne,
Panayotis Mertikopoulos,
	\IEEEmembership{Member,~IEEE}
and 
E. Veronica Belmega,
	\IEEEmembership{Senior Member,~IEEE}
\thanks{%
O.~Bilenne and P.~Mertikopoulos are with Univ. Grenoble Alpes, CNRS, Inria, Grenoble INP, LIG, 38000 Grenoble, France;
P.~Mertikopoulos is also with Criteo AI Lab, Grenoble, France.
E.~V.~Belmega is with ETIS, CY Cergy Paris University, ENSEA, CNRS, UMR 8051, F-95000, Cergy, France.}%
\thanks{The authors are grateful for financial support from the French National Research Agency (ANR) projects ORACLESS (ANR\textendash 16\textendash CE33\textendash 0004\textendash 01) and ELIOT (ANR\textendash 18\textendash CE40\textendash 0030 and FAPESP 2018/12579\textendash 7).
This research has also received financial support from the COST Action CA 16228 `European Network for Game Theory' (GAMENET).}%
}


%



\maketitle

\newacro{NE}{Nash equilibrium}
\newacroplural{NE}{Nash equilibria}
\newacro{BS}{base station}
\newacro{5G}{fifth generation}
\newacro{6G}{sixth generation}
\newacro{CSI}{channel state information}
\newacro{CSIT}{channel state information at the transmitter}
\newacro{MAC}{multiple access channel}
\newacro{MU-MIMO}{multi-user, multiple-input and multiple-output}
\newacro{MIMO}{multiple-input and multiple-output}
\newacro{MUI}{multi-user interference-plus-noise}
\newacro{MXL}{matrix exponential learning}
\newacro{MXL0}{gradient-free matrix exponential learning}
\newacro{MXL+}[MXL0$^{+}$]{gradient-free \acs{MXL} with callbacks}
\newacro{AMXL+}[AMXL0$^{+}$]{asynchronous \acs{MXL+}}
\newacro{CD}[UCD-MXL0$^{+}$]{(uniform) ``coordinate descent''}
\newacro{SINR}{signal to interference-plus-noise ratio}
\newacro{SPSA}{simultaneous perturbation stochastic approximation}
\newacro{SPSA+}[SPSA+]{enhanced simultaneous perturbation stochastic approximation}
\newacro{SIC}{successive interference cancellation}
\newacro{SUD}{single user decoding}
\newacro{TDD}{time-division duplexing}
\newacro{WF}{water-filling}
\newacro{IWF}{iterative water-filling}
\newacro{SWF}{simultaneous water-filling}
\newacro{IWMMSE}[IW-MMSE]{iterative weighted MMSE}
\newacro{NOMA}{non-orthogonal multiple access}
\newacro{BC}{broadcast channel}

\begin{abstract}
%
%
In this paper, we develop a gradient-free optimization methodology for efficient resource allocation in Gaussian \acs{MIMO} \aclp{MAC}.
Our approach combines two main ingredients:
\begin{enumerate*}
[(\itshape i\hspace*{1pt}\upshape)]
\item
an entropic semidefinite optimization based on \ac{MXL};
and
\item
a one-shot gradient estimator which achieves low variance through the reuse of past information.
\end{enumerate*}
This novel algorithm, which we call  \acdef{MXL+}, retains the convergence speed of gradient-based methods while requiring minimal feedback per iteration\textemdash a \emph{single} scalar.
In more detail, in a \acs{MIMO} \acl{MAC} with $\kk$ users and $\mm$ transmit antennas per user, the \ac{MXL+} algorithm achieves $\eps$-optimality within $\poly(\kk,\mm)/\eps^{2}$ iterations (on average and with high probability), even when implemented in a fully distributed, asynchronous manner.
For cross-validation, we also perform a series of numerical experiments in medium- to large-scale \acs{MIMO} networks under realistic channel conditions.
Throughout our experiments, the performance of \ac{MXL+} matches\textemdash and sometimes exceeds\textemdash that of gradient-based \ac{MXL} methods, all the while operating with a vastly reduced communication overhead.
In view of these findings, the \ac{MXL+} algorithm appears to be uniquely suited for distributed massive \acs{MIMO} systems where gradient calculations can become prohibitively expensive.

\end{abstract}

\begin{IEEEkeywords}
Gradient-free optimization;
matrix exponential learning;
multi-user MIMO networks;
throughput maximization.
\end{IEEEkeywords}


  

\section{Introduction}
\label{section:introduction}

\IEEEPARstart{T}{he} deployment of \ac{MIMO} terminals at a massive scale  has been identified as one of the key enabling technologies for \ac{5G} wireless networks, and for good reason:
massive-\ac{MIMO} arrays can increase throughput by a factor of $10\times$ to $100\times$ (or more), they improve the system's robustness to ambient noise and channel fluctuations, and they bring about significant latency reductions over the air interface \cite{LETM14,ABC+14}.
Moreover, ongoing discussions for the evolution of \ac{5G} envision the deployment of advanced \ac{MIMO} technologies at an even larger scale in order to reach the throughput and spectral efficiency required for ``speed of thought'' connectivity \cite{B5G15,6G19}.

In view of this, there have been intense efforts to meet the complex technological requirements that the massive-\ac{MIMO} paradigm entails.
At the hardware level, this requires scaling up existing multiple-antenna transceivers through the use of inexpensive service antennas and/or \ac{TDD} \cite{HtBD13,RPL+13,LETM14}.
At the same time however, given the vast amount of resources involved in upgrading an ageing infrastructure, a brute-force approach based solely on the evolution of wireless hardware technology cannot suffice.
Instead, unleashing the full potential of massive-\ac{MIMO} arrays requires a principled approach with the aim of minimizing computational overhead and related expenditures as the network scales up to accommodate more and more users.

In this general multi-user \ac{MIMO} context, it is crucial to optimize the input signal covariance matrix of each user, especially in the moderate (or low) \ac{SINR} regime \cite{CV93,YRBC04,SPB08i-sp,SPB08ii-sp,BLD11}.
The conventional approach to this problem involves the use of \ac{WF} solution methods, either \emph{iterative} (\acs{IWF}) \cite{YRBC04,SPB08-it} or \emph{simultaneous} (\acs{SWF}) \cite{SPB06}.
\acused{IWF}\acused{SWF}
In the \ac{IWF} algorithm only one transmitter updates its input covariance matrix per iteration (selected in a round-robin fashion);
instead, in \ac{SWF} all transmitters update their transmission characteristics simultaneously.
Owing to this ``parallelizability'', \ac{SWF} can be deployed in a distributed and decentralized fashion;
on the other hand, because of potential clashes in the users' concurrent updates, the \ac{SWF} algorithm may fail to converge \cite{SPB06}.
By comparison, \ac{IWF} \emph{always} converges to an optimal state \cite{YRBC04}, but this comes at the cost of centralization (to orchestrate the updating transmitters at each iteration) and a greatly reduced convergence speed (which is inversely proportional to the number of users in the system).%
\footnote{\beginrev
As suggested by one of the referees, it is worth pointing out here that \acl{WF} has also been applied to a broad range of distributed network paradigms;
see \eg \cite{HF13} for an application to cognitive radio OFDM networks.}

\revise{In addition} to the above, the authors of \cite{CADC08} proposed the so-called \acdef{IWMMSE} algorithm to solve the (non-convex) throughput maximization problem in the broadcast channel (downlink).
This work was subsequently extended in \cite{SRLH11} to broadcasting in multi-cell interference channels.
This formulation includes as a special case the uplink \ac{MAC} under the assumption that
\begin{enumerate*}
[(\itshape a\upshape)]
\item
all receivers are co-located and act as a single entity;
and
\item
this amalgamated entity employs \ac{SIC} to decode incoming messages.
\end{enumerate*}
In this context, \ac{IWMMSE} was shown to converge to an optimal solution in a distributed fashion, without suffering the convergence/distributedness trade-off of \acl{WF} methods.


\begin{table*}[t]
\centering
\small
\renewcommand{\arraystretch}{1.1}

\scshape
\begin{tabular}{llcccc}
\textbf{\scshape Algorithm \qquad[source]}
	&\textbf{Feedback}
	&\textbf{Convergence}
	&\textbf{Conv. Speed}
	&\textbf{Distributed}
	&\textbf{Overhead}
	\\
\hline
\hline
\acs{IWF}
	\hfill
	\cite{YRBC04}
	&full matrix
	&\checkmark
	&$\bigoh(\kk\log(1/\eps))$
	&no
	&$\bigoh(\min\{\mm^2, \nn^2\})$ 
	\\
\hline
\acs{SWF}
	\hfill
	\cite{SPB06}
	&full matrix
	&no
	&\textemdash
	&\checkmark
	&$\bigoh(\min\{\kk\mm^2, \nn^2\})$
	\\
\hline
\acs{IWMMSE}
	\hfill
	\cite{CADC08,SRLH11}
	&full matrix
	&\checkmark
	&\textemdash
	&\checkmark
	&$\bigoh(\min\{ \kk\mm^2, \nn^2\})$
	\\
\hline
\acs{MXL}
	\hfill
	\cite{MM16}
	&full matrix (imp.)
	&\checkmark
	&$\bigoh(1/\eps^{2})$
	&\checkmark
 	&$\bigoh(\min\{ \kk\mm^2, \nn^2\})$
	\\
\hline
\hline
\acs{MXL0}
	\hfill
	[this paper]
	&scalar
	&\checkmark
	&$\bigoh(1/\eps^{4})$
	&\checkmark
	&$\bigoh(1)$
	\\
\hline
\acs{MXL+}
	\hfill
	[this paper]
	&scalar
	&\checkmark
	&$\bigoh(1/\eps^{2})$
	&\checkmark
	&$\bigoh(1)$
	\\
\hline
\end{tabular}
\smallskip
\caption{%
Overview of related work.
 For the purposes of this table ``full matrix feedback'' refers to the case where the network's users have perfect knowledge of
\emph{a})
their effective channel matrices;
and/or
\emph{b})
the aggregate signal-plus-noise covariance matrix at the receiver at each transmission frame.
The characterization ``imp.'' (for ``imperfect'') signifies that noisy measurements suffice;
on the contrary, ``scalar'' means that users only observe their \emph{realized} utility (in our case, their achieved throughput).
The ``convergence'' and ``conv. speed'' columns indicate the best theoretical guarantees for each algorithm: $f(\eps)$ denotes the maximum number of iterations required to reach an $\eps$-optimal state while ``\textemdash'' means that no guarantees are known.
Finally, the ``overhead'' column indicates the computation/communication overhead of each iteration;
here and throughout,
$\kk$ is the number of users,
$\mm$ is the maximum number of transmit antennas per user,
and
$\nn$ is the number of antennas at the receiver.}
\vspace{-2ex}
\label{tab:related}
\end{table*}


Importantly, the above schemes rely on each user having perfect knowledge of
\begin{enumerate*}
[(\itshape a\upshape)]
\item
their effective channel matrix (which typically changes from one transmission frame to another);
and/or
\item
the global, system-wide signal-plus-noise covariance matrix at the receiver.
\end{enumerate*}
These elements are highly susceptible to observation noise, asynchronicities, and other impediments that arise in the presence of uncertainty;
as a result, algorithms requiring feedback of this type cannot be reliably implemented in real-world \ac{MIMO} systems.

To relax this ``perfect matrix feedback'' requirement, \cite{MM16} introduced a stochastic, first-order semidefinite optimization method based on \acdef{MXL}. 
The \ac{MXL} algorithm proceeds incrementally by combining stochastic gradient steps with a matrix exponential mapping that ensures feasibility of the users' signal covariance variables.
In doing so, \ac{MXL} guarantees fast convergence in cases where \ac{WF} methods demonstrably fail:
specifically,  \ac{MXL} achieves an $\eps$-optimal state within $\bigoh(1/\eps^{2})$ iterations, even in the presence of noise and uncertainty, in which case \acl{WF} methods are known to produce suboptimal results \cite{SPB06,SPB08-it,SPB09-sp}.

On the negative side, \ac{MXL} still requires
\begin{enumerate*}
[\upshape(\itshape a\upshape)]
\item
inverting a large matrix at the receiver;
and
\item
transmitting the resulting (dense) matrix to all connected users.
\end{enumerate*}
In a \ac{MIMO} array with $\nn=128$ receive antennas, this means $65$ kB of data per transmission frame, thus exceeding typical frame size limitations by a factor of $50\times$ to $500\times$ (depending on the specific standard) \cite{LBON16}.
Coupled with the significant energy expenditures involved in matrix computations and the fact that entry-level antenna arrays may be ill-equipped for this purpose, the overhead of \ac{MXL} quickly becomes prohibitive as \ac{MIMO} systems ``go large''.

\para{Contributions and related work}

Our main objective in this paper is to lift the requirement that users have access to full matrix feedback at each transmission frame (\eg perfect knowledge of their effective channel matrices or the system-wide signal-plus-noise covariance matrix).
Our main tool to lift these feedback requirements is the introduction of a ``zeroth-order'' optimization framework in which gradients are estimated from observed throughput values using a technique known as \acdef{SPSA} \cite{Spa97,FKM05}.
By integrating this \ac{SPSA} technique in the chassis of the \ac{MXL} method, we obtain a novel algorithm, which we call \acdef{MXL0}, and which we show converges to $\eps$-optimality within $\bigoh(1/\eps^{4})$ iterations 
(on average and with high probability).

On the positive side, this analysis shows that \ac{MXL0} is an asymptotically optimal algorithm (similarly to \ac{MXL}, \ac{IWF} and \ac{IWMMSE}) but \emph{without} the full matrix feedback requirements of these methods.
On the negative side, despite the vastly reduced feedback and overhead requirements of \ac{MXL0}, the drop in convergence speed relative to the original \ac{MXL} scheme is substantial and makes the algorithm ill-suited for practical systems.
In fact, as we show via numerical experiments in realistic network conditions, \ac{MXL0} might take up to $10^{5}$ iterations to achieve a relative optimality threshold of $\eps = 10^{-1}$ (compared to between $10$ and $100$ iterations for \ac{MXL}).
This is  caused by the very high variance of the \ac{SPSA} estimator, which incurs a significant amount of state space exploration and leads to a dramatic drop in the algorithm's convergence speed.

To circumvent this obstacle, we introduce a variance reduction mechanism where information from previous transmit cycles is reused to improve the accuracy of the \ac{SPSA} gradient estimator.
We call the resulting algorithm \acdef{MXL+}, and we show that \emph{it combines the best of both worlds:}
it retains the fast $\bigoh(1/\eps^{2})$ convergence rate of the standard \ac{MXL} algorithm, despite the fact that it only requires a \emph{single scalar} worth of feedback per iteration.
In fact, in many instances, the reuse of past queries is so efficient that the gradient-free \ac{MXL+} algorithm ends up outperforming even \ac{MXL} (which requires first-order gradient feedback).

With regard to feedback reduction, the work which is closest in spirit to our own is the very recent paper \cite{LA19}, where the authors seek to minimize the informational exchange of \ac{MXL} methods applied to the maximization of transmit energy efficiency (as opposed to throughput).
There, instead of requiring an $\nn\times\nn$ Hermitian matrix as feedback, each transmitter is assumed to receive a random selection of gradient components.
This (batch) ``coordinate descent'' approach leads to a trade-off between signalling overhead and speed of convergence, but still relies on users having access to first-order gradient information.
In contrast, we do not make any such assumptions and work \emph{solely} with throughput observations;
in this way, the communication overhead is reduced to a single scalar, while retaining the possibility of asynchronous, distributed updates.

\beginrev
Finally, from a beamforming perspective, the algebraic power method can also be used to iteratively approximate optimal beamformer/combiner pairs without prior knowledge of the channel matrix.
However, this approach requires a stationary wireless background:
in the presence of multiple users, user-to-user interference can render the estimation of individual channel matrices impossible.
For this reason, we do not consider such methods in the sequel;
for an overview, see \cite{DCG04,OLR17}.
\endedit

%

\para{Notation}
Throughout the sequel, we use bold symbols for matrices, saving the letters $\nbk,\altnbk$ for user assignments and $\itr,\altitr$ for time indices, so that \eg matrix~$\matk{\nbk}$ relates to user $\nbk$, $\matn{\itr}$ to time $\itr$, and $\matkn{\nbk}{\itr}$ to user $\nbk$ at time $\itr$.
The symbols~$\zero{\cdot}$, $\magnitude{\cdot}$, and~$ \bigtheta{\cdot}$ are taken as in the common Bachmann-Landau notation.

\section{Problem Statement}
\label{section:problemdescription}

In this section, we present two archetypal multi-user \ac{MIMO} system models that are at the core of our considerations:
a centralized sum-rate optimization problem, and an individual rate maximization game.
In both cases, the optimization process is assumed to unfold in a distributed, online manner as follows:
\begin{enumerate}
\item
At each transmission frame, every user in the network selects an \emph{action} (an input signal covariance matrix).
\item
This choice generates each user's \emph{utility} (their sum- or individual rate, depending on the problem's specifics).
\item
Based on the observed utilities, the users update their actions and the process repeats.
\end{enumerate}

We stress here that
\emph{we do not assume}
the existence of a centralized control hub with access to all the primitives defining the problem (individual channel matrices, input signal covariance matrices, etc.)
and/or
the capability of implementing an \emph{offline} optimization algorithm to solve it.
Instead, we focus on wireless networks with light-weight deployment and implementation characteristics, such as multi-user \ac{MIMO} uplink networks in typical urban environments.
In the downlink, the decision process regarding all transmission aspects (including the input signal covariance matrices) is inherently centralized as it takes places at the unique transmitter, which makes the broadcast setting a more resource-hungry choice compared to the uplink;
nevertheless, the duality between the \ac{MAC} and the \ac{BC} \cite{JVG04} can be exploited to solve the analogous centralized problem in the downlink.

In terms of decoding, we consider two different schemes at the receiver:
\begin{enumerate*}
[(\itshape a\upshape)]
\item
\acdef{SIC}, which is suitable for networks with centralized user admission and control protocols;
and
\item
\acdef{SUD}, which is suitable for more decentralized, ad hoc networks.
\end{enumerate*}

\subsection{Centralized sum-rate maximization}
\label{subsec:SIC}

Consider a Gaussian vector \acl{MAC} consisting of $\kk$ users simultaneously transmitting to a wireless receiver equipped with $\nn$ antennas.
If the $\nbk$-th transmitter is equipped with $\mmk{\nbk}$ antennas, we get the baseband signal model
\begin{equation}
\signal = \sum\nolimits_{\nbk=1}^{\kk} \chmatk{\nbk} \msgk{\nbk} + \noise, 
\end{equation}
where:
\begin{enumerate*}
[(\itshape a\upshape)]
\item
$\msgk{\nbk}\in\Complex^{\mmk{\nbk}}$ denotes the signal transmitted by the $\nbk$-th user;
\item
$\chmatk{\nbk}\in \Complex^{\nn\times\mmk{\nbk}}$ is the corresponding channel matrix;
\item
$\signal\in\Complex^{\nn}$ is the aggregate signal reaching the receiver;
and
\item
$\noise\in\Complex^{\nn}$ denotes the ambient noise in the channel, including thermal and environmental interference effects (and modeled for simplicity as a zero-mean, circulant Gaussian vector with identity covariance).
\end{enumerate*}
In this general model, the transmit power of the $\nbk$-th user is given by $p_{\nbk} = \expectation{\withdagger{\msgk{\nbk}}\msgk{\nbk}}$.
Then, letting $\powerk{\nbk}$ denote the maximum transmit power of user $\nbk$, we also write
\begin{equation}
\matk{\nbk}
	= \expectation{\msgk{\nbk}\withdagger{\msgk{\nbk}}} \big/ \powerk{\nbk}
\end{equation}
for the normalized signal (or input) covariance matrix of user $\nbk$.
By definition, $\matk{\nbk}$ is Hermitian and positive-semidefinite, which we denote by writing $\matk{\nbk}\in\Herm{\mmk{\nbk}}$ and $\matk{\nbk} \mgeq 0$ respectively.

Assuming \acf{SIC} at the receiver, the users'achievable sum rate is given by the familiar expression
\begin{equation}
\label{rate}
\rate{\mat}
	= \log\det \contrzmat,
\end{equation}
where
\begin{equation}
\label{eq:MUI-total}
\contrzmat
	\equiv
	\contrzmat(\mat)
	= \identity + \sum\nolimits_{\nbk=1}^{\kk} \normalize{}{\powerk{\nbk} \,} \channelk{\nbk} \matk{\nbk} \withdagger{\channelk{\nbk}}
\end{equation}
is the aggregate signal-plus-noise covariance matrix at the receiver,
and
$\mat \equiv \vect{\matk{1},\dots,\matk{\kk}}$ denotes the users' aggregate signal covariance profile \cite{Tel99}.
\ac{SIC} decoding of this type has been exploited as a means to control the multi-user interference in the power-domain \ac{NOMA} technology \cite{SCBL19}, which provides a better spectrum utilization and spectral efficiency compared with traditional orthogonal schemes.

Since $\rate{\mat}$ is increasing in each user's total transmit power $p_{\nbk} = \power_{\nbk} \trace{\matk{\nbk}}$, the channel's throughput is maximized when the users individually saturate their power constraints, \ie when $\trace{\matk{\nbk}} = 1$ for all $\nbk=1,\dotsc,\kk$.
In this way, we obtain the \emph{power-constrained sum-rate optimization problem}
\begin{equation}
\label{eq:trproblem}
\tag{Opt}
\begin{aligned}
\text{maximize}
	&\quad
	\rate{\mat}
		\equiv \rate{\matk{1},\dotsc,\matk{\kk}}
	\\
\text{subject to}
	&\quad
	\matk{\nbk}\in\compactsetk{\nbk}
	\;
	\text{for all $\nbk = 1,\dotsc,\kk$},
\end{aligned}
\end{equation}
where each user's feasible power region $\compactsetk{\nbk}$ is given by
\begin{equation}
\label{eq:feasible}
\compactsetk{\nbk}
	= \left\{\matk{\nbk}\in\Herm{\mmk{\nbk}} \setst \trace{\matk{\nbk}} = 1, \matk{\nbk}\mgeq 0 \right\}.
\end{equation}
By definition, each $\compactsetk{\nbk}$ is a spectrahedron of (real) dimension $\dimensionk{\nbk} = \mmk{\nbk}^{2}-1$, so the problem's dimensionality is $\sum_{\nbk} \dimensionk{\nbk} = \magnitude{\sum_{\nbk}\mmk{\nbk}^{2}}$.
To avoid trivialities, we will assume in what follows that each transmitter possesses at least two antennas, so $\dimensionk{\nbk} > 0$ for all $\nbk = 1,\dotsc,\kk$.
Also, to further streamline our discussion, we will state our results in terms of the maximum number $\maxmm=\max_{\nbk} \mmk{\nbk}$ of antennas per transmitter\textemdash or, equivalently, in terms of the larger dimension $\maxdimension=\maxmm^{2}-1$.%
\footnote{The statement of our results can be fine-tuned at the cost of introducing further notation for other aggregate statistics of the number of antennas per transmitter (such as the arithmetic or geometric mean of $\mmk{\nbk}$).
The resulting expressions are fairly cumbersome, so we do not report them here.}

\subsection{Distributed individual rate maximization}
\label{subsec:SUD}

Moving beyond the sum-rate maximization problem above, if messages are decoded using \acl{SUD} at the receiver (\ie interference by \revise{all other} users is treated as additive colored noise), each user's \emph{individual} rate will be
\begin{equation}
\label{ratek}
\ratek{\nbk}{\matk{\nbk};\matk{-\nbk}} = 
	\rate{\matk{1},\dotsc,\matk{\kk}}
		- \rate{\matk{1},\dotsc,0,\dotsc,\matk{\kk}},
\end{equation}
where $(\matk{\nbk};\matk{-\nbk})$ is shorthand for the covariance profile $(\matk{1},\dotsc,\matk{\nbk},\dotsc,\matk{\kk})$.
In turn, this leads to the \emph{individual rate maximization} game
\begin{equation}
\label{eq:trproblemk}
\tag{Opt$_{\nbk}$}
\begin{aligned}
\text{maximize}
	&\quad
	\ratek{\nbk}{\matk{\nbk};\matk{-\nbk}}
	\\
\text{subject to}
	&\quad
	\matk{\nbk}\in\compactsetk{\nbk}
\end{aligned}
\end{equation}
to be solved unilaterally by each user $\nbk = 1,\dotsc,\kk$.

Given that $\rate{\mat}$ is concave in $\mat$ and $\ratek{\nbk}{\matk{\nbk};\matk{-\nbk}}$ is concave in $\matk{\nbk}$, it follows that the decentralized problem \eqref{eq:trproblemk} defines a concave potential game whose Nash equilibria coincide with the solutions of \eqref{eq:trproblem} \cite{MS96,Ney97,MM16}.
In view of this, the gradient-free optimization framework and algorithms derived in this paper and designed to solve the centralized sum-rate optimization \eqref{eq:trproblem} will also solve the game \eqref{eq:trproblemk};
conversely, \eqref{eq:trproblem} is amenable to a distributed approach where it is treated as the aggregation of the unilateral sub-problems \eqref{eq:trproblemk}, to be solved in parallel by the network's users.
We revisit this distributed approach in \cref{section:desynchronized}.

\subsection{\Acl{WF} and \acl{MXL}}
\label{subsec:MXL}

A basic online solution method for \eqref{eq:trproblem} is the \acf{WF} algorithm \cite{CV93,YRBC04,SPB09-sp} and its variants\textemdash iterative or simultaneous \cite{LP06,SPB06,SPB08-it}.
In \ac{WF} schemes, transmitters are tacitly assumed to have full knowledge of their channel matrices $\chmatk{\nbk}$ as well as the \ac{MUI} covariance matrix
\begin{equation}
\label{eq:MUI}
\contrzmatk{\nbk}
	= \identity
		+ \sum\nolimits_{\altnbk\neq\nbk} \powerk{\altnbk} \chmatk{\altnbk} \matk{\altnbk} \withdagger{\chmatk{\altnbk}}.
\end{equation}
These matrices are then used to ``water-fill'' the users' \emph{effective channel matrices}
\begin{equation}
\label{eq:effchan}
\tilde\chmat_{\nbk}
	= \contrzmat_{\nbk}^{-1/2} \chmatk{\nbk}
\end{equation}
either iteratively (\ie in a round-robin fashion),
or simultaneously (all transmitters at the same time);
the corresponding implementations are called \acf{IWF} and \acf{SWF} respectively.

We stress here that the users' effective channel matrices may change over time, even when the \emph{actual} channel matrix $\chmat_{\nbk}$ is static:
this is because $\tilde\chmat_{\nbk}$ depends on the transmission characteristics of \emph{all} other users in the network (via the \ac{MUI} matrix $\contrzmat_{\nbk}$), and these typically evolve over time according to each user's optimization policy.

In this context, \ac{IWF} converges \emph{always} (but slowly if the number of users is large), whereas \ac{SWF} \emph{may fail} to converge altogether \cite{SPB06,MBML12}.
In addition, as we discussed in the introduction, \acl{WF} is highly susceptible to observation noise, asynchronicities, and other impediments that arise in real-world systems, so the solution of \eqref{eq:trproblem} in the presence of uncertainty requires a different approach (see also the numerical experiments presented in \cref{section:experiments}).

These limitations are overcome by
the \acdef{MXL} algorithm \cite{MBM12,MM16}, which will serve both as a reference and an entry point for our analysis.
Heuristically, \ac{MXL} proceeds by aggregating incremental gradient steps (possibly evaluated with imperfect channel state and \ac{MUI} estimations), and then using a suitable matrix exponential mapping to convert these steps into a positive-semidefinite matrix that meets the transmit power constraints of \eqref{eq:trproblem} and/or \eqref{eq:trproblemk}.

More formally, let
\begin{equation}
\label{gradkrate}
\gradk{\nbk} \rate{\mat} 
=	\normalize{}{\powerk\nbk}{\withdagger{\channelk{\nbk}}}
	\left[
		\identity
			+ \sum\nolimits_{\altnbk=1}^{\kk}
			\normalize{}{\powerk{\altnbk} \,} \channelk{\altnbk} \matk{\altnbk} \withdagger{\channelk{\altnbk}}
	\right]^{-1}
		\channelk{\nbk}.
\end{equation}
denote the individual gradient of $\ratefunction$ (or $\ratefunction_{\nbk}$) relative to the signal covariance matrix of the $\nbk$-th user,
and let
\begin{equation}
 \Dualmatk{\nbk}=\{\dualmatk{\nbk}\in\Herm{\mmk{\nbk}}: \trace{\dualmatk{\nbk}}= 0 \}
\end{equation}
denote the subspace tangent to~$\compactset$.
Then, given an initialization $\dualmatn{1} \in \Dualmat \equiv \prod_{\nbk}\Dualmatk{\nbk}$,  the \ac{MXL} algorithm is defined via the basic recursion
\begin{equation}
\label{MXL}
\tag{MXL}
\begin{aligned}
\matn{\itr}
	&= \aggregatematsol{\dualmatn{\itr}},
	\\
\dualmatn{\itr+1}
	&= \dualmatn{\itr} + \stepsizen{\itr} \estgradn{\itr},
\end{aligned}
\end{equation}
where:
\begin{enumerate}
[\itshape i\hspace*{.5pt}\upshape),ref=\alph*]

\item
$\matn{\itr}$ denotes the users' input signal covariance profile at the $\itr$-th iteration of the algorithm ($\itr=1,2,\dotsc$).

\item
$\estgradn{\itr}=\vect{\estgradkn{1}{\itr},\dots,\estgradkn{\kk}{\itr}}$ is an estimate of the tangent component of the gradient $\grad\ratefunction$ relative to~$\compactset$.%
\footnote{More precisely, \eqref{MXL} only requires estimates of~$\projgrad\ratefunction:\compactset \mapsto\Dualmat$, which here denotes  the tangent component of the gradient~$\grad\ratefunction$ relative to~$\compactset$, given  by 
$\projgrad \ratefunction = \vect{\projgradk{1} \ratefunction,\dots,\projgradk{\kk} \ratefunction}$ where $\projgradk{\nbk} \ratefunction = \gradk{\nbk} \ratefunction - \trace{\gradk{\nbk} \ratefunction}\, \identity$.  
All technical details in regards to~\eqref{MXL} are deferred to the appendix.}

\item
$\stepsizen{\itr} > 0$ is a non-increasing sequence of step-sizes whose role is examined in detail below.

\item
$\dualmatn{\itr}$ is an auxiliary matrix that aggregates gradient steps.

\item
$\matsol{\dualmat} = (\matsolk{1}{\dualmatk{1}},\dotsc,\matsolk{\kk}{\dualmatk{\kk}})$ denotes the matrix exponential mapping given in (block) components by
\begin{equation}
\label{trmatsolk}
\matsolk{\nbk}{\dualmatk{\nbk}}
	= \frac{\exponential{\dualmatk{\nbk}}}{\trace{\exponential{\dualmatk{\nbk}}}}.
\end{equation}
\end{enumerate}

The intuition behind \eqref{MXL} is that the exponential mapping assigns more power to the spatial directions that are aligned to the objective's gradient (as estimated via $\estgradn{\itr}$).
In fact, the \ac{MXL} algorithm can be explained as a matrix-valued instance of Nesterov's dual averaging method \cite{Nes09};
the key innovation of \ac{MXL} is the matrix exponentiation step which lifts the need to do a costly projection on the users' feasible region (a trace-constrained spectrahedron).
The output of each iteration of the algorithm is a positive-semidefinite matrix with unit trace, so the problem's constraints are automatically satisfied.
We defer the details of this derivation to \cref{appendix:DA}.

As was shown in \cite{MM16}, the \ac{MXL} algorithm achieves an $\eps$-optimal signal covariance profile within $\magnitude{1/\eps^{2}}$ iterations.
However, to do so, the algorithm still requires access to noisy observations of the gradient matrices \eqref{gradkrate}.
Typically, this involves inverting a (dense) $\nn\times\nn$ Hermitian matrix at a central hub and subsequently transmitting the result to the network's users, so the algorithm's computation and communication overhead is considerable (see Table~\ref{tab:related}).
On that account, our main focus in the sequel will be to lift the assumption that the network's users have access to the gradient matrices \eqref{gradkrate}, all the while maintaining the $\magnitude{1/\eps^{2}}$ convergence speed of \eqref{MXL}.

\subsection{Technical preliminaries and notation}

For the analysis to come, it will be convenient to introduce the following constants.
First, we will write $\compactset = \prod_{\nbk} \compactsetk{\nbk}$ for the feasible region of \eqref{eq:trproblem}, and we will denote by $\Lipschitz$ the Lipschitz constant of $\ratefunction$ over $\compactset$ relative to the nuclear norm;
specifically, this means that:
\begin{equation}
\label{lipschitz}
\modulus{\rate{\mat}-\rate{\altmat}}
	\leq \Lipschitz \norm{\mat-\altmat}
	\quad
	\text{for all $\mat,\altmat\in\compactset$}.
\end{equation} 
Moreover, we will also write $\lipschitzkl{\nbk}{\altnbk}$ for the user-specific Lipschitz constants of $\gradk{\nbk}\ratefunction$, understood in the following sense:
\begin{equation}
\label{lipschitzkl}
\dualnorm{
	\gradk{\nbk}\rate{\matk{\altnbk};\matk{-\altnbk}}
		- \gradk{\nbk}\rate{\altmatk{\altnbk};\matk{-\altnbk}}} 
	\leq \lipschitzkl{\nbk}{\altnbk}
		\frobeniusnorm{\matk{\altnbk}-\altmatk{\altnbk}},
\end{equation}
for all $\matk{\altnbk},\altmatk{\altnbk}\in\compactsetk{\altnbk}$,
$\matk{-\altnbk} \in \compactsetk{-\altnbk} \equiv \prod_{\altaltnbk\neq\altnbk} \compactsetk{\altaltnbk}$,
and all $\nbk,\altnbk=1,\dots,\kk$.
We also let $\meanlipschitzk{\nbk}= (1/\kk) \sum_{\altnbk=1}^{\kk} \lipschitzkl{\nbk}{\altnbk} $ denote the ``averaged'' Lipschitz constant of user $\nbk$, and we write $\meanlipschitz = (1/\kk) \sum_{\nbk=1}^{\kk} \meanlipschitzk{\nbk}$ for the overall ``mean'' Lipschitz constant.
For a detailed discussion of the nuclear norm $\norm\cdot$ and its dual $\dualnorm\cdot$, we refer the reader to \cref{appendix:DA}.

\section{\ac{MXL} without Gradient Information}
\label{section:MXL0}

\newcommand{\showMXLzerolastiterate}[2]{#1}

As we noted above, the existing implementations of \ac{MXL} invariably rely on the availability of gradient feedback\textemdash full \cite{MBM12}, noisy \cite{MM16}, or partial \cite{LA19}.
Our aim in this section is to show that this requirement can be obviated by means of a (possibly biased) gradient estimator, which only requires observations of a \emph{single} scalar\textemdash the users' achieved throughput. Our approach builds on the method of \acdef{SPSA}, a gradient estimation procedure which has been studied extensively in the context of large-scale, derivative-free optimization \cite{Spa97,FKM05}, and which we discuss in detail below.

\subsection{Gradient estimation: intuition and formal construction}
\label{section:intuition}

We start by providing some intution behind the \ac{SPSA} method.
For this, consider the scalar case and a simple differentiable function $\genericfunction:\Real\mapsto\Real$.
Then, by definition, the derivative of $\genericfunction$ at any point $x$ satisfies 
\begin{equation} 
\dgeneric{x}
	= \frac{\generic{x{\,+\,}\delta}{\,-\,}\generic{x{\,-\,}\delta}}{2\delta}
		+ \zero{\delta}.
\end{equation}
Therefore, if $\delta>0$ is small enough, an estimate for $\dgeneric{x}$ can be obtained from two queries of the value of $\genericfunction$ at the neighboring points $x-\delta$ and $x+\delta$ as follows:
\begin{equation}
\label{basicvtp}
\basicvtp{x}{\delta}
	= \frac{\generic{x+\delta}-\generic{x-\delta}}{2\delta}.
\end{equation}
Thus, if $\dgenericfunction$ is $\gradlipschitz$-Lipschitz continuous on the search domain, it is easy to see that the error of the estimator $\basicvtp{x}{\delta}$ is uniformly bounded as
$
\modulus{\basicvtp{x}{\delta}-\dgeneric{x}}
\leq \gradlipschitz\, \delta/2
$,
\ie the estimator \eqref{basicvtp} is accurate up to $\magnitude{\delta}$.

Taking this idea further, it is possible to estimate $\dgeneric{x}$ using only a \emph{single} function query at either of the test points $x-\delta$, or $x+\delta$, chosen uniformly at random.
To carry this out, let $z$ be a random variable taking the value $-1$ or $+1$ with equal probability $1/2$, and define the \emph{one-shot} \ac{SPSA} estimator
\begin{equation}
\label{basicvop}
\basicvop{x}{\delta}
	= \frac{\generic{x + \delta z}}{\delta} z.
\end{equation}
Then, a straightforward calculation gives $\expectation{\basicvop{x}{\delta}}=\basicvtp{x}{\delta}$, \ie $\basicvopfunction$ is a stochastic estimator of $\dgenericfunction$ with accuracy
\begin{equation}
\modulus{\expectation{\basicvop{x}{\delta}-\dgeneric{x}}}
	= \modulus{\basicvtp{x}{\delta}-\dgeneric{x}}
	\leq \gradlipschitz\,\delta / 2
	= \magnitude{\delta}.
\end{equation}

The \ac{SPSA} approach described above can be applied to our \ac{MIMO} setting as follows.
First, each user $\nbk$ draws, randomly and independently, a matrix $\trzmatk{\nbk}$ from the unit sphere%
\footnote{Note that the dimension of $\trmspherei{\dimensionk{\nbk}-1}$ as a manifold is $\dimensionk{\nbk}-1$, \ie one lower than that of the feasible region $\compactsetk{\nbk}$;
this is due to the unit norm constraint $\frobeniusnorm{\trzmatk{\nbk}}= 1$.}
\begin{equation} 
\label{trsphere}
\trmspherei{\dimensionk{\nbk}-1}
	= \{\trzmatk{\nbk}\in\trZmatk{\nbk} \setst \frobeniusnorm{\trzmatk{\nbk}}= 1 \}.
\end{equation} 
Then, translating \eqref{basicvop} to the distributed, Hermitian setting of \cref{section:problemdescription} yields, for all $\nbk=1,\dots,\kk$, the gradient estimator
\begin{equation}
\label{estgradratekfirst}
\basicestgradratek{\nbk}{\mat}{\trzmat}
	= \frac{\dimensionk{\nbk}}{\delta} \rate{ \mat +\delta \trzmat} \, \trzmatk{\nbk},
\end{equation}
where $\trzmat=\vect{\trzmatk{1},\dots,\trzmatk{\kk}}$ collects the random shifts of all users.

\begin{remark}
The factor $\dimensionk{\nbk}=\mmk{\nbk}^{2}-1$ in \eqref{estgradratekfirst} has a geometric interpretation as the ratio between the volumes of the sphere $\trmspherei{\dimensionk{\nbk}-1}$ (where $\trzmatk{\nbk}$ is drawn from) and 
the containing $\dimensionk{\nbk}$-dimensional ball $
\trmballi{\dimensionk{\nbk}} = \{\trzmatk{\nbk}\in\trZmatk{\nbk} \setst \frobeniusnorm{\trzmatk{\nbk}}\leq 1\}
$.
Its presence is due to Stokes' theorem, as detailed in \cref{lemma:bias}.
\end{remark}

A further complication that arises in our constrained setting is that the query point $\mat + \delta \trzmat$ in \eqref{estgradratekfirst} may lie outside the feasible set $\compactset$ if $\mat$ is too close to the boundary of $\compactset$.
To avoid such an occurrence, we introduce below a ``safety net'' mechanism which systematically carries back the pivot points $\matk{\nbk}$ towards the ``prox-center'' $\cmatk{\nbk} = \identityd{\mmk{\nbk}}/\mmk{\nbk}$
of $\compactsetk{\nbk}$ before applying the random shift $\trzmatk{\nbk}$.
Specifically, taking $\radiusk{\nbk}>0$ sufficiently small so that the Frobenius ball centered at $\cmatk{\nbk}$ lies entirely in $\compactsetk{\nbk}$,
we consider the homothetic adjustment
\begin{equation}
\label{linearperturbator}
\testmatk{\nbk}
	= \matk{\nbk}
		+ \frac{\delta}{\radiusk{\nbk}} (\cmatk{\nbk} - \matk{\nbk}) + \delta \trzmatk{\nbk}.
\end{equation}
By an elementary geometric argument, it suffices to take \begin{equation}
\label{radius}
\radiusk{\nbk}
	= 1/\sqrt{\mmk{\nbk}(\mmk{\nbk}-1)}.
\end{equation}

With this choice of $\radiusk{\nbk}$, it is easy to show that, for $\delta < \radiusk{\nbk}$, the adjusted query point $\testmatk{\nbk}$ lies in $\compactsetk{\nbk}$ for all $\nbk=1,\dotsc,\kk$.
On that account, we redefine the \ac{SPSA} estimator for \eqref{eq:trproblem} as
\begin{equation}
\label{estgradratek}
\tag{SPSA}
\basicestgradratek{\nbk}{\mat}{\trzmat}
	= \frac{\dimensionk{\nbk}}{\delta} \rate{ \testmat } \, \trzmatk{\nbk},
\end{equation}
where, in obvious notation, we set $\testmat = \vect{\testmatk{1},\dots,\testmatk{\kk}}$.
The distinguishing feature of \eqref{estgradratek} is that it is well-posed:
\emph{any} query point $\testmat$ is feasible under \eqref{estgradratek}.
Thus, extending the one-dimensional analysis in the beginning of this section, \cref{lemma:bias} claims that the accuracy of the estimator \eqref{estgradratek} is uniformly bounded as $\dualnorm{\expectation{\basicestgradratek{\nbk}{\mat}{\trzmat}-\projgrad\rate\mat}}= \magnitude{\delta}$.
In the rest of this section, we exploit this property to derive and analyze a first \emph{gradient-free} variant of \eqref{MXL}.

\subsection{A \acl{MXL0} scheme}

To integrate the gradient estimator \eqref{estgradratek} in the chassis of \eqref{MXL}, we will use a (non-increasing) query radius sequence $\deltan{\itr}$ satisfying the basic feasibility condition:
\setcounter{hypothesis}{-1} 
\begin{equation}
\hyplabel{deltacondition} 
\deltan{\itr} 
	< \min\nolimits_{\nbk}\radiusk{\nbk}
	= 1/\sqrt{\maxmm(\maxmm-1)}
	\quad
	\text{for all $\itr\geq 1$}.
\end{equation}
Then, under \eqref{MXL}, the task of user $k$ at the $\itr$-th stage of the algorithm will be given by the following sequence of events:
\begin{enumerate}
[labelwidth=!,labelindent=5pt]
\item
Draw a random direction $\trzmatkn{\nbk}{\itr}\in\trmspherei{\dimensionk{\nbk}-1}$.
\item
Transmit with the covariance matrix $\testmatkn{\nbk}{\itr}$ given by \eqref{linearperturbator}.
\item
Get the achieved throughput $\ratevalue_{\itr} = \rate{\testmatn{\itr}}$.
\item
Construct the gradient estimate $\estgradkn{\nbk}{\itr}$ given by \eqref{estgradratek}.
\item
Update $\dualmatkn{\nbk}{\itr}$ and $\matkn{\nbk}{\itr}$ in accordance with \eqref{MXL}. 
\end{enumerate}
The resulting algorithm will be referred to as \acdef{MXL0};
for a pseudocode implementation, see \cref{alg:trMXL} above.
\medskip


\begin{algorithmenv}[t]
\caption{\Acf{MXL0}\label{alg:trMXL}}
\removelatexerror
\RestyleAlgo{tworuled}%
\centering
\vspace{-2.5pt}
\begin{algorithm}[H]
 \Parameters{ 
$\stepsizen{\itr}$, $\deltan{\itr}$}
 \KwInit{$\itr\leftarrow 1$, $\dualmat\leftarrow 0$;\\
 \hspace{8.5ex}
 $\forall\nbk\colon \matk{\nbk} \leftarrow (\powerk{\nbk}/\mmk{\nbk})\, \identityk{\nbk}$}
\mainnl \Repeat{}{
\mainnl \inlineAteachDIP{$\nbk\in\{1,\dots,\kk\}$}{$\DAk{\nbk}{\stepsizen{\itr},\deltan{\itr}}$ } \;
\mainnl $\itr \leftarrow \itr+1$ \;
 }
\smallskip
\rule{0.96\linewidth}{.4pt} \\
\setcounter{AlgoLine}{0}
\Fctn{$\DAk{\nbk}{\stepsize,\delta}$}{
\routinenl \Sampleuniformly{$\trzmatk{\nbk}$}{$\trmspherei{\dimensionk{\nbk}-1}$} \;
\routinenl \Play $ \testmatk{\nbk} \leftarrow \matk{\nbk} + \frac{\delta}{\radiusk{\nbk}} (\cmatk{\nbk} - \matk{\nbk}) + \delta \trzmatk{\nbk} $ \;
\routinenl \Get $ \ratevalue \leftarrow \rate{ \testmat}$ \;
\routinenl \Set $\estgradk{\nbk} \leftarrow \frac{\dimensionk{\nbk}}{\delta} \ratevalue \, \zmatk{\nbk} $ \;
\routinenl \Set $\dualmatk{\nbk} \leftarrow \dualmatk{\nbk} + \stepsize \estgradk{\nbk}$ \;
\routinenl \Set $\matk{\nbk} \leftarrow \normalize{\powerk{\nbk}\,}{} \matsolk{\nbk}{\dualmatk{\nbk}}$ \;
}
\end{algorithm}%
\end{algorithmenv}


Our first convergence result for \ac{MXL0} is as follows:

\newcommand{\reflasttrMXLi}{\textup{(\ref{lasttrMXL}\firstsymbol)}}%
\newcommand{\reflasttrMXLii}{\textup{(\ref{lasttrMXL}\secondsymbol)}}%
\newcommand{\reflasttrMXLiii}{\textup{(\ref{lasttrMXL}\thirdsymbol)}}%
\newcommand{\firstsymbol}{a}%
\newcommand{\secondsymbol}{b}%
\newcommand{\thirdsymbol}{c}%
\newcommand{\firstcond}{\textup{(\firstsymbol)}}%
\newcommand{\secondcond}{\textup{(\secondsymbol)}}%
\newcommand{\thirdcond}{\textup{(\thirdsymbol)}}%
\begin{theorem}[Convergence of \ac{MXL0}]
\label{thm:lasttrMXL}
Suppose that \ac{MXL0} \textpar{\cref{alg:trMXL}} is run with non-increasing step-size and query-radius policies satisfying \eqref{deltacondition} and 
\begin{equation} \label{lasttrMXL}
\textstyle
\firstcond \;
\sum_{\itr} \stepsizen{\itr}
	= \infty,
	\;\;
\;\secondcond \;
\sum_{\itr} \stepsizen{\itr} \deltan{\itr}
	< \infty,
	\;\;
\;\thirdcond \;
\sum_{\itr} \stepsizen{\itr}^{2}/\deltan{\itr}^{2} <\infty.
\end{equation}
Then, with probability $1$, the sequence of the users' transmit covariance matrices $\testmatn{\itr}$ converges to the solution set of \eqref{eq:trproblem}.
\end{theorem}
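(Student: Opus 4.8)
The plan is to read \ac{MXL0} as a stochastic \emph{lazy} mirror-descent (dual-averaging) recursion in the aggregate dual variable $\dualmatn{\itr}$, driven by the biased estimator \eqref{estgradratek}, and to monitor it through the Fenchel coupling $\fenchel{\optmat}{\dualmat}$ between a solution $\optmat$ of \eqref{eq:trproblem} and $\dualmat$. Writing $\maxrate \defeq \max_{\mat\in\compactset}\rate{\mat}$, the entropic regularizer behind the exponential map \eqref{trmatsolk} is strongly convex relative to the nuclear norm, so $\fenchel{\optmat}{\cdot}$ can serve as an energy function: I would first derive an almost-supermartingale recursion for it and then apply the Robbins--Siegmund theorem to obtain, simultaneously, convergence of the energy and summability of a nonnegative optimality-gap series. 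The three conditions play complementary roles: \reflasttrMXLii{} absorbs the $\magnitude{\deltan{\itr}}$ \emph{bias} of the estimator, \reflasttrMXLiii{} absorbs its $\magnitude{1/\deltan{\itr}^{2}}$ \emph{second moment}, and \reflasttrMXLi{} supplies the persistent excitation that forces the gap to zero.

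Concretely, I would split the estimator around the running iterate $\matn{\itr}=\matsol{\dualmatn{\itr}}$ as $\estgradn{\itr} = \projgrad\rate{\matn{\itr}} + \biasn{\itr} + \devmatn{\itr}$, where $\biasn{\itr}=\expectation{\estgradn{\itr}\cond\filtrationn{\itr}}-\projgrad\rate{\matn{\itr}}$ is the conditional bias and $\devmatn{\itr}=\estgradn{\itr}-\expectation{\estgradn{\itr}\cond\filtrationn{\itr}}$ a martingale difference. \Cref{lemma:bias} supplies $\dualnorm{\biasn{\itr}}=\magnitude{\deltan{\itr}}$, while boundedness of $\rate{\cdot}$ on the compact set $\compactset$ together with $\frobeniusnorm{\trzmatk{\nbk}}=1$ and the prefactor $\dimensionk{\nbk}/\deltan{\itr}$ in \eqref{estgradratek} gives $\expectation{\dualnorm{\estgradn{\itr}}^{2}\cond\filtrationn{\itr}}=\magnitude{1/\deltan{\itr}^{2}}$. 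Inserting this split into the standard dual-averaging energy bound
\begin{equation*}
\fenchel{\optmat}{\dualmatn{\itr+1}}
	\leq \fenchel{\optmat}{\dualmatn{\itr}}
	+ \stepsizen{\itr}\inner{\estgradn{\itr}}{\matn{\itr}-\optmat}
	+ \magnitude{\stepsizen{\itr}^{2}\dualnorm{\estgradn{\itr}}^{2}},
\end{equation*}
and using concavity of $\rate{\cdot}$ and optimality of $\optmat$ to get $\inner{\projgrad\rate{\matn{\itr}}}{\matn{\itr}-\optmat}=\inner{\grad\rate{\matn{\itr}}}{\matn{\itr}-\optmat}\leq-(\maxrate-\rate{\matn{\itr}})\leq 0$, I would take $\expectation{\cdot\cond\filtrationn{\itr}}$ (which annihilates $\devmatn{\itr}$) to reach
\begin{equation*}
\expectation{\fenchel{\optmat}{\dualmatn{\itr+1}}\cond\filtrationn{\itr}}
	\leq \fenchel{\optmat}{\dualmatn{\itr}}
	- \stepsizen{\itr}\bracks{\maxrate-\rate{\matn{\itr}}}
	+ \magnitude{\stepsizen{\itr}\deltan{\itr}}
	+ \magnitude{\stepsizen{\itr}^{2}/\deltan{\itr}^{2}}.
\end{equation*}

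By \reflasttrMXLii{} and \reflasttrMXLiii{} the two remainders are summable, so Robbins--Siegmund yields, with probability $1$, that $\fenchel{\optmat}{\dualmatn{\itr}}$ converges and that $\sum_{\itr}\stepsizen{\itr}\bracks{\maxrate-\rate{\matn{\itr}}}<\infty$. Since the summands are nonnegative, combining this with the divergence \reflasttrMXLi{} forces $\liminf_{\itr}\bracks{\maxrate-\rate{\matn{\itr}}}=0$; by compactness of $\compactset$ and continuity of $\rate{\cdot}$, a subsequence $\matn{\itr_{j}}$ then converges to some maximizer $\matk{\infty}\in\limitpoints$. To promote this to convergence of the whole trajectory I would invoke the reciprocity of the Fenchel coupling established in \cref{appendix:DA}: as the von Neumann entropy stays finite and continuous up to the boundary of $\compactset$, $\matsol{\dualmatn{\itr_{j}}}=\matn{\itr_{j}}\to\matk{\infty}$ implies $\fenchel{\matk{\infty}}{\dualmatn{\itr_{j}}}\to 0$, so the already-convergent energy $\fenchel{\matk{\infty}}{\dualmatn{\itr}}$ must have limit $0$, and the strong-convexity estimate $\norm{\matn{\itr}-\matk{\infty}}^{2}=\magnitude{\fenchel{\matk{\infty}}{\dualmatn{\itr}}}$ then gives $\matn{\itr}\to\matk{\infty}$. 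Finally, \reflasttrMXLi{}--\reflasttrMXLii{} force the non-increasing radii to obey $\deltan{\itr}\to 0$ (otherwise $\sum_{\itr}\stepsizen{\itr}\deltan{\itr}$ would dominate $\delta_{\infty}\sum_{\itr}\stepsizen{\itr}=\infty$), so the safety-net correction $\testmatkn{\nbk}{\itr}-\matkn{\nbk}{\itr}=\tfrac{\deltan{\itr}}{\radiusk{\nbk}}(\cmatk{\nbk}-\matkn{\nbk}{\itr})+\deltan{\itr}\trzmatkn{\nbk}{\itr}$ vanishes and $\testmatn{\itr}\to\matk{\infty}\in\limitpoints$, as claimed.

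The principal obstacle is the structural bias--variance tension: suppressing the bias demands $\deltan{\itr}\to 0$, yet this inflates the estimator's second moment like $1/\deltan{\itr}^{2}$, and the argument only closes because \reflasttrMXLii{} and \reflasttrMXLiii{} pit the two effects against one another — it is precisely here that the sharp $\magnitude{\deltan{\itr}}$ bias bound of \cref{lemma:bias} is indispensable. The second delicate point, which I expect to require the most care, is that Robbins--Siegmund delivers, for each \emph{fixed} solution, a probability-one event on which $\fenchel{\optmat}{\dualmatn{\itr}}$ converges, whereas the limit point $\matk{\infty}$ is random and may lie on the boundary of $\compactset$. I would reconcile this by running the energy argument over a countable dense subset of the compact set $\limitpoints$ (noting that the gap term and the uniform bias bound are independent of the chosen solution) and then transferring the conclusion to $\matk{\infty}$ through the coupling's reciprocity, which is what makes the passage from subsequential to global convergence legitimate.
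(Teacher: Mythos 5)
Your proposal is correct and follows essentially the same route as the paper: the same Fenchel-coupling energy $\fenchel{\optmat}{\dualmatn{\itr}}$, the same decomposition of the estimator into gradient plus conditional bias plus martingale difference, the same use of \cref{lemma:bias} to make the bias $\magnitude{\deltan{\itr}}$ and of the $\magnitude{1/\deltan{\itr}^{2}}$ second-moment bound, and the same division of labor among conditions \firstcond--\thirdcond. The only divergence is in the stochastic-approximation bookkeeping: you apply Robbins--Siegmund once to the descent inequality (the paper's \cref{lemma:globaldescentargument}), which simultaneously delivers a.s.\ convergence of the energy and summability of $\sum_{\itr}\stepsizen{\itr}\,(\optrate-\rate{\matn{\itr}})$, so that \firstcond{} forces $\liminf_{\itr}(\optrate-\rate{\matn{\itr}})=0$; the paper instead argues ad absurdum on the set of limit points of $\{\matn{\itr}\}$ (driving the telescoped Lyapunov bound to $-\infty$ via a martingale convergence theorem under \thirdcond) to obtain $\liminf\lyapunovfunction=0$, and then separately invokes Doob's supermartingale convergence theorem for the limit to exist. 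These two packagings are interchangeable, and yours is marginally more economical. On the two points the paper passes over in silence\textemdash that the limiting solution is random while the supermartingale argument fixes a deterministic $\optmat$, and that subsequential primal convergence must be converted into $\fenchel{\matk{\infty}}{\dualmatn{\itr_{j}}}\to 0$ via the reciprocity of the entropic coupling at the boundary of $\compactset$\textemdash you are explicit (countable dense subset of the solution set, entropic reciprocity), which if anything makes your argument more careful than the published one. Your closing step ($\deltan{\itr}\downarrow 0$ from \firstcond{} and \secondcond{}, hence $\testmatn{\itr}$ inherits the limit of $\matn{\itr}$) coincides with the paper's.
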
%

\Cref{thm:lasttrMXL} provides a strong asymptotic convergence result, but it does not give any indication of the algorithm's convergence speed.
To fill this gap, our next result focuses on the algorithm's value convergence rate relative to the maximum achievable transmission rate $\optrate = \max\ratefunction$ of \eqref{eq:trproblem}.

\begin{theorem}[Convergence rate of \ac{MXL0}]
\label{thm:trMXLiinew}   
Suppose that \ac{MXL0} \textpar{\cref{alg:trMXL}} is run for $T$ iterations with constant step-size and query radius parameters of the form $\stepsizen{\itr} = \stepsize/T^{3/4}$ and $\deltan{\itr} = \delta/T^{1/4}$, $\delta < 1/\sqrt{\maxmm(\maxmm-1)}$.
Then,
the algorithm's ergodic average $\meanmatn{T} = (1/T) \sum_{\itr=1}^{T} \matn{\itr}$ enjoys the bounds:
\renewcommand{\dimensionk}[1]{\maxdimension}
\renewcommand{\meansqdimension}{\maxdimension}
\renewcommand{\meandimension}{\maxdimension}
\begin{enumerate}[(a),ref=\alph*,wide,labelwidth=!,labelindent=5pt]
\addtolength{\itemsep}{\smallskipamount}

\item
\label{thm:trMXLiifirst}
In expectation, 
\begin{equation}
\label{trsimplerateboundmeanexpectation}
\expectation{\optrate - \rate{ \meanmatn{T}}}
	\leq \frac{\Acoefficient{\stepsize}{\delta}}{T^{1/4}}
	= \Magnitude{T^{-1/4}},
\end{equation}
where 
\(
\Acoefficient{\stepsize}{\delta} 
	=  (\kk/\stepsize) \log\maxmm
		+ 4 \kk^{2} \meanlipschitz \delta
		+ 2^{1-2\kk} (\optrate\meansqdimension)^{2} \kk \stepsize / \strongly \delta^{2}.
\)
\item
\label{thm:trMXLiisecond} 
In probability, for any small enough tolerance $\zmargin>0$,
\begin{equation}
\label{trsimplerateboundmeaninterval} 
\Proba{\optrate -\rate{\meanmatn{T}} \geq  \frac{\Acoefficient{\stepsize}{\delta}}{T^{1/4}} + \zmargin}
	\leq \exp{\left(-\frac{2^{2\kk-5} \delta^{2} \zmargin^{2} T^{1/2}}{(\optrate\kk \meandimension)^{2}} \right)}.
\end{equation}
\end{enumerate}
\end{theorem}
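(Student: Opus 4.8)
The plan is to reuse the dual-averaging energy argument that underlies \eqref{MXL}, but run it with the biased, high-variance \eqref{estgradratek} estimator in place of a true gradient oracle, and then split the resulting error into a \emph{deterministic} part that reproduces $\Acoefficient{\stepsize}{\delta}$ and a \emph{martingale} part controlled by concentration for part~\ref{thm:trMXLiisecond}. Fixing a maximizer $p\in\compactset$ of \eqref{eq:trproblem}, the first step is the Fenchel-coupling (Lyapunov) inequality for the dual update $\dualmatn{\itr+1} = \dualmatn{\itr} + \stepsizen{\itr}\estgradn{\itr}$,
\begin{equation}
\label{eq:energyplan}
\stepsizen{\itr}\Inner{\estgradn{\itr}}{p - \matn{\itr}}
  \leq \fenchel{p}{\dualmatn{\itr}} - \fenchel{p}{\dualmatn{\itr+1}}
    + \frac{\stepsizen{\itr}^{2}}{2\strongly}\dualnorm{\estgradn{\itr}}^{2},
\end{equation}
which is valid because the matrix (von Neumann) entropy regularizer underlying the mirror map is $\strongly$-strongly convex relative to the nuclear norm\textemdash exactly the step already used in the \cref{appendix:DA} analysis of \eqref{MXL}.

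The second step is a bias/noise decomposition $\estgradn{\itr} = \projgrad\rate{\matn{\itr}} + \biasn{\itr} + \deviationn{\itr}$, where $\biasn{\itr} = \expectation{\estgradn{\itr}\mid\filtrationn{\itr}} - \projgrad\rate{\matn{\itr}}$ is the conditional bias and $\deviationn{\itr} = \estgradn{\itr} - \expectation{\estgradn{\itr}\mid\filtrationn{\itr}}$ is a martingale difference. Since $p - \matn{\itr}$ lies in the trace-zero tangent space $\Dualmat$, concavity of $\ratefunction$ gives $\optrate - \rate{\matn{\itr}} \leq \Inner{\projgrad\rate{\matn{\itr}}}{p - \matn{\itr}}$; substituting into \eqref{eq:energyplan}, telescoping over $\itr=1,\dots,T$, and using $\fenchel{p}{\dualmatn{T+1}}\geq0$ together with $\fenchel{p}{\dualmatn{1}}\leq\kk\log\maxmm$ yields
\begin{equation}
\label{eq:telescope}
\sum\nolimits_{\itr}\stepsizen{\itr}\parens*{\optrate - \rate{\matn{\itr}}}
  \leq \kk\log\maxmm
    + \sum\nolimits_{\itr}\frac{\stepsizen{\itr}^{2}}{2\strongly}\dualnorm{\estgradn{\itr}}^{2}
    + \sum\nolimits_{\itr}\stepsizen{\itr}\dualnorm{\biasn{\itr}}\norm{p-\matn{\itr}}
    - \sum\nolimits_{\itr}\stepsizen{\itr}\Inner{\deviationn{\itr}}{p-\matn{\itr}}.
\end{equation}
The structural inputs are then \cref{lemma:bias}, which gives $\dualnorm{\biasn{\itr}}=\magnitude{\deltan{\itr}}$ with the constant carrying the $\meanlipschitz$ and $\kk$ factors; the almost-sure bound $\dualnorm{\estgradn{\itr}}=\magnitude{\maxdimension\optrate/\deltan{\itr}}$, valid because $\ratefunction$ is bounded on $\compactset$ and each $\trzmatk{\nbk}$ has unit Frobenius norm; and the nuclear-norm diameter $\norm{p-\matn{\itr}}=\magnitude{\kk}$ of the feasible product. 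Dividing \eqref{eq:telescope} by $\sum\nolimits_{\itr}\stepsizen{\itr}=\stepsize T^{1/4}$, invoking Jensen to pass from $\tfrac1T\sum_{\itr}\rate{\matn{\itr}}$ to $\rate{\meanmatn{T}}$, and inserting $\stepsizen{\itr}=\stepsize/T^{3/4}$, $\deltan{\itr}=\delta/T^{1/4}$ collapses the three deterministic terms to $\Acoefficient{\stepsize}{\delta}/T^{1/4}$ (the exponents $3/4$ and $1/4$ are precisely the choice balancing the $1/(\stepsize T)$, bias $\sim\delta$, and variance $\sim\stepsize/\delta^{2}$ contributions). Taking expectations annihilates the martingale term, since $\expectation{\Inner{\deviationn{\itr}}{p-\matn{\itr}}\mid\filtrationn{\itr}}=0$, which proves part~\ref{thm:trMXLiifirst}.

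For the high-probability statement, the key observation is that after the a.s.\ bound on $\dualnorm{\estgradn{\itr}}$ the \emph{only} stochastic quantity left is the martingale sum $\sum\nolimits_{\itr}\stepsizen{\itr}\Inner{\deviationn{\itr}}{p-\matn{\itr}}$; the variance term is already controlled pathwise. Its increments are bounded by $\stepsizen{\itr}\dualnorm{\deviationn{\itr}}\norm{p-\matn{\itr}}$, so a Hoeffding--Azuma bound gives $\Proba{-\tfrac{1}{T}\sum\nolimits_{\itr}\Inner{\deviationn{\itr}}{p-\matn{\itr}}\geq\zmargin} \leq \exp\parens*{-\tfrac{\zmargin^{2} T}{2c^{2}}}$ with $c=\magnitude{\maxdimension\optrate\kk T^{1/4}/\delta}$; since $c^{2}\propto T^{1/2}$, the exponent scales as $\zmargin^{2}\delta^{2} T^{1/2}/(\optrate\kk\maxdimension)^{2}$, matching \eqref{trsimplerateboundmeaninterval}. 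Because the deterministic part of \eqref{eq:telescope} never exceeds $\Acoefficient{\stepsize}{\delta}/T^{1/4}$ pathwise, the event $\{\optrate - \rate{\meanmatn{T}} \geq \Acoefficient{\stepsize}{\delta}/T^{1/4} + \zmargin\}$ is contained in $\{-\tfrac{1}{T}\sum\nolimits_{\itr}\Inner{\deviationn{\itr}}{p-\matn{\itr}}\geq\zmargin\}$, and the concentration bound finishes part~\ref{thm:trMXLiisecond}. I expect the main obstacle to be bookkeeping the exact constants\textemdash in particular the powers of two and the $\kk$-dependence in $\Acoefficient{\stepsize}{\delta}$ and in the exponent\textemdash which hinge on the precise definition of the global nuclear norm and its interaction with the $\strongly$-strong-convexity modulus across the $\kk$ product blocks; the probabilistic skeleton (dual averaging, concavity, telescoping, Azuma) is otherwise routine.
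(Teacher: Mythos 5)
Your proposal is correct and follows essentially the same route as the paper: the paper proves this theorem by specializing its general descent lemma (\cref{lemma:globaldescentargument}), which is built from exactly the ingredients you list\textemdash the Fenchel-coupling inequality for the dual update, the bias/martingale decomposition of the \ac{SPSA} estimator with the bias controlled by \cref{lemma:bias} and the norm controlled by the pathwise $\magnitude{\maxdimension\optrate/\delta}$ bound, concavity, telescoping, the initialization bound $\kk\log\maxmm$, and Azuma's inequality on the remaining martingale sum for part~(b). The only content you defer (the exact powers of two and $\kk$-factors in $\Acoefficientfunction$ and in the exponent) is indeed pure bookkeeping in the paper as well.
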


In words, \cref{thm:trMXLiinew} shows that \cref{alg:trMXL} converges at a rate of $\magnitude{T^{-1/4}}$ on average, and the probability of deviating by more than $\zmargin$ from this rate is exponentially small in $\zmargin$ and $T$.
Compared to \eqref{MXL}, this indicates an increase in the number of iterations required to achieve $\eps$-optimality from $\magnitude{1/\eps^{2}}$ to $\magnitude{1/\eps^{4}}$.
As we illustrate in detail in \cref{section:experiments}, this performance drop is quite significant and makes \ac{MXL0} prohibitively slow in practice.
The rest of our paper is devoted precisely to bridging this vital performance gap.

\renewcommand{\strongly}{}
\newcommand{\sqrtstrongly}{}
\newcommand{\sqrtsqrtstrongly}{}
\newcommand{\squaredstrongly}{}
\newcommand{\forcesqrtstrongly}{1}
\newcommand{\forcesquaredstrongly}{1}
\newcommand{\forcestrongly}{1}
\newcommand{\overstrongly}[1]{#1}
\newcommand{\oneoverstrongly}{}

\section{Accelerated \ac{MXL} without Gradient Information}
\label{section:MXL0plus}

Going back to the heuristic discussion of \ac{MXL0} in the previous section, we see that the one-shot estimator $\basicvopfunction$ is bounded as $\abs{\basicvopfunction} \leq \sup\modulus{\genericfunction}/\delta = \magnitude{1/\delta}$.
This unveils a significant trade-off between the $\magnitude{\delta}$ bias of the estimator and its $\magnitude{1/\delta}$ deviation from the true derivative:
the more accurate $\basicvopfunction$ becomes (smaller bias), the less precise it will be (higher variance).
In the context of iterative optimization algorithms, this \emph{bias\textendash variance dilemma} induces strict restrictions on the design of the query-radius and step-size policies, with deleterious effects on the algorithm's convergence rate (\cf \cref{section:MXL0,section:experiments}).
Motivated by this drawback of the \ac{SPSA} approach,
we proceed in the sequel to design a gradient estimator which requires a single function query per iteration, whilst at the same time enjoying a uniform bound on the norms of the estimates.

\subsection{\ac{SPSA} with callbacks}
\label{section:SPSAplus} 

To proceed with our construction, let $z$ take the value $-1$ or~$+1$ with equal probability, and consider the estimator
\begin{equation}
\label{basicvopO}
\basicvopO{x}{\delta}
 	=\frac{\generic{x + \delta z}-\offset}{\delta} z.
\end{equation}
The offset value $\offset$ is decided \emph{a priori}, independently of the random variable $z$, so that $\expectation{\offset z} = \offset \expectation{z} = 0$.
In turn, this implies that $\expectation{\basicvopO{x}{\delta}} = \basicvtp{x}{\delta}$, and hence:
\begin{equation}
\modulus{\expectation{\basicvopO{x}{\delta} - \dgeneric{x}}}
	\leq \gradlipschitz \delta /2
\end{equation}
\ie the accuracy (bias) of $\basicvopO{x}{}$ is again $\magnitude{\delta}$.

The novelty of \eqref{basicvopO} is as follows:
if we take $\offset = \generic{x}$, then $\modulus{\basicvopO{x}{\delta}} = (1/\delta)\, \modulus{\generic{x + \delta z}-\generic{x}} \leq \Lipschitz $ where~$\Lipschitz$ denotes the Lipschitz constant of~$\genericfunction$,
so the choice $\offset = \generic{x}$ would be ideally suited for our purposes;
however, taking $\offset = \generic{x}$ would also involve an additional function query.
To circumvent this, we will instead approximate $\generic{x}$ with the closest available surrogate, namely the function value observed at the previous iteration of the process.

To make this precise in our \ac{MIMO} context, we will consider the \emph{enhanced} \ac{SPSA} estimator
\acused{SPSA+}
\begin{equation}
\label{estgradratekMO}
\tag{\SPSAplus}
\estgradkn{\nbk}{\itr}
	= \frac{\dimensionk{\nbk}}{\deltan{\itr}} \big[ \rate{ \testmatn{\itr}} - \rate{ \testmatn{\itr-1}} \big] \, \zmatkn{\nbk}{\itr},
\end{equation}
where:
\begin{enumerate}[wide,labelwidth=!,labelindent=5pt]
\item
$\deltan{\itr}$ is the given query radius at time $\itr$.
\item
$\zmatkn{\nbk}{\itr}$ is drawn randomly from the sphere $\trmspherei{\dimensionk{\nbk}-1}$
\item
$\testmatn{\itr}$ 
is the transmit covariance matrix defined along \eqref{linearperturbator}.
\end{enumerate}


\begin{algorithmenv}[t]
\caption{\Acf{MXL+}\label{alg:trMXLMO}}
\removelatexerror
\RestyleAlgo{tworuled}%
\centering
\vspace{-2.5pt}
\begin{algorithm}[H]
\Parameters{ 
$\stepsizen{\itr}$, $\deltan{\itr}$}
 \KwInit{$\itr\leftarrow 1$, $\dualmat\leftarrow 0$;\\
 \hfill
 $\forall\nbk\colon \matk{\nbk} \leftarrow (\powerk{\nbk}/\mmk{\nbk})\, \identityk{\nbk}$, $\offsetk{\nbk}\leftarrow\rate{\mat}$}
\mainnl \Repeat{}{
\mainnl \inlineAteachDIP{$\nbk\in\{1,\dots,\kk\}$}{$\DAplusk{\nbk}{\stepsizen{\itr},\deltan{\itr}}$ }\;
\mainnl $\itr\leftarrow \itr+1$ \;
 }
\smallskip
\rule{0.96\linewidth}{.4pt} \\
\setcounter{AlgoLine}{0}%
\Fctn{$\DAplusk{\nbk}{\stepsize,\delta}$}{
\routinenl \Sampleuniformly{$\trzmatk{\nbk}$}{$\trmspherei{\dimensionk{\nbk}-1}$} \;
\routinenl \Play $ \testmatk{\nbk} \leftarrow \matk{\nbk} + \frac{\delta}{\radiusk{\nbk}} (\cmatk{\nbk} - \matk{\nbk}) + \delta \trzmatk{\nbk} $ \;
\routinenl \Get $ \ratevalue \leftarrow \rate{ \testmat}$ \;
\routinenl \Set $\estgradk{\nbk} \leftarrow \frac{\dimensionk{\nbk}}{\delta} (\ratevalue-\offsetk{\nbk}) \, \zmatk{\nbk} $ \;
\routinenl \Set $\offsetk{\nbk} \leftarrow \ratevalue $ \;
\routinenl \Set $\dualmatk{\nbk} \leftarrow \dualmatk{\nbk} + \stepsize \estgradk{\nbk}$ \;
\routinenl \Set $\matk{\nbk} \leftarrow \normalize{\powerk{\nbk}\,}{} \matsolk{\nbk}{\dualmatk{\nbk}}$ \;
}
\end{algorithm}%
\end{algorithmenv}
Then, integrating \eqref{estgradratekMO} in the chassis of \ac{MXL}, we obtain a similarly enhanced version of \ac{MXL0}, which we call \acdef{MXL+}.
For concreteness, we present a pseudocode implementation of the resulting method in \cref{alg:trMXLMO}.


In terms of parameter values, \ac{MXL+} supports a broad class of policies satisfying the so-called Robbins\textendash Monro conditions:
\noeqref{nsss}%
\begin{equation} 
\hyplabel{nsss}
\begin{array}{lcr}\nocolsep
\textup{(a)}\ \, 
\sum\nolimits_{\itr} \stepsizen{\itr} = \infty
,
&\quad&
\textup{(b)}\ \, 
\sum\nolimits_{\itr} \stepsizen{\itr}^2 <\infty
.
\end{array}
\end{equation}
In addition, \ac{MXL+} also requires the following precautions regarding the allowable step-size and query-radius sequences:
\begin{align}
\hyplabel{trconditionbias}
\textstyle
\sum\nolimits_{\itr} \stepsizen{\itr} \deltan{\itr}
	&< \infty,
	\\[\smallskipamount]
\hyplabel{generalfirstcondition}
\sup\nolimits_{\itr} \stepsizen{\itr}/\deltan{\itr+1}
	&< 2\strongly / (\dimension\Lipschitz\kk),
	\\[\smallskipamount]
\hyplabel{generalsecondcondition} 
\sup\nolimits_{\itr} \deltan{\itr}/\deltan{\itr+1}
	&< \infty,
\end{align}
Of the above conditions, \eqref{generalfirstcondition}\textendash \eqref{generalsecondcondition} guarantee the uniform boundedness of the gradient estimator, while \eqref{trconditionbias} is an additional condition needed for convergence of the algorithm.

In practice, these conditions are easy to verify when $\stepsizen{\itr} = \stepsize / \itr^{\alpha}$ and $\deltan{\itr} = \delta/\itr^{\beta}$ for some $\alpha,\beta > 0$.
In this case, the conditions \eqref{deltacondition}\textendash\eqref{generalsecondcondition} reduce to:
\begin{subequations}
\refstepcounter{hypothesis}
\label{fullhypothesis}
\begin{align}
\label{firstcondition}
\tag{Ha}
\begin{array}{l}
\nocolsep
(\dimension\Lipschitz \kk/2) \, \stepsize
	< \strongly \, \delta
	< \strongly 1/\sqrt{\maxmm(\maxmm-1)},
\end{array}
\\
\label{secondcondition}
\tag{Hb}
\begin{array}{l}
\nocolsep
0 \leq \beta \leq \alpha \leq 1
\quad\text{and}\quad
\alpha + \beta >1,
\end{array}
\end{align}
\end{subequations}

With all this in hand, we are finally in a position to state our main convergence results for the \ac{MXL+} algorithm.
We begin by establishing the algorithm's almost sure convergence:
 
\begin{theorem}
[Convergence of \ac{MXL+}]
\label{thm:lasttrMXLMO}
Suppose that \ac{MXL+} \textpar{\cref{alg:trMXLMO}} is run with step-size and query-radius policies satisfying \eqref{deltacondition}\textendash\eqref{generalsecondcondition}.
Then, with probability $1$, the sequence of the users' transmit covariance matrices $\testmatn{\itr}$ converges to the solution set of \eqref{eq:trproblem}.
\end{theorem}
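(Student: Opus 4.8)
The plan is to analyze \ac{MXL+} as a stochastic approximation scheme driven by the enhanced estimator \eqref{estgradratekMO}, and to run the Fenchel-coupling (energy) argument that underlies the convergence of dual averaging, following the template set up in \cref{appendix:DA}. The first step is to decompose each estimate into a true-gradient part, a bias part, and a martingale-noise part. Writing $\filtrationn{\itr}$ for the history up to the start of stage $\itr$, I note that the offset $\rate{\testmatn{\itr-1}}$ is $\filtrationn{\itr}$-measurable and that $\condbigexpectation{\zmatkn{\nbk}{\itr}}{\filtrationn{\itr}}=0$; hence the callback term contributes nothing to the conditional mean, and $\condbigexpectation{\estgradkn{\nbk}{\itr}}{\filtrationn{\itr}}$ coincides with that of the plain one-shot estimator \eqref{estgradratek}. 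Consequently \cref{lemma:bias} applies verbatim, and the conditional bias relative to $\projgrad\ratefunction$ is $\magnitude{\deltan{\itr}}$, exactly as for \ac{MXL0}.

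The hard part is the second step: establishing a \emph{uniform almost-sure bound} $\dualnorm{\estgradn{\itr}}\le B$, which is the entire reason for introducing the callback. Since $\testmatn{\itr}$ and $\testmatn{\itr-1}$ are both feasible, Lipschitz continuity of $\ratefunction$ gives $\modulus{\rate{\testmatn{\itr}}-\rate{\testmatn{\itr-1}}}\le\Lipschitz\norm{\testmatn{\itr}-\testmatn{\itr-1}}$, so it suffices to show $\norm{\testmatn{\itr}-\testmatn{\itr-1}}=\magnitude{\deltan{\itr}}$ in order to cancel the $1/\deltan{\itr}$ prefactor. I would split the query-point increment into (i) the pivot change $\matn{\itr}-\matn{\itr-1}$, (ii) the change of the homothety term $(\deltan{\itr}/\radiusk{\nbk})(\cmatk{\nbk}-\matk{\nbk})$, and (iii) the perturbation change $\deltan{\itr}\zmatn{\itr}-\deltan{\itr-1}\zmatn{\itr-1}$. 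Terms (ii) and (iii) are $\magnitude{\deltan{\itr}}$ once $\sup_\itr\deltan{\itr}/\deltan{\itr+1}<\infty$ is invoked \eqref{generalsecondcondition}. Term (i) is the delicate one: by Lipschitz continuity of the exponential map $\matsolfunction$ (with modulus governed by the strong-convexity constant $\strongly$ of the entropic regularizer, as established in \cref{appendix:DA}), $\norm{\matn{\itr}-\matn{\itr-1}}\le(1/\strongly)\,\stepsizen{\itr-1}\dualnorm{\estgradn{\itr-1}}$, which is \emph{self-referential} in the quantity being bounded. I would close this with an induction on $\itr$: assuming $\dualnorm{\estgradn{\itr-1}}\le B$, the three contributions assemble into a bound of the form $\dualnorm{\estgradn{\itr}}\le(\dimension\Lipschitz\kk/\strongly)(\stepsizen{\itr-1}/\deltan{\itr})\,B+\magnitude{1}$, and condition \eqref{generalfirstcondition} is precisely calibrated so that the coefficient of $B$ stays strictly below unity, letting the induction propagate a finite uniform bound.

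With boundedness in hand, the third step is the energy inequality. Fixing any solution $\optmat$ of \eqref{eq:trproblem} and writing $\fenchel{\optmat}{\dualmatn{\itr}}$ for the associated Fenchel coupling, the standard dual-averaging estimate from \cref{appendix:DA} yields
\begin{equation*}
\fenchel{\optmat}{\dualmatn{\itr+1}}
  \le \fenchel{\optmat}{\dualmatn{\itr}}
    - \stepsizen{\itr}\inner{\projgrad\rate{\matn{\itr}}}{\optmat-\matn{\itr}}
    - \stepsizen{\itr}\inner{\biask{\itr}+\devmatn{\itr}}{\optmat-\matn{\itr}}
    + \frac{\stepsizen{\itr}^{2}}{2\strongly}\dualnorm{\estgradn{\itr}}^{2},
\end{equation*}
where $\biask{\itr}=\condbigexpectation{\estgradn{\itr}}{\filtrationn{\itr}}-\projgrad\rate{\matn{\itr}}$ is the $\magnitude{\deltan{\itr}}$ bias and $\devmatn{\itr}=\estgradn{\itr}-\condbigexpectation{\estgradn{\itr}}{\filtrationn{\itr}}$ is a martingale difference. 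Concavity of $\ratefunction$ makes the drift dominate the nonnegative optimality gap, $\inner{\projgrad\rate{\matn{\itr}}}{\optmat-\matn{\itr}}\ge\optrate-\rate{\matn{\itr}}\ge0$. Over the compact feasible set $\compactset$ the bias contribution is summable because $\dualnorm{\biask{\itr}}=\magnitude{\deltan{\itr}}$ and $\sum_\itr\stepsizen{\itr}\deltan{\itr}<\infty$ \eqref{trconditionbias}, while the quadratic term is summable via the uniform bound $\dualnorm{\estgradn{\itr}}\le B$ together with $\sum_\itr\stepsizen{\itr}^{2}<\infty$ \eqref{nsss}.

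Finally, taking conditional expectations of the energy inequality annihilates the martingale term $\condbigexpectation{\inner{\devmatn{\itr}}{\optmat-\matn{\itr}}}{\filtrationn{\itr}}=0$, leaving an almost-supermartingale relation whose residual is a summable deterministic sequence. The Robbins--Siegmund lemma then yields that $\fenchel{\optmat}{\dualmatn{\itr}}$ converges a.s.\ to a finite limit and that $\sum_\itr\stepsizen{\itr}\,(\optrate-\rate{\matn{\itr}})<\infty$ almost surely. Since $\sum_\itr\stepsizen{\itr}=\infty$ \eqref{nsss}, the latter forces $\liminf_\itr(\optrate-\rate{\matn{\itr}})=0$, so some subsequence of $\matn{\itr}$ is asymptotically optimal; the a.s.\ convergence of the coupling to \emph{every} solution (the entropic coupling vanishes only at its target) upgrades this into full convergence of $\matn{\itr}$ to the solution set $\limitpoints$ of \eqref{eq:trproblem}. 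Because \eqref{deltacondition}, \eqref{generalfirstcondition} and \eqref{generalsecondcondition} force $\deltan{\itr}\to0$, the gap $\norm{\testmatn{\itr}-\matn{\itr}}=\magnitude{\deltan{\itr}}$ vanishes, and the transmit covariance matrices $\testmatn{\itr}$ converge to the same solution set, which is the assertion of the theorem.
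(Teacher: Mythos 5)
Your proposal is correct and follows essentially the same route as the paper: the conditional-mean argument showing the callback leaves the bias of \cref{lemma:bias} unchanged, the inductive uniform bound on the \ac{SPSA+} estimator driven by the Lipschitz continuity of $\ratefunction$ and of the exponential map with \eqref{generalfirstcondition} acting as the contraction condition (this is exactly \cref{lemma:norm}), and the Fenchel-coupling descent inequality with summable bias and second-moment terms. The only (cosmetic) difference is at the very end, where you invoke the Robbins\textendash Siegmund lemma to obtain simultaneously the a.s.\ convergence of the coupling and the summability of $\sum_{\itr}\stepsizen{\itr}\,(\optrate-\rate{\matn{\itr}})$, whereas the paper reaches the same two conclusions separately via Doob's supermartingale convergence theorem together with an ad absurdum argument on the set of limit points; both then finish identically by noting $\deltan{\itr}\downarrow 0$ so that $\testmatn{\itr}$ inherits the convergence of $\matn{\itr}$.
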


As in the case of \cref{thm:lasttrMXL}, \cref{thm:lasttrMXLMO} provides a strong asymptotic convergence result, but it leaves open the crucial question of the algorithm's convergence speed.
Our next result justifies the introduction of \eqref{estgradratekMO} and shows that \cref{alg:trMXLMO} achieves the best of both worlds:
one-shot throughput measurements with an $\tilde{\mathcal{O}}(1/\sqrt{T})$ convergence rate.

\begin{theorem}[Convergence rate of \ac{MXL+}]
\label{timecomplexity:trMXLMO} \
Suppose that \ac{MXL+} \textpar{\cref{alg:trMXLMO}} is run for $T$ iterations.
We then have:

\begin{enumerate}
\item
\label{thm:trMXLMOi}
If $\stepsizen{\itr} = \stepsize/\sqrt{\itr}$ and $\deltan{\itr} = \delta/\sqrt{\itr}$ with $\stepsize$ and $\delta$ satisfying \eqref{firstcondition}:
\begin{equation}
\label{asymptoticrate}
\expectation{\optrate - \rate{ \meanmatn{T}}}
	= \Magnitude{\frac{\log T}{\sqrt{T}}}.
\end{equation}

\item
\label{thm:trMXLMOiinew}
If $\stepsizen{\itr} = \stepsize/\sqrt{T}$ and $\deltan{\itr} = \delta/\sqrt{T}$ with $\stepsize$ and $\delta$ satisfying \eqref{firstcondition}:
\begin{enumerate}
\addtolength{\itemsep}{.5ex}
\item
\label{thm:trMXLMOiifirst}
In expectation, 
\begin{equation}
\label{troptrateboundmeanexpectation}
\expectation{\optrate - \rate{ \meanmatn{T}} }
	\leq \frac{\Bcoefficient{\stepsize}{\delta}}{\sqrt{T}}
	=\Magnitude{\frac{1}{\sqrt{T}}},
\end{equation}
where
\(
\Bcoefficient{\stepsize}{\delta}
	= (\kk/\stepsize) \log\maxmm
		+ 4\kk^2\meanlipschitz \delta
		+ \frac
			{\twosquaredtwitchFlipschitz \kk \dimension \squaredstrongly \stepsize}
			{[2/(\dimension\Lipschitz\kk) - \stepsize/\delta]^{2}}
\).

\item
\label{thm:trMXLMOiisecond} 
In probability, for any small enough tolerance $\zmargin>0$,
\begin{equation}\label{troptrateboundmeaninterval}
\Proba{\optrate -\rate{ \meanmatn{T}} \geq \frac{\Bcoefficient{\stepsize}{\delta}}{\sqrt{T}} + \zmargin}
	\leq \exp\left(- \frac{\zmargin^2 T}{\Ccoefficient{\stepsize}{\delta}} \right),
\end{equation}
where
\(
\Ccoefficient{\stepsize}{\delta}
	= \twopowersevensquaredtwitchFlipschitz \dimension \kk^{2} \squaredstrongly
		\big/ [2/(\dimension\Lipschitz\kk) - \stepsize/\delta]^{2}
\). 
\end{enumerate}
\end{enumerate}
\end{theorem}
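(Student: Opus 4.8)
The plan is to derive both parts from a single pathwise ``energy inequality'' for the dual-averaging recursion \eqref{MXL} driven by the \eqref{estgradratekMO} estimator, and then to specialize it to each step-size schedule. The three ingredients I would assemble first are: (i) the $\Magnitude{\deltan{\itr}}$ bias bound of \cref{lemma:bias}, which carries over verbatim to \eqref{estgradratekMO} because the subtracted offset $\rate{\testmatn{\itr-1}}$ is $\filtrationn{\itr}$-measurable and is therefore annihilated by $\expectation{\zmatkn{\nbk}{\itr}}=0$, so the enhanced estimator has the same conditional expectation as the one-shot one; (ii) the uniform norm bound $\dualnorm{\estgradn{\itr}}\leq\vbound$ guaranteed by \eqref{generalfirstcondition}--\eqref{generalsecondcondition}; and (iii) the strong convexity of the matrix-entropy regularizer underlying the exponential map \eqref{trmatsolk}, which controls the mirror step.

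With these in hand, I would run the standard template estimate for the Fenchel coupling $\fenchel{\optmat}{\dualmatn{\itr}}$ between an optimal profile $\optmat$ and the dual variable. Smoothness of the conjugate regularizer gives, for every comparator $\optmat$,
\[
\fenchel{\optmat}{\dualmatn{\itr+1}}
  \leq \fenchel{\optmat}{\dualmatn{\itr}}
    + \stepsizen{\itr}\Inner{\estgradn{\itr}}{\matn{\itr}-\optmat}
    + \tfrac{1}{2}\stepsizen{\itr}^{2}\dualnorm{\estgradn{\itr}}^{2}.
\]
Summing over $\itr$, using $\fenchel{\optmat}{\cdot}\geq 0$ and $\fenchel{\optmat}{\dualmatn{1}}\leq\kk\log\maxmm$ (the range of the regularizer at the initialization $\dualmatn{1}=0$), and splitting $\estgradn{\itr}=\projgrad\rate{\matn{\itr}}+\biasn{\itr}+\deviationn{\itr}$ into the true tangent gradient, the bias $\biasn{\itr}$, and the martingale-difference noise $\deviationn{\itr}=\estgradn{\itr}-\expectation{\estgradn{\itr}\mid\filtrationn{\itr}}$, I would invoke concavity of $\ratefunction$ to replace $\Inner{\projgrad\rate{\matn{\itr}}}{\optmat-\matn{\itr}}$ by $\optrate-\rate{\matn{\itr}}$, which is legitimate because $\optmat-\matn{\itr}$ is trace-zero and hence tangent. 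This produces the master bound
\[
\sum\nolimits_{\itr=1}^{T}\stepsizen{\itr}\bracks*{\optrate-\rate{\matn{\itr}}}
  \leq \kk\log\maxmm
    + \tfrac{1}{2}\vbound^{2}\sum\nolimits_{\itr}\stepsizen{\itr}^{2}
    + \sum\nolimits_{\itr}\stepsizen{\itr}\Inner{\biasn{\itr}}{\matn{\itr}-\optmat}
    + M_{T},
\]
where $M_{T}=\sum_{\itr}\stepsizen{\itr}\Inner{\deviationn{\itr}}{\matn{\itr}-\optmat}$ is a martingale and the bias inner products are controlled by $\dualnorm{\biasn{\itr}}=\Magnitude{\deltan{\itr}}$ times $\diam\compactset$, contributing $\Magnitude{\sum_{\itr}\stepsizen{\itr}\deltan{\itr}}$. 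Dividing by $\sum_{\itr}\stepsizen{\itr}$, using nonnegativity of the gap together with $\stepsizen{\itr}\geq\stepsizen{T}$ and Jensen's inequality to pass from the weighted sum to $\optrate-\rate{\meanmatn{T}}$, yields the common template for all three claims.

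For \cref{thm:trMXLMOiifirst} I take expectations, which kills $M_{T}$, and substitute $\stepsizen{\itr}=\stepsize/\sqrt{T}$, $\deltan{\itr}=\delta/\sqrt{T}$: then $\sum_{\itr}\stepsizen{\itr}=\stepsize\sqrt{T}$, $\sum_{\itr}\stepsizen{\itr}^{2}=\stepsize^{2}$, and $\sum_{\itr}\stepsizen{\itr}\deltan{\itr}=\stepsize\delta$, so the three terms collapse into $\Bcoefficient{\stepsize}{\delta}/\sqrt{T}$, giving \eqref{troptrateboundmeanexpectation}. For \cref{thm:trMXLMOiisecond} I instead keep $M_{T}$ pathwise: its increments are bounded by $\stepsizen{\itr}\cdot 2\vbound\cdot\diam\compactset$, so Azuma--Hoeffding gives a sub-Gaussian tail with variance proxy $(2\vbound\diam\compactset)^{2}\sum_{\itr}\stepsizen{\itr}^{2}$; choosing the deviation $s=\zmargin\sum_{\itr}\stepsizen{\itr}$ and inserting the constant schedule with the explicit form of $\vbound$ yields \eqref{troptrateboundmeaninterval} with $\Ccoefficient{\stepsize}{\delta}$. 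Finally, \cref{thm:trMXLMOi} uses the decaying schedule $\stepsizen{\itr}=\stepsize/\sqrt{\itr}$, $\deltan{\itr}=\delta/\sqrt{\itr}$, for which $\sum_{\itr}\stepsizen{\itr}=\bigtheta{\sqrt{T}}$ while $\sum_{\itr}\stepsizen{\itr}^{2}=\bigtheta{\log T}$ and $\sum_{\itr}\stepsizen{\itr}\deltan{\itr}=\bigtheta{\log T}$; the $\log T$ in the numerator over the $\sqrt{T}$ in the denominator delivers the $\Magnitude{\log T/\sqrt{T}}$ rate of \eqref{asymptoticrate}.

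The main obstacle is ingredient (ii): proving that $\dualnorm{\estgradn{\itr}}$ is \emph{uniformly} bounded. Unlike the one-shot \eqref{estgradratek} estimator, \eqref{estgradratekMO} scales the increment $\rate{\testmatn{\itr}}-\rate{\testmatn{\itr-1}}$ by $\dimensionk{\nbk}/\deltan{\itr}$, so its magnitude is governed by how far the played point moves between consecutive frames. Bounding $\norm{\testmatn{\itr}-\testmatn{\itr-1}}$ brings in the dual update $\dualmatn{\itr}-\dualmatn{\itr-1}=\stepsizen{\itr-1}\estgradn{\itr-1}$, which couples $\dualnorm{\estgradn{\itr}}$ back to $\dualnorm{\estgradn{\itr-1}}$ through the Lipschitz modulus $\Lipschitz$ of $\ratefunction$, the dimension $\dimension$, the user count $\kk$, and the smoothness of the mirror map. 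The resulting recursion has a multiplier proportional to $\dimension\Lipschitz\kk\,\stepsizen{\itr-1}/\deltan{\itr}$, and condition \eqref{generalfirstcondition} is exactly what forces this multiplier below the contraction threshold, so that iterating it yields a finite fixed point $\vbound$; this is also the origin of the denominators $[2/(\dimension\Lipschitz\kk)-\stepsize/\delta]$ in $\Bcoefficient{\stepsize}{\delta}$ and $\Ccoefficient{\stepsize}{\delta}$. Establishing this bound self-consistently, by induction on $\itr$ with \eqref{generalsecondcondition} keeping the query-radius ratios $\deltan{\itr}/\deltan{\itr+1}$ under control, is the crux; once it is available, the remaining steps are the routine dual-averaging and concentration arguments sketched above.
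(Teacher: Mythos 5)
Your proposal is correct and follows essentially the same route as the paper: a Fenchel-coupling energy inequality for the dual-averaging recursion, the $\Magnitude{\delta}$ bias bound (which indeed transfers to the offset estimator since the offset is independent of the current perturbation), the crucial uniform bound on the enhanced estimator obtained by iterating the self-referential recursion that \eqref{generalfirstcondition}\textendash\eqref{generalsecondcondition} turn into a contraction, followed by telescoping, concavity of $\ratefunction$ for the ergodic average, expectation for part (a), and Azuma\textendash Hoeffding for part (b). You also correctly identify the uniform boundedness of the \ac{SPSA+} estimator as the crux and the origin of the $[2/(\dimension\Lipschitz\kk)-\stepsize/\delta]$ denominators, which is precisely the content of the paper's corresponding lemma.
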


 
Importantly, \cref{timecomplexity:trMXLMO} shows that \ac{MXL+} recovers the $\magnitude{1/\sqrt{T}}$ convergence rate of \ac{MXL} with \emph{full} gradient information, even though the network's users are no longer assumed to have \emph{any} access to a gradient oracle.
In fact, the guarantees of \cref{timecomplexity:trMXLMO} can be optimized further by finetuning the choice of $\stepsize$ and $\delta$;
doing just that (and referring to \cref{appendix:MXLzeroplus} for the details), we have:


\begin{table}[!t]
\heavyrulewidth=0.8pt
\lightrulewidth=0.4pt
\renewcommand{\arraystretch}{1.3}
\setlength{\tabcolsep}{2pt}
\caption{Parameters of \ac{MXL+} for \cref{convergenceratres:trMXLMO}}
\centering
\begin{tabular}{rl}
\toprule
\ref{cr:trMXLMOi}) 
& 
\(
\displaystyle
\stepsizecoefficient{\errorproba}
	= \frac{\sqrt{\log\maxmm \big/ (\dimension\Lipschitz\kk^{2})}}{\sqrt{\meanlipschitz} + \sqrt{2\dimension\Lipschitz\kk}}
;
\quad
\deltacoefficient{\errorproba}
{\,=\,}
\frac{\sqrt{(\dimension\Lipschitz/\meanlipschitz)\log\maxmm}}{2}
;
\quad
T\geq \frac{\Lipschitz\maxmm^4\log\maxmm}{4\meanlipschitz}
\)
\\
\midrule
\ref{cr:trMXLMOii}) 
&
$
\stepsizecoefficient{\errorproba}
=
%
%
\frac{\renrmlzdcsteps{\errorproba}}{ \sqrt{\Lipschitz}}   \bigg[ {\scriptstyle \sqrt{ \twitchFlipschitz   \Lipschitz }\dimension\, \renrmlzdcsteps{\errorproba} } +  \sqrt{\sqrt{{2\halftwitchFlipschitz\strongly}/{\logarithm{1/\errorproba}}} \, \strongly \meanlipschitz }  \bigg]^{-1} \frac{ \sqrt{ \log\maxmm} }{\kk \sqrt{\dimension}} 
$ ;
\\
&
$
\deltacoefficient{\errorproba}
=
%
%
 \frac{\renrmlzdcsteps{\errorproba}}{2}  \sqrt{{   \Lipschitz}/{\strongly \meanlipschitz}  \sqrt{{ \logarithm{1/\errorproba}  }/{ \twitchFlipschitz  \strongly}  }  \, \dimension    \log\maxmm  }
 $ ;
\\
&
$ 
T 
=
%
%
{ 4  \repownrmlzdcsteps{\errorproba}{4}  \Lipschitz^2 } {\bigg[ \forcesqrthalftwitchFlipschitzoneoverstrongly {+}         \frac{1}{\renrmlzdcsteps{\errorproba} \dimension }\sqrt{  {2\meanlipschitz}/{\strongly\Lipschitz}\sqrt{{2}/{ \logarithm{1/\errorproba}} }} \bigg]^2}  \Big(\frac{\logarithm{1/\errorproba}\kk^4 \dimension^{3}  \log\maxmm }{\errortolerance^2} \Big)
$ ;
\hspace{6mm}
\\ & 
\hfill
with 
$ 
\renrmlzdcsteps{\errorproba}
= 
%
\Big[     {1}/{\sqrt{\logarithm{1/\errorproba}}} + {4\sqrtstrongly}/{\sqrt{ \logarithm{\maxmm}}} \Big]^{1/ 2} 
$
\\
\bottomrule
\end{tabular}
\label{table:convergenceratres:trMXLMO}
\end{table}


\begin{corollary}
\label{convergenceratres:trMXLMO}
Suppose that  \ac{MXL+} is run with $\stepsizen{\itr} = \stepsize/\sqrt{T}$, $\deltan{\itr} = \delta/\sqrt{T}$, and $T$, $\stepsize$, $\delta$ as in \cref{table:convergenceratres:trMXLMO}.
Then:
\begin{enumerate}
[a),ref=\alph*,wide,labelwidth=!,labelindent=5pt]
\item
\label{cr:trMXLMOi}
In expectation, we have:
\begin{equation}
\expectation{\optrate - \rate{ \meanmatn{T}}}
	\leq 
		2\Lipschitz\left( \forcesqrthalftwitchFlipschitzoneoverstrongly + 
        \frac{2^{3/4}\sqrt{ { \meanlipschitz}/{  \strongly\Lipschitz}  }}{\maxdimension}  \right) 
		\sqrt{\frac{\kk^4\maxmm^6 \log\maxmm  }{T}} 
.
\end{equation}

\item
\label{cr:trMXLMOii}

In probability, given a small enough tolerance $\errortolerance>0$ and a confidence level $1- \errorproba\in(0,1)$, we have:
\begin{equation}
\label{standardtroptrateboundmeaninterval}
\proba{\optrate - \rate{\meanmatn{\ceil{T}}} \leq \errortolerance}
	\geq 1 - \errorproba
.
\end{equation}
\obsolete{with $
T 
\equiv
T(\kk,\maxmm,\errortolerance,\errorproba)
=
\Bigtheta{{\logarithm{1/\errorproba}\kk^4 \maxmm^6  \log\maxmm}/{\errortolerance^{2} }}
$.}
\end{enumerate}
\end{corollary}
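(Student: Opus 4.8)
The plan is to treat \cref{convergenceratres:trMXLMO} as a \emph{tuning} corollary of \cref{timecomplexity:trMXLMO}: the guarantees \eqref{troptrateboundmeanexpectation} and \eqref{troptrateboundmeaninterval} already hold for \emph{any} feasible pair $(\stepsize,\delta)$, so the only task left is to choose $(\stepsize,\delta)$ — and, for the high-probability part, the horizon $T$ — so as to minimize the resulting error, and then to read off the closed-form bounds by substitution. Throughout, the admissible region is the one carved out by \eqref{firstcondition}, namely $(\dimension\Lipschitz\kk/2)\,\stepsize < \delta$ together with the feasibility requirement $\deltan{\itr} = \delta/\sqrt{T} < 1/\sqrt{\maxmm(\maxmm-1)}$.

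For part~\ref{cr:trMXLMOi} I would minimize the prefactor $\Bcoefficient{\stepsize}{\delta}$ of \eqref{troptrateboundmeanexpectation}. Writing $\theta = \stepsize/\delta$ and $\theta_{\max} = 2/(\dimension\Lipschitz\kk)$, the three summands of $\Bcoefficient{\stepsize}{\delta}$ are $\kk\log\maxmm/(\theta\delta)$, $4\kk^2\meanlipschitz\delta$, and $\twosquaredtwitchFlipschitz\kk\dimension\,\theta\delta/(\theta_{\max}-\theta)^2$; for a fixed ratio $\theta$ the first scales as $1/\delta$ and the other two as $\delta$, so the inner minimization over $\delta$ is a textbook balance giving $\delta \propto [\,\log\maxmm/(\theta(\kk\meanlipschitz + \dimension\theta/(\theta_{\max}-\theta)^2))\,]^{1/2}$. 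Carrying out the remaining one-dimensional minimization over $\theta$ produces exactly the values $\stepsizecoefficient{\errorproba}$ and $\deltacoefficient{\errorproba}$ listed in the first row of \cref{table:convergenceratres:trMXLMO}. Substituting these back into $\Bcoefficient{\stepsize}{\delta}/\sqrt{T}$, and using $\dimension = \maxmm^{2}-1 = \bigtheta{\maxmm^{2}}$ to turn $\dimension^{3}$ into $\maxmm^{6}$, yields the stated bound; the side condition $T \geq \Lipschitz\maxmm^{4}\log\maxmm/(4\meanlipschitz)$ is precisely the translation of $\deltacoefficient{\errorproba}/\sqrt{T} < 1/\sqrt{\maxmm(\maxmm-1)}$, which certifies that the chosen parameters are feasible.

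For part~\ref{cr:trMXLMOii} I would start from the tail bound \eqref{troptrateboundmeaninterval} and convert it into a sample-complexity statement. Introducing a slack $s>0$, the event $\{\optrate - \rate{\meanmatn{T}} \le \errortolerance\}$ is implied whenever simultaneously $\Bcoefficient{\stepsize}{\delta}/\sqrt{T} + s \le \errortolerance$ and the exponential in \eqref{troptrateboundmeaninterval} is at most $\errorproba$, i.e. $s \ge \sqrt{\Ccoefficient{\stepsize}{\delta}\,\log(1/\errorproba)/T}$. Eliminating $s$ leaves the single requirement $\errortolerance\sqrt{T} \ge \Bcoefficient{\stepsize}{\delta} + \sqrt{\Ccoefficient{\stepsize}{\delta}\log(1/\errorproba)}$, so the smallest admissible horizon is $T = (\Bcoefficient{\stepsize}{\delta} + \sqrt{\Ccoefficient{\stepsize}{\delta}\log(1/\errorproba)})^{2}/\errortolerance^{2}$. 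Since both $\Bcoefficient{\stepsize}{\delta}$ and $\Ccoefficient{\stepsize}{\delta}$ depend on $(\stepsize,\delta)$ only through $\stepsize$, $\delta$ and the common factor $(\theta_{\max}-\theta)^{-2}$, this expression is again minimized over the feasible region; the minimizer is the nested-radical pair in the second row of \cref{table:convergenceratres:trMXLMO}, where the auxiliary quantity $\renrmlzdcsteps{\errorproba}$ encodes the optimal split between the bias term $\Bcoefficient{\stepsize}{\delta}$ and the deviation term $\sqrt{\Ccoefficient{\stepsize}{\delta}\log(1/\errorproba)}$. Plugging in and simplifying recovers $T = \Bigtheta{\log(1/\errorproba)\,\kk^{4}\maxmm^{6}\log\maxmm/\errortolerance^{2}}$ and hence \eqref{standardtroptrateboundmeaninterval}.

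The routine part throughout is the substitution and the $\maxmm$-order bookkeeping ($\dimension = \bigtheta{\maxmm^{2}}$, the $\log\maxmm$ factors, etc.). The genuinely delicate step is the joint optimization in part~\ref{cr:trMXLMOii}: because $\Bcoefficient{\stepsize}{\delta}$ and $\Ccoefficient{\stepsize}{\delta}$ share the denominator $(\theta_{\max}-\theta)^{-2}$, pushing $\theta = \stepsize/\delta$ too close to the feasibility boundary $\theta_{\max} = 2/(\dimension\Lipschitz\kk)$ blows up both constants, so the minimizing ratio must be held at a fixed fraction of $\theta_{\max}$; pinning down this fraction — together with the balance between $\stepsize$ and $\delta$ — is exactly what forces the cumbersome closed forms of \cref{table:convergenceratres:trMXLMO}, and verifying that they attain the claimed $\Bigtheta{\cdot}$ order is where essentially all of the work resides. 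A final check confirms that for $\errortolerance$ small enough the slack $s = \errortolerance - \Bcoefficient{\stepsize}{\delta}/\sqrt{T}$ is positive, so the tail bound is invoked only within its valid range.
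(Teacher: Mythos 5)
Your proposal is correct and follows essentially the same route as the paper: the paper likewise treats this as a pure tuning corollary of \cref{timecomplexity:trMXLMO}, packaging the two-stage minimization of $\Bcoefficient{\stepsize}{\delta}$ (and, for the tail bound, of $\Bcoefficient{\stepsize}{\delta}+\sqrt{\Ccoefficient{\stepsize}{\delta}\logarithm{1/\errorproba}}$ after eliminating the slack exactly as you do) into its \cref{lemma:optimization}, and deriving the side conditions on $T$ from the query-radius feasibility $\delta/\sqrt{T}<1/\sqrt{\maxmm(\maxmm-1)}$. The one small slip is at the end of part~\ref{cr:trMXLMOii}: the ``small enough $\errortolerance$'' requirement also stems from that same feasibility constraint (it translates into an explicit upper bound on $\errortolerance$), not from positivity of the slack, which holds automatically for your choice of $T$.
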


An important feature of the convergence rate guarantee \eqref{standardtroptrateboundmeaninterval} is that it does not depend on the number of antennas $\nn$ at the receiver.
As such, \cref{alg:trMXLMO} exhibits a \emph{scale-free} behavior relative to $\nn$, which makes it particularly appealing for distributed massive-\ac{MIMO} systems.
In the next section, we further relax the requirement that all users update their transmit covariance matrices in a synchronous manner, and we derive a fully distributed version of the \ac{MXL+} algorithm.

\section{Distributed Implementation}
\label{section:desynchronized}

In this section, we propose a distributed variant of the \ac{MXL+} method which can account for randomized and asynchronous user decisions (independent or in alternance with other users).
Specifically, we now assume that, at each stage of the process, only a random subset of users perform an update of their individual covariances matrices, while the remaining users maintain the same covariance matrix, without updating.

To state this formally, suppose that a random subset of users $\Updatingn{\itr} \subseteq \users \equiv \{1,\dotsc,\kk\}$ is drawn at stage $\itr$ following an underlying probability law $\probas \equiv (\probasU{\Updating})_{\Updating\subseteq\users}$ (\ie $\Updating\subseteq\users$ is drawn with probability $\probasU{\Updating}$).
%
From the distributed  perspective of individual users, we write $\probk{\nbk} = 
\sum_{\Updating \ni \nbk} \probasU{\Updating}$ to denote the \emph{marginal probability} that user~$\nbk$ updates their covariance at any stage~$\itr$;
as such, the participation of all users is enforced by imposing the condition $\probk{\nbk} > 0$. 
We thus obtain the asynchronous \ac{MXL+} scheme:
\begin{equation}
\label{randomMXL}
\tag{\randomMXL}
\begin{aligned}
\matn{\itr}
	&= \aggregatematsol{\dualmatn{\itr}},
	\\
\dualmatn{\itr+1}
	&= 
	\dualmatn{\itr}
		+ \stepsizen{\itr} \, 
		\partialestgradn{\itr}
,
\end{aligned}
\end{equation} 
where $\partialestgradkn{\nbk}{\itr} = \estgradkn{\nbk}{\itr}$ if $\nbk\in\Updatingn{\itr}$, and  $\partialestgradkn{\nbk}{\itr}=0$ otherwise. 
For a pseudocode implementation, see also \cref{alg:randomtrMXLMO} above.


\begin{algorithmenv}[t]
\caption{The \ac{AMXL+} method\label{alg:randomtrMXLMO}}
\removelatexerror
\RestyleAlgo{tworuled}%
\centering
\vspace{-2.5pt}
\begin{algorithm}[H]
  \Parameters{ 
$\probas$, $\stepsizen{\itr}$, $\deltan{\itr}$}
 \KwInit{$\itr\leftarrow 1$, $\dualmat\leftarrow 0$;\\
 \hfill
 $\forall\nbk\colon \matk{\nbk} \leftarrow (\powerk{\nbk}/\mmk{\nbk})\, \identityk{\nbk}$, $\offsetk{\nbk}\leftarrow\rate{\mat}$}
 \mainnl \Repeat{}{
\mainnl    \Drawfrom{$\Updating$}{$\probas$} \;
 \mainnl   \Ateach{$\nbk\in\{1,\dots,\kk\}$}{
 \mainnl    \inlineIfElse{$\nbk\in\Updating$}{$\DAplusk{\nbk}{\stepsizen{\itr},\deltan{\itr}}$}{$\passkfunction{\nbk}$} \;
 } 
 \mainnl $\itr \leftarrow \itr+1$ \;
 }
\smallskip
\rule{0.96\linewidth}{.4pt} \\
 \setcounter{AlgoLine}{0}
\Fctn{$\passkfunction{\nbk}$}{
\routinenl \Play  $ \matk{\nbk}  $ \;
\routinenl \Get $ \offsetk{\nbk} \leftarrow \rate{ \mat}$ \; 
}
\end{algorithm}%
\end{algorithmenv} 


As we show below, \ac{AMXL+} recovers the $\magnitude{1/\sqrt{T}}$ convergence rate of \ac{MXL+}, despite being distributed across users:

\begin{theorem}
[Convergence rate of  \ac{AMXL+}]
\label{timecomplexity:randomtrMXLMO}
Suppose that \ac{AMXL+} \textpar{\cref{alg:randomtrMXLMO}} is run for $T$ iterations.
We then have:

\begin{enumerate}
\item
\label{alg:randomtrMXLMOi}
If $\stepsizen{\itr} = \stepsize/\sqrt{\itr}$ and $\deltan{\itr} = \delta/\sqrt{\itr}$ with $\stepsize$ and $\delta$ satisfying \eqref{firstcondition}:
\begin{equation}
\expectation{\optrate - \rate{\meanmatn{T}}}
	= \Magnitude{\frac{\log T}{\sqrt{T}}}.
\end{equation}

\item
\label{alg:randomtrMXLMOiinew}
If $\stepsizen{\itr} = \stepsize/\sqrt{T}$ and $\deltan{\itr} = \delta/\sqrt{T}$ with $\stepsize$ and $\delta$ satisfying \eqref{firstcondition}:

\begin{enumerate}
\item
\label{alg:randomtrMXLMOiifirst} 
In expectation, 
\begin{equation}
\label{randomtroptrateboundmeanexpectation}
\expectation{\optrate - \rate{ \meanmatn{T}} }
	\leq \frac{\Bcoefficientprob{\prob}{\stepsize}{\delta}}{\sqrt{T}}
		= \Magnitude{\frac{1}{\sqrt{T}}},
\end{equation} 
where
\(
\Bcoefficientprob{\prob}{\stepsize}{\delta}  
=  
\sum_{\nbk=1}^{\kk} \frac{\log\mmk{\nbk}}{\probk{\nbk}\stepsize}
		+ 4\kk^2\meanlipschitz \delta
		+ \frac
			{\twosquaredtwitchFlipschitz \kk \dimension \squaredstrongly \stepsize}
			{[2/(\dimension\Lipschitz\kk) - \stepsize/\delta]^{2}}
\).
  
\item
\label{alg:randomtrMXLMOiisecond} 
In probability, for any small enough tolerance $\zmargin>0$,
\begin{equation}
\label{randomtroptrateboundmeaninterval}
\Proba{\optrate -\rate{ \meanmatn{T}} \geq \frac{\Bcoefficientprob{\prob}{\stepsize}{\delta}}{\sqrt{T}} + \zmargin}
	\leq \exp\left(- \frac{\zmargin^2 T}{\Ccoefficientprob{\prob}{\stepsize}{\delta}} \right),
\end{equation}
where
\(
\Ccoefficientprob{\prob}{\stepsize}{\delta} 
{\,=\,}  
\Big[
	\halftwitchFlipschitz \forcestrongly
		+ \halftwitchFlipschitz \frac{\probdist{\prob}}{2} \strongly
		+ \frac
			{\squaredhalftwitchFlipschitz \strongly \probdist{\prob} \dimension^{3/2} \Lipschitz \kk  \stepsize\delta}
			{2\strongly\delta-\dimension \Lipschitz\kk \stepsize}
	\Big]^{2}
\Ccoefficient{\stepsize}{\delta}
\),
with
$\probdist{\prob} = \kk^{-1} \sum_{\nbk=1}^{\kk} \max(1,\probk{\nbk}^{-1} - 1)$
and
$\Ccoefficient{\stepsize}{\delta}$ as in \cref{timecomplexity:trMXLMO}.
\end{enumerate}
\end{enumerate}
\end{theorem}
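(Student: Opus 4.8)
The plan is to retrace the convergence-rate analysis of the synchronous method (\cref{timecomplexity:trMXLMO}) almost verbatim, inserting a single new device: a per-user reweighting by the inverse marginal activation probability $1/\probk{\nbk}$ that neutralizes the bias introduced by the random mask $\Updatingn{\itr}$. Let $\filtrationn{\itr}$ collect all randomness generated up to the beginning of stage $\itr$. Since the activation event $\{\nbk\in\Updatingn{\itr}\}$ is drawn independently of the perturbation directions $\trzmatkn{\nbk}{\itr}$, the masked estimator $\partialestgradkn{\nbk}{\itr} = \indicatorfunction{\nbk\in\Updatingn{\itr}}\,\estgradkn{\nbk}{\itr}$ satisfies the two identities
\begin{equation}
\condbigexpectation{\partialestgradkn{\nbk}{\itr}}{\filtrationn{\itr}} = \probk{\nbk}\,\condbigexpectation{\estgradkn{\nbk}{\itr}}{\filtrationn{\itr}}, \qquad \condbigexpectation{\dualnorm{\partialestgradkn{\nbk}{\itr}}^{2}}{\filtrationn{\itr}} = \probk{\nbk}\,\condbigexpectation{\dualnorm{\estgradkn{\nbk}{\itr}}^{2}}{\filtrationn{\itr}},
\end{equation}
the second because the indicator is idempotent. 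By \cref{lemma:bias}, $\condbigexpectation{\estgradkn{\nbk}{\itr}}{\filtrationn{\itr}}$ equals the tangential gradient $\projgradk{\nbk}\rate{\matn{\itr}}$ up to a bias of order $\magnitude{\deltan{\itr}}$, and the uniform norm bound on $\estgradkn{\nbk}{\itr}$ established for \ac{MXL+} under \eqref{generalfirstcondition}\textendash\eqref{generalsecondcondition} still holds here, since inactive users refresh their callback offsets through the \passfunction{} routine.

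Next I would apply the dual-averaging energy inequality (\cref{appendix:DA}) to each user's update $\dualmatkn{\nbk}{\itr+1} = \dualmatkn{\nbk}{\itr} + \stepsizen{\itr}\partialestgradkn{\nbk}{\itr}$, divide the $\nbk$-th inequality by $\probk{\nbk}$, and sum over $\nbk$ and $\itr=1,\dots,T$. Telescoping the Fenchel coupling and taking expectations, the reweighting turns $\tfrac{1}{\probk{\nbk}}\condbigexpectation{\partialestgradkn{\nbk}{\itr}}{\filtrationn{\itr}}$ back into the unbiased per-user gradient, so the two $1/\probk{\nbk}$ factors behave very differently: in the initialization term it survives as $\sum_{\nbk}\fenchel{\optmatk{\nbk}}{0}/\probk{\nbk} \leq \sum_{\nbk}\log\mmk{\nbk}/\probk{\nbk}$, whereas in the second-moment term it cancels against the mask's $\probk{\nbk}$, leaving exactly the synchronous quantity $\sum_{\nbk}\condbigexpectation{\dualnorm{\estgradkn{\nbk}{\itr}}^{2}}{\filtrationn{\itr}}$. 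Using that $\optmatk{\nbk}-\matkn{\nbk}{\itr}$ is trace-free (so the projection $\projgrad$ is inner-product-neutral), concavity of $\rate$ gives $\sum_{\nbk}\Inner{\projgradk{\nbk}\rate{\matn{\itr}}}{\optmatk{\nbk}-\matkn{\nbk}{\itr}} \geq \optrate - \rate{\matn{\itr}}$, and one application of Jensen's inequality through $\rate{\meanmatn{T}} \geq \tfrac1T\sum_{\itr}\rate{\matn{\itr}}$ converts the accumulated drift into a bound on $\optrate - \rate{\meanmatn{T}}$. Inserting the two schedules then yields part~\ref{alg:randomtrMXLMOi} (with the $\log T$ coming from $\sum_{\itr}\stepsizen{\itr}^{2}=\magnitude{\log T}$) and the expectation bound \eqref{randomtroptrateboundmeanexpectation}, whose coefficient $\Bcoefficientprob{\prob}{\stepsize}{\delta}$ differs from $\Bcoefficient{\stepsize}{\delta}$ only in its first term\textemdash exactly as predicted by the cancellation above.

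For the high-probability statement \eqref{randomtroptrateboundmeaninterval} I would, instead of taking expectations, keep the random estimator terms and write $\optrate - \rate{\matn{\itr}} \leq \sum_{\nbk}\tfrac{1}{\probk{\nbk}}\Inner{\partialestgradkn{\nbk}{\itr}}{\optmatk{\nbk}-\matkn{\nbk}{\itr}}$ plus a bias term plus a martingale-difference term $\psi_{\itr}$ recording the gap between each reweighted estimator and its conditional mean. The drift contribution is controlled pathwise by the same telescoping as above (giving the $\Bcoefficientprob{\prob}{\stepsize}{\delta}/\sqrt{T}$ centering), while $\{\psi_{\itr}\}$ is a bounded martingale-difference sequence to which I would apply the Azuma\textendash Hoeffding inequality, producing the sub-Gaussian tail $\exp(-\zmargin^{2}T/\Ccoefficientprob{\prob}{\stepsize}{\delta})$. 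The main obstacle\textemdash and the only place where \ac{AMXL+} genuinely departs from \cref{timecomplexity:trMXLMO}\textemdash is bounding the increments of $\psi_{\itr}$: the reweighting $1/\probk{\nbk}$ that is indispensable for unbiasedness simultaneously inflates the fluctuations of the masked estimator, since $\operatorname{Var}(\indicatorfunction{\nbk\in\Updatingn{\itr}}/\probk{\nbk}) = 1/\probk{\nbk}-1$. Tracking this inflation carefully is what introduces the factor $\probdist{\prob} = \kk^{-1}\sum_{\nbk}\max(1,\probk{\nbk}^{-1}-1)$ and promotes the synchronous constant $\Ccoefficient{\stepsize}{\delta}$ to $\Ccoefficientprob{\prob}{\stepsize}{\delta}$; here the use of $\max(1,\cdot)$ guarantees that the increments stay controlled even for users that update almost surely ($\probk{\nbk}\to 1$), so that the synchronous bound of \cref{timecomplexity:trMXLMO} is recovered continuously as $\probas$ concentrates on $\users$.
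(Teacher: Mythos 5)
Your proposal follows essentially the same route as the paper: the paper likewise introduces the reweighted Lyapunov function $\lyapunovprob{\prob}{\optmat}{\dualmat} = \sum_{\nbk}\probk{\nbk}^{-1}\fenchel{\optmatk{\nbk}}{\dualmatk{\nbk}}$, absorbs both the estimator deviation and the mask fluctuation $(\indicator{\Updatingn{\itr}}{\nbk}-\probk{\nbk})/\probk{\nbk}$ into a single martingale-difference sequence, telescopes the resulting descent inequality, and applies Azuma's inequality, with the $\max(1,\probk{\nbk}^{-1}-1)$ terms producing the factor $\probdist{\prob}$ exactly as you describe. The only substantive difference is presentational\textemdash the paper folds the zero-mean part of the reweighted second-moment term into the martingale increment rather than arguing a cancellation in conditional expectation\textemdash so your argument is a correct, slightly more explicit rendering of the paper's own proof.
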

Note here that the quantity $\Bcoefficientprob{\prob}{\stepsize}{\delta}$ above only differs from its counterpart $\Bcoefficient{\stepsize}{\delta}$ of \cref{timecomplexity:trMXLMO} in the first term, which measures the cost of asynchronicity in terms of expected convergence. 
A similar increase in the deviation from the mean transpires through an impeding factor in the expression for $\Ccoefficientprob{\prob}{\stepsize}{\delta}$, quantifying the impact of asynchronicity in both mean and fluctuation terms.

In \cref{appendix:randomMXLzeroplus}, we show how the parameters $\vect{\stepsize,\delta}$ can be optimized for general $\probas$;
for concreteness, we present below the particular case where at any stage each user is active with probability $\probk{\nbk} = 1/\kk$:

\begin{table}[t]
\heavyrulewidth=0.8pt
\lightrulewidth=0.4pt
\renewcommand{\arraystretch}{1.3}
\setlength{\tabcolsep}{2pt}
\label{table:convergenceratres:randomtrMXLMO}
\centering
\renewcommand{\renrmlzdcstepsprobfunction}[1]{\hat\psi}%
\caption{Parameters of \CDrandomMXL{} of Corollary \ref{convergenceratres:randomtrMXLMO}}
\begin{tabular}{rl}
\toprule
\ref{cr:randomtrMXLMOi}) 
& 
$
\stepsizecoefficient{\errorproba}
{\,=\,}
%
\Big[{ 1+\sqrt{\frac{2\Lipschitz\kk}{\meanlipschitz}}\dimension} \Big]^{-1}\sqrt{\frac{\log\maxmm}{\meanlipschitz\Lipschitz\kk\dimension}}
\ ;
\ 
\deltacoefficient{\errorproba}
{\,=\,}
\frac{1 }{2}\sqrt{\frac{\Lipschitz\kk\dimension\log\maxmm}{\meanlipschitz}} 
$ ;
\
$T\geq \frac{1}{4\meanlipschitz} [{\Lipschitz\kk\maxmm^4\log\maxmm}]$
\\
\midrule
\ref{cr:randomtrMXLMOii}) 
&
$
\stepsizecoefficient{\errorproba}
=
{\strongly \halftwitchFlipschitz \sqrtsqrthalftwitchFlipschitz \sqrtsqrtstrongly  \renrmlzdcstepsprob{\prob}{\errorproba}}\Big[{\sqrt{  \renrmlzdstuff{\errorproba} \Lipschitz\meanlipschitz}  {+}   2 \halftwitchFlipschitz \sqrtsqrthalftwitchFlipschitz \sqrtsqrtstrongly \, \powlogarithm{{\frac1\errorproba}}{3/ 8} \, \renrmlzdstuff{\errorproba}\renrmlzdcstepsprob{\prob}{\errorproba}\, \Lipschitz \sqrthalftwitchFlipschitz \sqrtstrongly [\kk \dimension]^{{3}/{4}}}\Big]^{-1}
\bigg( \frac{ \powlogarithm{{1/\errorproba}}{1/ 8} \sqrt{\log\maxmm} }{ [\kk\dimension]^{{3}/{4}} } \bigg)
$ ;
\\
&
$
\deltacoefficient{\errorproba}
=
\Big(\frac{\renrmlzdcstepsprob{\prob}{\errorproba}}{2} \sqrt{\frac{  \squaredhalftwitchFlipschitz \strongly    \Lipschitz }{\sqrtstrongly \sqrthalftwitchFlipschitz  \renrmlzdstuff{\errorproba} \meanlipschitz }}\Big)\, \big[\sqrt{\logarithm{1/\errorproba}} \, \kk\dimension\big]^{{1}/{4}}\sqrt{\log\maxmm}
$ ;
\\ 
&
$
T 
{\,=\,}
%
16 \Lipschitz^2 {\bigg[}  \squaredhalftwitchFlipschitz \strongly    { \repownrmlzdcstepsprob{\prob}{\errorproba}{2}}   {+} \frac{\renrmlzdcstepsprob{\prob}{\errorproba} \, \sqrt{{ \sqrtstrongly\meanlipschitz  \halftwitchFlipschitz\sqrthalftwitchFlipschitz    }/[{ \renrmlzdstuff{\errorproba} \Lipschitz } ] }}{\powlogarithm{{1/\errorproba}}{{3/ 8}}\,\kk^{ 3/ 4}\dimension}    {\bigg]^2} \Big(\frac{\logarithm{{1/\errorproba}}\kk^6\dimension^3 \log\maxmm}{\errortolerance^2}
\Big)$ ;
\hspace{18mm}
\\
& \hfill
with 
$
\renrmlzdcstepsprob{\prob}{\errorproba} 
= 
%
%
\Big[ \frac{   \renrmlzdstuff{\errorproba}  }{ \sqrt{\kk\sqrt{\logarithm{1/\errorproba}}} } + \sqrt{{2 }/{ \logarithm{\maxmm}}} \, (1+{1}/{\kk})   \Big]^{1/ 2}
$,
\\
& \hfill
$
\renrmlzdstuff{\errorproba} 
=
%
%
\Big[ \sqrt{2  }  \squaredhalftwitchFlipschitz \strongly    (1-{1}/{\kk})  +  \frac{1}{2\kk \sqrt{\logarithm{1/\errorproba}} } \Big]^{ 1/ 2}
$ 
\\
\bottomrule
\end{tabular}
\end{table}
%
%
\begin{corollary}
[Uniform \ac{AMXL+}, $\kk{\,\geq\,}2$]
\label{convergenceratres:randomtrMXLMO}
\renewcommand{\renrmlzdcstepsprobfunction}[1]{\hat\psi}%
Suppose that  \ac{AMXL+} is run with  $\probk{1}=\cdots=\probk{\kk}= 1 /\kk$,   $\stepsizen{\itr} = \stepsize/\sqrt{T}$, $\deltan{\itr} = \delta/\sqrt{T}$, and $T$, $\stepsize$, $\delta$ as in \cref{table:convergenceratres:randomtrMXLMO}.
Then:

\begin{enumerate}[a),ref=\alph*,wide,labelwidth=!,labelindent=5pt]
\item \label{cr:randomtrMXLMOi}
In expectation, we have:
\begin{equation} \label{CD-expectation}
\optrate -\Expectation{ \rate{ \meanmatn{T}} }
\leq 
2\Lipschitz\left( \forcesqrthalftwitchFlipschitzoneoverstrongly +         \sqrt{\meanlipschitz/\strongly\Lipschitz}
\right)  \sqrt{\frac{\kk^5\maxmm^6 \log\maxmm  }{T}}
.
\end{equation}

\item \label{cr:randomtrMXLMOii}
In probability, given a small enough tolerance $\errortolerance>0$ and a confidence level $1- \errorproba\in(0,1)$, we have:
\begin{equation}
\proba{ 
\optrate -\rate{ \meanmatn{\ceil T}}
\leq 
\errortolerance }
 \geq 1-\errorproba
 .
 \end{equation}
\end{enumerate}
\end{corollary}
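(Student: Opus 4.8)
The plan is to obtain \cref{convergenceratres:randomtrMXLMO} directly from the general rate estimates of \cref{timecomplexity:randomtrMXLMO}, by specializing to the uniform participation profile $\probk{\nbk}=1/\kk$ and then inserting the tuned parameters listed in \cref{table:convergenceratres:randomtrMXLMO}. The first step is to evaluate the two problem-dependent quantities that drive those estimates. Under uniform participation the leading (asynchronicity) term of $\Bcoefficientprob{\prob}{\stepsize}{\delta}$ reads
\begin{equation}
\sum\nolimits_{\nbk=1}^{\kk} \frac{\log\mmk{\nbk}}{\probk{\nbk}\stepsize}
	= \kk \sum\nolimits_{\nbk=1}^{\kk} \frac{\log\mmk{\nbk}}{\stepsize}
	\leq \frac{\kk^{2}\log\maxmm}{\stepsize},
\end{equation}
which is exactly a factor $\kk$ larger than its synchronous counterpart in \cref{timecomplexity:trMXLMO}; likewise, the dispersion coefficient evaluates to $\probdist{\prob} = \kk^{-1}\sum_{\nbk}\max(1,\probk{\nbk}^{-1}-1) = \kk-1$ for $\kk\geq 2$. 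These are the only two places in which the choice $\probk{\nbk}=1/\kk$ intervenes, so the rest of the argument mirrors the proof of \cref{convergenceratres:trMXLMO} with $\kk\log\maxmm/\stepsize$ upgraded to $\kk^{2}\log\maxmm/\stepsize$ and $\probdist{\prob}=\kk-1$ substituted into $\Ccoefficientprob{\prob}{\stepsize}{\delta}$.

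For part~\ref{cr:randomtrMXLMOi} I would substitute the step-size and query-radius values $\stepsize,\delta$ of \cref{table:convergenceratres:randomtrMXLMO} into the expectation bound \eqref{randomtroptrateboundmeanexpectation}. The value of $\delta$ is chosen to equalize the first two terms of $\Bcoefficientprob{\prob}{\stepsize}{\delta}$ (both of order $\kk^{2}\log\maxmm$ after substitution), while $\stepsize$ is set so that the ratio $\stepsize/\delta$ stays a fixed fraction below the critical value $2/(\dimension\Lipschitz\kk)$ appearing in the third term's denominator, keeping that term of the same order. Summing the three contributions, using $\dimension=\maxmm^{2}-1=\magnitude{\maxmm^{2}}$, and invoking the stated lower bound on $T$ collapses $\Bcoefficientprob{\prob}{\stepsize}{\delta}/\sqrt{T}$ to the advertised $2\Lipschitz\bigl(1+\sqrt{\meanlipschitz/\Lipschitz}\bigr)\sqrt{\kk^{5}\maxmm^{6}\log\maxmm/T}$; the additional factor of $\kk$ relative to the synchronous bound $\sqrt{\kk^{4}\maxmm^{6}\log\maxmm/T}$ of \cref{convergenceratres:trMXLMO} is precisely the $\kk^{2}$-versus-$\kk$ inflation of the first term identified above.

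For part~\ref{cr:randomtrMXLMOii} I would start from the concentration bound \eqref{randomtroptrateboundmeaninterval} and take the deviation parameter $\zmargin$ to be a fixed fraction of the target tolerance, with $T$ large enough that the bias term $\Bcoefficientprob{\prob}{\stepsize}{\delta}/\sqrt{T}$ absorbs the complementary fraction; the event $\{\optrate-\rate{\meanmatn{T}}\le\errortolerance\}$ then contains the complement of the deviation event in \eqref{randomtroptrateboundmeaninterval}. Imposing $\exp(-\zmargin^{2}T/\Ccoefficientprob{\prob}{\stepsize}{\delta})\le\errorproba$, that is $\zmargin^{2}T\ge\Ccoefficientprob{\prob}{\stepsize}{\delta}\,\logarithm{1/\errorproba}$, and evaluating $\Ccoefficientprob{\prob}{\stepsize}{\delta}$ with $\probdist{\prob}=\kk-1$ and the tabulated $\stepsize,\delta$, I would solve the resulting inequality for $T$. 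Since $\stepsize$ and $\delta$ are themselves tuned as functions of $\errorproba$ so as to minimize the sample count, this produces $T=\Bigtheta{\logarithm{1/\errorproba}\,\kk^{6}\maxmm^{6}\log\maxmm/\errortolerance^{2}}$ (using $\dimension^{3}=\magnitude{\maxmm^{6}}$), for which \eqref{randomtroptrateboundmeaninterval} forces the excess over $\errortolerance$ to occur with probability at most $\errorproba$.

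The main obstacle is the joint tuning of $\stepsize$ and $\delta$ under the feasibility constraint \eqref{firstcondition}: both $\Bcoefficientprob{\prob}{\stepsize}{\delta}$ and, more severely, $\Ccoefficientprob{\prob}{\stepsize}{\delta}$ diverge as the ratio $\stepsize/\delta$ approaches the critical value $2/(\dimension\Lipschitz\kk)$ in their denominators. One must therefore pin $\stepsize/\delta$ to a constant fraction of that ratio and then balance all three additive terms so that they scale identically in $\kk$, $\maxmm$, and $T$. The auxiliary normalizers $\hat\psi(\errorproba)$ and $\hat\chi(\errorproba)$ of \cref{table:convergenceratres:randomtrMXLMO} encode exactly this tuning, and their dependence on $\errorproba$ is what couples the high-probability tolerance to the step-size through the concentration term $\Ccoefficientprob{\prob}{\stepsize}{\delta}$. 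Checking that the tabulated values genuinely satisfy \eqref{firstcondition} and reduce to the stated closed forms is the bulk of the remaining (routine but lengthy) algebra.
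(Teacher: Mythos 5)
Your proposal matches the paper's proof: the paper likewise reduces to \cref{timecomplexity:randomtrMXLMO} by noting that $\probdist{\prob}=\kk-1$ when $\probk{1}=\dots=\probk{\kk}=1/\kk$ (with the first term of $\Bcoefficientprob{\prob}{\stepsize}{\delta}$ becoming $\kk^{2}\log\maxmm/\stepsize$), and then repeats the argument of \cref{convergenceratres:trMXLMO} verbatim with $\Bcoefficientprob{\prob}{\stepsize}{\delta}$ and $\Ccoefficientprob{\prob}{\stepsize}{\delta}$ in place of their synchronous counterparts. The only difference in execution is that the tabulated $(\stepsize,\delta,T)$ are obtained not by heuristically pinning $\stepsize/\delta$ at a fixed fraction of the critical ratio and balancing terms, but by the exact closed-form minimization of \cref{lemma:optimization}, applied to $\Bcoefficientprob{\prob}{\stepsize}{\delta}$ for part~(a) and to $\Bcoefficientprob{\prob}{\stepsize}{\delta}+\zmargin\sqrt{T}$ with $\zmargin=\sqrt{\Ccoefficientprob{\prob}{\stepsize}{\delta}\logarithm{1/\errorproba}/T}$ for part~(b), which is what yields the exact constants in the table and the accompanying restriction on $\errortolerance$.
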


\begin{remark}[Coordinate descent]\label{remark:CD}
The case $\probasU{\{1\}}=\dots=\probasU{\{\kk\}} = 1/\kk$ where a single user is active at each time step with probability $\probk{\nbk} = 1/\kk$ covers the alternated optimization scheme known as \ac{CD}\textemdash the coordinates in this context refer to the wireless users.
In this regard, Corollary~\ref{convergenceratres:randomtrMXLMO} provides us with a quantification of the impact of alternation on the convergence speed of \ac{MXL+}.
Looking for instance at  Corollaries \ref{convergenceratres:trMXLMO}(\ref{cr:trMXLMOi})   and \ref{convergenceratres:randomtrMXLMO}(\ref{cr:randomtrMXLMOi}), we observe that the expected convergence of the time average, if regarded as a function of the total number $n$ of user updates, is %
$\magnitude{\sqrt{{\kk^5\maxmm^6 \log\maxmm  }/{n}}}$ 
both for the synchronized algorithm \ac{MXL+} and for \ac{CD}.
The impact of the network size $\kk$ on the number of user updates needed for $\errortolerance$-convergence with probability $1-\errorproba$, however, is more pronounced by an order of magnitude for \ac{CD}, $\Bigtheta{{\logarithm{1/ \errorproba}\kk^6 \maxmm^6  \log\maxmm}/{\errortolerance^{2} }}$, than it is for \ac{MXL+}, $ \Bigtheta{ {\logarithm{ 1/\errorproba}\kk^5 \maxmm^6  \log\maxmm}/{\errortolerance^{2} }} $.
\end{remark}

\section{Numerical Experiments}
\label{section:experiments}

\newcommand\KK{20}%
\newcommand\NN{16}%
\newcommand\MM{3}%

In this section, we perform a series of experiments to validate our results in realistic network conditions.
Throughout what follows, and unless specified otherwise, our numerical experiments are performed in a simulated wireless network setup with parameters as summarized in \cref{tab:params}. 
In more detail, we consider a cellular wireless network occupying a central frequency of $f_{c} = 2.5\,\mathrm{GHz}$ and a total bandwidth of $10\,\mathrm{MHz}$.
\revise{Signal propagation in the wireless medium is modeled following the widely utilized COST 2100 channel model for moderately dense urban environments \cite{COST2100}.
This is a geometry-based stochastic extension of the original COST Hata model \cite{COST99} which has been designed to reproduce the stochastic properties of \ac{MIMO} channels over the frequency, space and time domains.
As such, even though it is not 5G-specific, the COST 2100 model is generic and flexible, making it suitable to model a broad range of multi-user or distributed \ac{MIMO} scenarios \cite{COST2100}.}

Network coverage is provided by a \ac{BS} with an effective service radius of $1\,\mathrm{km}$ (for the wider network in play, we consider a hexagonal cell coverage structure).
The \ac{BS} serves the uplink of $\kk$ wireless transmitters that are positioned uniformly at random within the coverage area following a homogeneous Poisson point process.
All communications occur over a \ac{TDD} transmission scheme with an asynchronous frame duration of $T_{f} = 5\,\mathrm{ms}$.
Finally, in line with state-of-the-art mobile and portable device specifications, transmitting devices are assumed to have a maximum transmit power of $33\,\mathrm{dBm}$.


\begin{table}[t]
\centering
\renewcommand{\arraystretch}{1.1}

\begin{tabular}{|c|c|}
\hline
\textbf{Parameter}
	&\textbf{Value}
	\\
	\hline
Time frame duration
	&$5\,\mathrm{ms}$
\\
\hline
\beginrev
\ac{MIMO} channel model
	&\beginrev
	COST 2100 \cite{COST2100}
\\
\hline
BS/MS antenna height
	&$32\,\mathrm{m}$ / $1.5\,\mathrm{m}$
\\
\hline
Central frequency
	&$2.5\,\mathrm{GHz}$
\\
\hline
Total bandwidth
	&$11.2\,\mathrm{MHz}$
\\
\hline
Spectral noise density ($20\,^{\circ}\textrm{C}$)
	&$-174\,\mathrm{dBm}/\mathrm{Hz}$
\\
\hline
Maximum transmit power
	&$P = 33\,\mathrm{dBm}$
\\
\hline
Transmit antennas per device
	&$\mm\in\{2,4,8\}$
\\
\hline
Receive antennas
	&$\nn=128$
\\
\hline
\end{tabular}
\caption{Wireless network simulation parameters.}
\label{tab:params}
\vspace{-2ex}
\end{table}


\subsection{Comparison with \acl{WF} methods}

We begin by examining the performance of \ac{MXL}-type methods relative to conventional \acl{WF} schemes.
To provide a broad basis for this comparison, we focus on two complementing scenarios:
\begin{enumerate*}
[(\itshape a\upshape)]
\item
the \emph{full feedback} case,
\ie when transmitters are assumed to know their individual channel matrices $\chmatk{\nbk}$ and the induced signal-plus-noise covariance matrix $\contrzmat$;
and
\item
the \emph{limited feedback} case,
\ie when transmitters only observe their realized utility (\ie their sum rate).
\end{enumerate*}
For the purposes of our experiments,
\revise{and in line with other recent works on large antenna arrays \cite{LETM14,SBH19,HtBD13,RPL+13},}
we consider a system with $\kk=60$ users, each with $2$, $4$ or $8$ transmit antennas, and a \ac{BS} with $\nn=128$ receive antennas;
all other network parameters are as in \cref{tab:params}.


\begin{figure*}[t]
\begin{subfigure}[b]{.475\linewidth}
\includegraphics[width=\textwidth]{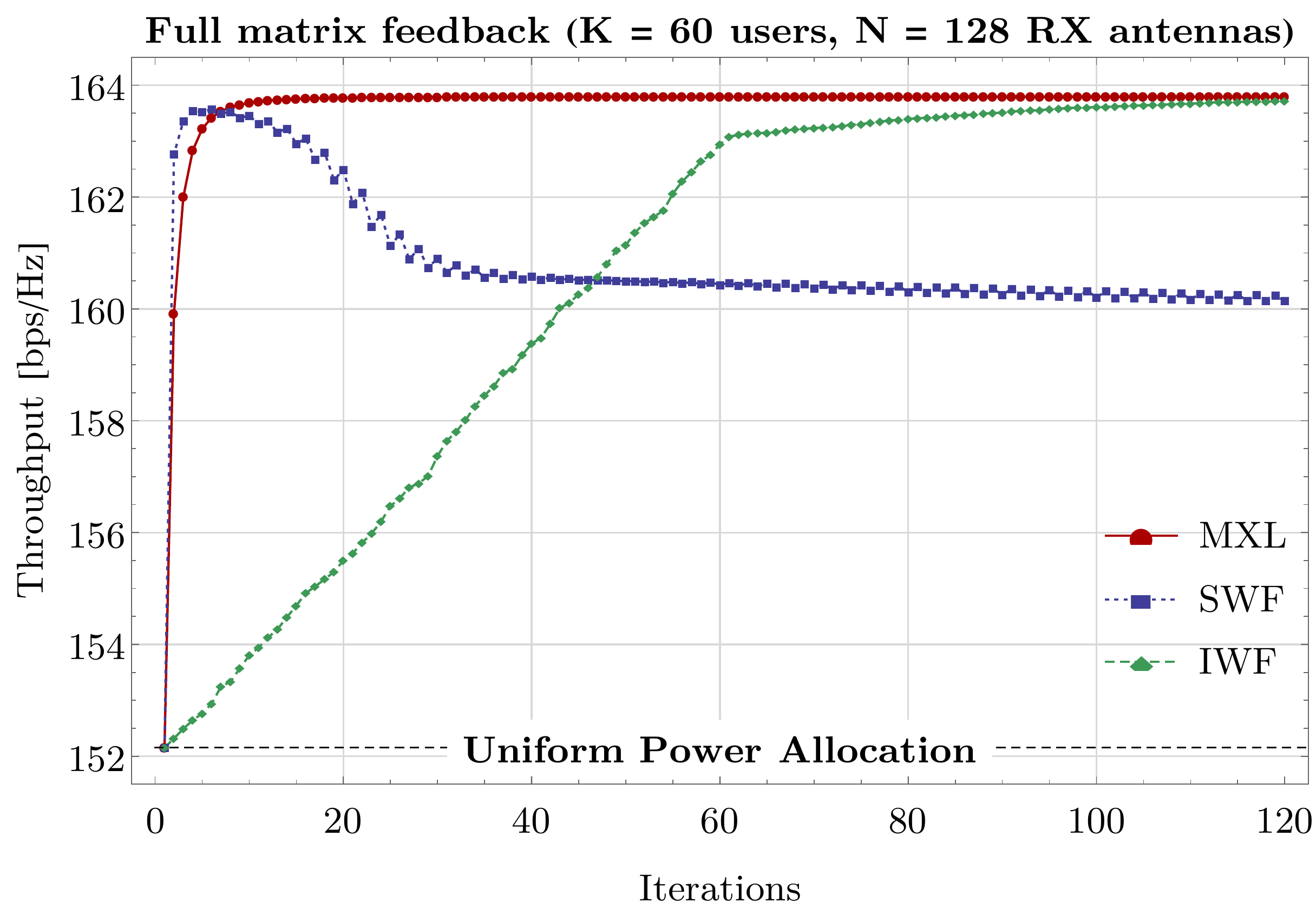}
\caption{Full matrix feedback}
\label{fig:MXL-WF-full}
\end{subfigure}
\hfill
\begin{subfigure}[b]{.48\linewidth}
\includegraphics[width=\textwidth]{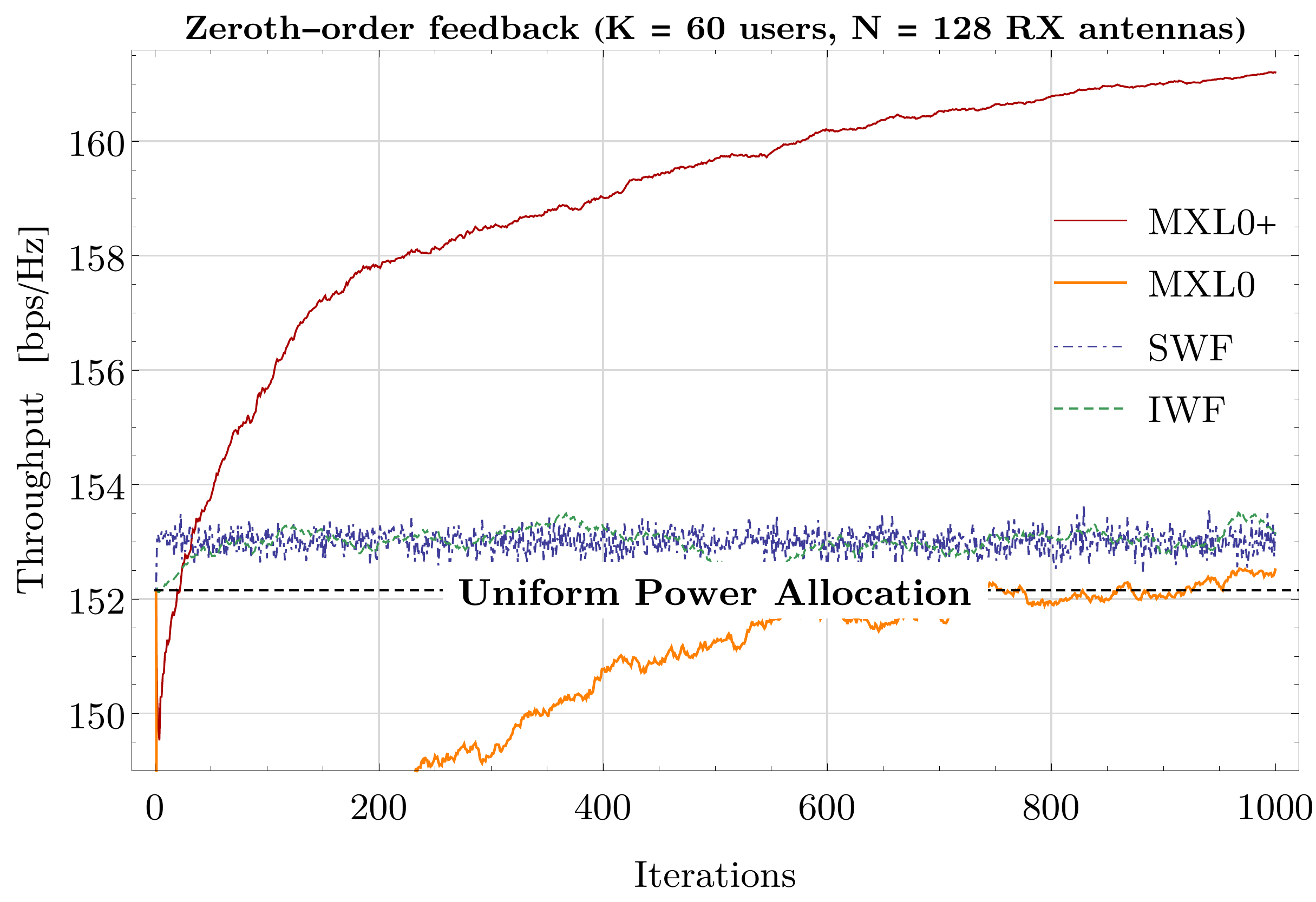}
\caption{Limited, zeroth-order feedback}
\label{fig:MXL-WF-zeroth}
\end{subfigure}
\caption{%
Comparison between \acl{WF} and \acl{MXL} in a wireless network with $\kk=60$ users and $\nn=128$ receive antennas.
In the full feedback case (left), the transmitters are running \ac{MXL} against \ac{IWF}/\ac{SWF} with full matrix information (perfect knowledge of effective channel matrices, system-wide signal-plus-noise covariance matrix, etc.).
In the zeroth-order case (right), the transmitters only have access to their realized utility (the achieved throughput) and are running \ac{MXL0} and \ac{MXL+} against \ac{IWF}/\ac{SWF} with one-shot pilot estimates of the required matrix information.
In both instances, \ac{MXL}/\ac{MXL+} exhibits consistent \textendash\ and significant \textendash\ performance gains over \ac{WF} methods.}
\vspace{-2ex}
\label{fig:MXL-WF}
\end{figure*}


In the first case (full matrix feedback), we simulated the iterative and simultaneous variants of \acl{WF} against the \ac{MXL} algorithm as presented in \cref{subsec:MXL}.
The iterative \ac{WF} variant converges to an optimum solution;
however, because user updates need to be taken in a sequential, round-robin fashion, the algorithm's convergence speed is inversely proportional to the number of users in the system, and hence quite slow.
On the other hand, the simultaneous \ac{WF} variant achieves significant performance gains within the first few iterations, but because it has no way of mitigating conflicting user updates, these gains subsequently evaporate and the algorithm converges to a suboptimal state.
By comparison, the \ac{MXL} algorithm achieves convergence to an optimal state within a few iterations, without suffering from the slow convergence speed of the iterative \ac{WF} algorithm or the oscillatory behavior of its simultaneous counterpart.
The results of these simulations are plotted in \cref{fig:MXL-WF-full}.

Moving forward, to establish a fair \revise{comparison} in the limited feedback case, we consider a \revise{baseline} setting where, at each transmission frame $\itr = 1,2\dotsc$, each user has access to one-point pilot estimates of their effective channel matrix $\tilde{\mathbf{H}}_{k} = \chmatk{\nbk} \contrzmatk{\nbk}^{-1/2}$ (\eg via randomized directional sampling) \cite{LBON16}.
Since $\contrzmatk{\nbk} \equiv \contrzmatk{\nbk}(\matn{\itr})$ evolves over time (because of the signal covariance modulation $\matn{\itr}$ of all other users in the network), these measurements must be repeated over time;
otherwise, knowledge of $\chmatk{\nbk}$ alone would not suffice to run \acl{WF} in a multi-user environment.
By comparison, for the \ac{MXL+} algorithm, we only assume that users observe their realized throughput as described in detail in \cref{section:MXL0plus}.

The results of our simulations are plotted in \cref{fig:MXL-WF-zeroth}.
Because \acl{WF} methods require perfect knowledge of $\tilde{\mathbf{H}}_{k}$ at each transmission frame, the imperfections introduced by one-point pilot contamination effects cause a complete breakdown of the algorithm's convergence.
In particular, both iterative and simultaneous variants fail to exhibit any significant performance gains over a uniform (isotropic) input signal covariance profile.
The performance of \ac{MXL0} is underwhelming in the first iterations (due to exploration), but it improves steadily over time;
however, this improvement is very slow over the simulation window.
On the other hand, the callback mechanism of \ac{MXL+} achieves dramatically better results, even with one-shot, zeroth-order feedback.

In terms of per-iteration computational complexity, \cref{fig:runtime} compares the wall-clock runtime of an iteration of each algorithm (\ac{IWF}, \ac{SWF} and \ac{MXL+}).
All computations were performed in a commercial laptop with 16 GB RAM and a 2.6 GHz 6-core Intel i7 CPU;
for statistical significance, they were averaged over $S = 1000$ sample runs.
Network parameters were as above, except for the number of receive antennas which was taken in the range $\{4, \hdots, 64\}$ to assess scalability.
For small values of $\nn$, \ac{IWF} has the fastest runtime per iteration because only one user updates per iteration and the inversion of the \ac{MUI} matrix at the receiver is relatively fast.
However, for larger values of $\nn$, this advantage evaporates and \ac{MXL+} becomes the fastest because the \ac{SPSA} estimator is sparse, so the resulting matrix operations are the lightest.
This provides an additional layer to the results of \cref{fig:MXL-WF}:
even though \ac{IWF}/\ac{SWF} methods fail to produce any measurable performance gains in limited feedback environments,
\ac{MXL+} remains optimal and achieves considerably better throughput values, all with a lighter per-iteration runtime. 



\begin{figure}[t]
\centering
\includegraphics[width=.95\columnwidth]{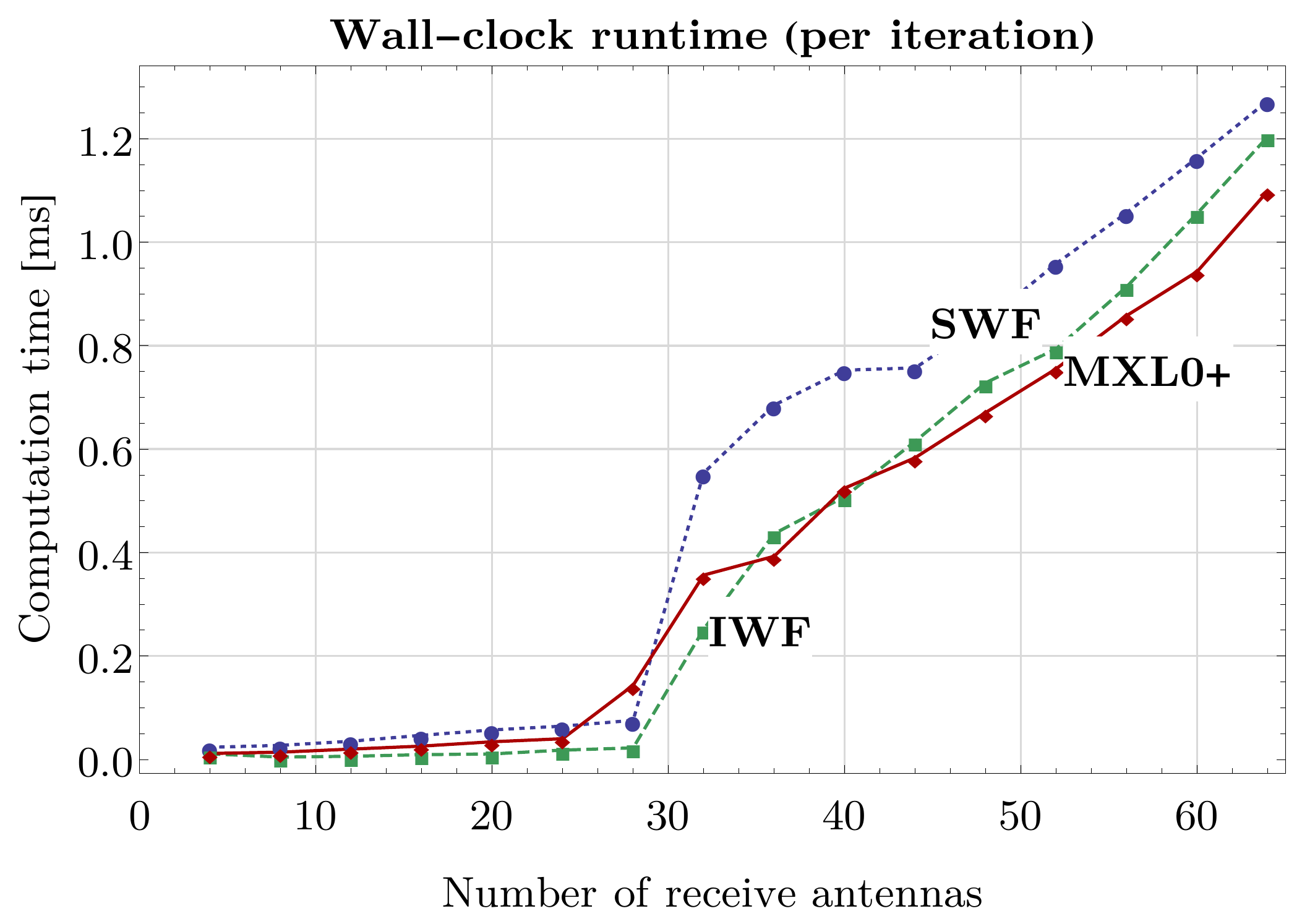}
\caption{Per-iteration runtime of the \ac{IWF}, \ac{SWF} and \ac{MXL+} algorithms as a function of the number of antennas at the receiver (lower is better).}
\vspace{-2ex}
\label{fig:runtime}
\end{figure}


\endedit

\subsection{Convergence speed analysis}

For completeness, we also examine below the convergence speed of the different \ac{MXL} methods with limited, zeroth-order feedback.
The results of our experiments are reported in \cref{fig:loglog} where we plot the users' relative distance to optimality in a log-log scale under
the three gradient-free algorithms discussed in the previous sections, \ac{MXL0}, \ac{MXL+} and \ac{AMXL+} (\cref{alg:trMXL,alg:trMXLMO,alg:randomtrMXLMO} respectively, the third in the coordinate descent form \ac{CD} discussed in \cref{remark:CD}).
We plot the relative ratio $\rho = (\ratefunction^{\ast} - \ratefunction(\meanmatn{\itr}))/(\ratefunction^{\ast} - \ratefunction_{1})$, so $\rho=1$ corresponds to the initialization of each algorithm while $\rho=0$ corresponds to optimality.
All algorithms were run with constant step size and query radius in a system with $\kk=\KK$ users.
Despite the severe feedback limitations, we see that \anyMXLzeroplus{}
rapidly closes the initial optimality gap (in line with \cref{fig:MXL-WF-zeroth}).

A close inspection of the slopes of the various curves on the log-log graph further reveals the $\magnitude{1/\sqrt[4]{t}}$ complexity of \ac{MXL0} and the $\magnitude{1/\sqrt{t}}$ complexity of \anyMXLzeroplus{},
in full accordance with \cref{thm:trMXLiinew,timecomplexity:trMXLMO,timecomplexity:randomtrMXLMO}.
The $\log\kk$ shift between \ac{CD} and \ac{MXL+} predicted in \cref{remark:CD} can also be clearly observed.

\begin{figure}[t] 
\centering
\includegraphics%
[width=\columnwidth]%
{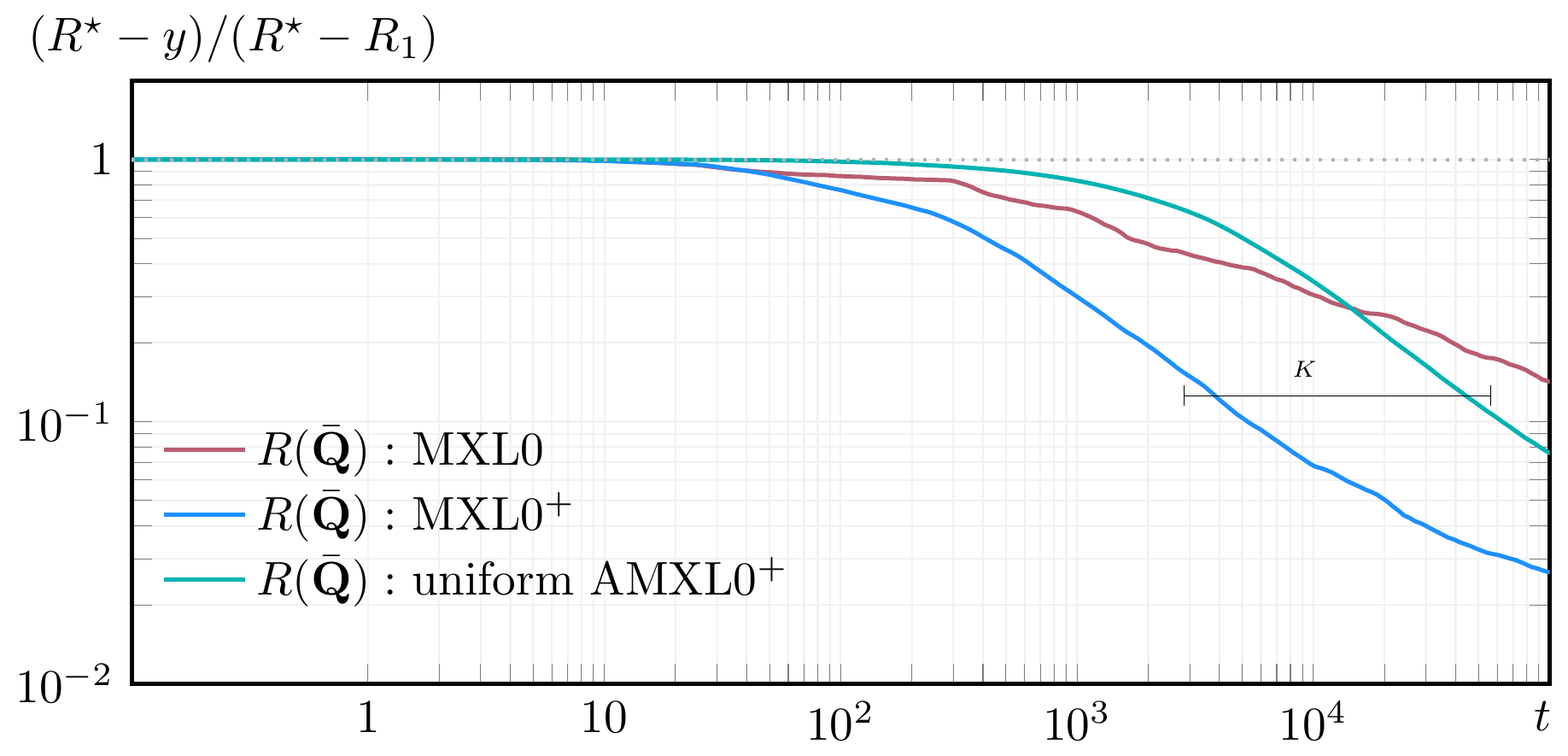}
\caption{
Convergence speed of the proposed methods ($\nn=\NN$, $\kk=\KK$).
The callback in \ac{MXL+} greatly improves performance over \ac{MXL0}.
}
\label{fig:loglog}
\vspace{-2ex}
\end{figure}%
%
%

\renewcommand\KK{50}%
\renewcommand\NN{128}%
\renewcommand\MM{3}%
\begin{figure}[t] 
\centering
\includegraphics%
[height=4.3cm]%
{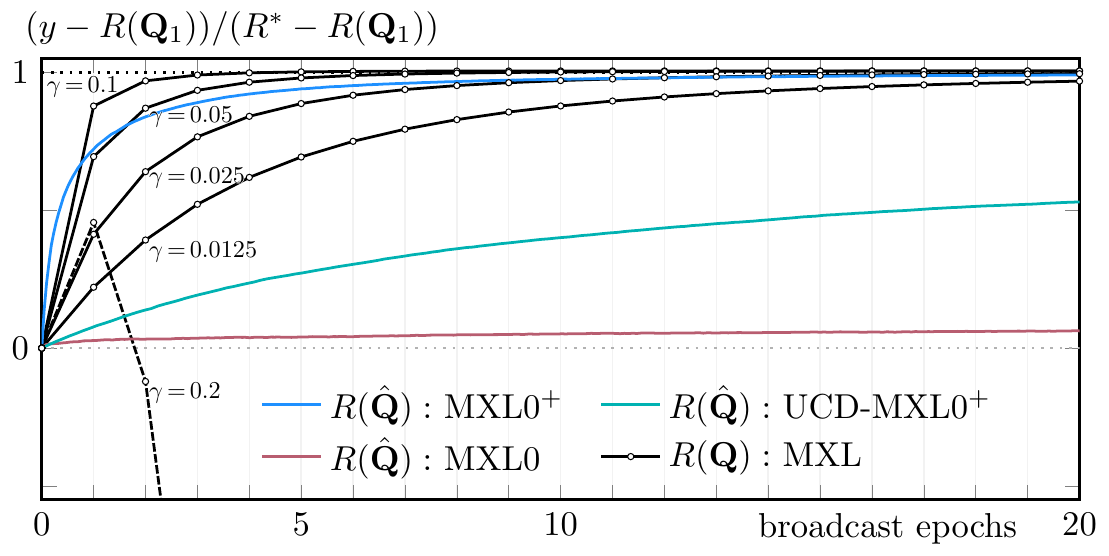}
\caption{
Overhead of the algorithms under study in a large network ($\kk=\KK$, $\nn=128$).
When normalized for overhead, \ac{MXL+} matches
the performance of finely tuned gradient-based methods.}
\label{fig:large}
\vspace{-2ex}
\end{figure}%

Finally, \cref{fig:large} provides a normalized comparison to gradient-based methods
in a network with $\nn=\NN$ receive antennas and $\kk=\KK$ users.
Here, access to full matrix feedback would require $\nn^{2} = 32\,\mathrm{MB}$ of $16$-bit data per frame;
in view of this, we examine instead the algorithms' convergence speed in terms of the feedback epochs required for convergence.
For benchmarking purposes,
we ran \ac{MXL} with a constant step-size (the most principled choice given the smoothness of $\ratefunction$).
Quite remarkably, we see that \ac{MXL+}
remains competitive with\textemdash and even outperforms!\textemdash the fastest implementations of \ac{MXL}.
On the other hand, \ac{CD}  was approximately $\kk=\KK$ times slower than \ac{MXL+}, while \ac{MXL0} was essentially non-convergent.

\section{Discussion}
\label{section:discussion}

In this paper, we proposed a series of online optimization schemes for \emph{distributed, feedback-limited} multi-user \ac{MIMO} systems that circumvent the need for matrix feedback (perfect, noisy, or otherwise).
Gradient estimation methods based on conventional \acf{SPSA} techniques lead to an $\magnitude{1/T^{1/4}}$ convergence rate, which is catastrophically slow for large \ac{MIMO} systems.
To overcome this deficiency, we introduced an acceleration mechanism which achieves an $\magnitude{1/T^{1/2}}$ convergence rate through the reuse of previous throughput measurements.
In this regard, the proposed \acs{MXL+} algorithm enjoys the best of many worlds:
it achieves convergence with minimal feedback requirements (a single scalar), it matches the convergence speed of conventional methods that require full mtrix feedback,
all the while remaining simple in principle and easy to implement.

Although we focused on the throughput maximization problem in the single-cell MIMO multiple-access channel, our proposed algorithms can also be applied to multi-cell networks operating in orthogonal frequency bands so that the inter-cell interference is canceled;
the sum rate in each cell can be optimized separately and independently without any loss of global optimality.
In dense small-cell networks, in which the interference cannot be canceled this way, the network sum-rate optimization problem is a known difficult non-convex problem \cite{CADC08,SRLH11}. A possible workaround is to consider autonomous small-cells that aim at maximizing their own sum rate (similar in spirit to \eqref{eq:trproblemk} in \cref{subsec:SUD}), which leads to a concave non-cooperative game.
In our previous work \cite{MBNS17}, we showed that the original \ac{MXL} converges to the Nash equilibrium solution of such games under milder assumptions compared to iterative water-filling;
studying the performance of our gradient-free algorithms \ac{MXL0} and \ac{MXL+} in such settings is an interesting and non-trivial extension of the present work.

Moving beyond throughput maximization, the gradient-free methodology presented in this work can also be tailored to a wide range of resource allocation problems that arise in signal processing and wireless communications (from power control to energy efficiency).
For example, by using the Charnes-Cooper transformation to turn non-convex fractional optimization problems into convex ones \cite{MB16}, the material developed in this paper can be applied to the core problem of energy-efficiency maximization problem in multi-user \ac{MIMO} systems.
These applications, which are deferred to future work, highlight the potential of the gradient-free algorithms derived here.

\beginrev
Finally, in terms of practical implementation, we should note that our analysis provides precise computational complexity and runtime bounds;
however, it does not address the processing power expenditure on ``off-the-shelf'' wireless devices.
Investigating this aspect of the proposed methods is a very fruitful research direction which we intend to address in future work.
\endedit


\numberwithin{equation}{subsection}
\numberwithin{lemma}{subsection}
\numberwithin{table}{subsection}

\appendix[Technical Proofs]

\subsection{Matrix exponential learning as a dual averaging scheme}
\label{appendix:DA}

In our developments, the space of the covariance matrices of each user is equipped with the nuclear norm, given for any Hermitian matrix $\mat$ by 
$ 
\norm{\mat} = \trace{\sqrt{\mat\mat}} 
$, 
and equivalent to the $L_1$-norm of the vector of the eigenvalues of $\mat$. 
%
%
The dual of the nuclear norm, 
$
\dualnorm{\mat}= \max_{\altmat} \{ \trace{\mat\altmat} \setst \norm{\altmat}\leq 1 \}  
$, reduces to the $L_\infty$-norm of the vector of eigenvalues.
For every $m\times m$ Hermitian matrix $\mat$, one has
\begin{equation}\label{normsMO}
 \dualnorm{\mat} 
\leq 
\frobeniusnorm{\mat} 
\leq 
\norm{\mat}  
\leq
\sqrt{m} \frobeniusnorm{\mat} 
 \leq 
 m  \dualnorm{\mat} , 
\end{equation} 
where $\frobeniusnorm{\mat}=\sqrt{\trace{\mat\mat}}$ 
denotes the (Frobenius) $L^2$-norm of $\mat$.
From the  global perspective of matrix arrangements $\mat=\vect{\matk{1},...,\matk{\kk}}$---now regarded as block diagonal covariance matrices---, the trace norm and its dual naturally extend as
\begin{equation} \textstyle
\label{globalnorm}
\norm{\mat} = \sum\nolimits_{\nbk=1}^{\kk} \norm{\matk{\nbk}} ,
\qquad 
\dualnorm{\mat} = \max\limits_{\nbk\in\{1,\dots,\kk\}} \dualnorm{\matk{\nbk}}.
\end{equation}
%


We now derive the matrix exponential learning step and some properties of it.
To this end, we place ourselves in the compact set $\compactset=\{\mat\in\Herm{\mm}:\trace\mat=1, \mat\succeq 0\}$ of the $\mm$-dimensional positive semidefinite Hermitian matrices with unit trace---the parameter $\mm$ stands for the number of antennas of any of the $\kk$ users. 
 Let the inner product $\inner{\dualmat}{\mat}=\trace{\dualmat\mat}$ denote the value at $\mat\in\trMat$ of the linear function induced by $\dualmat\in\trZmat$, where $\trZmat=\{\trzmat\in\Herm{\mm}\setst\trace{\trzmat}=0\}$ 
is tangent to~$\compactset$. 
For any differentiable function~$\genericfunction$ on~$\Herm{\mm}$, we denote by $\projgrad\genericfunction:\compactset \mapsto\trZmat$ the \emph{orthogonal projection} of the gradient~$\grad\genericfunction$ on the tangent space~$\trZmat$, given  by 
 $\projgrad \genericfunction = \grad \genericfunction - \trace{\grad \genericfunction}\, \identity$.  

\begin{lemma} \label{lemma:reg}
\

\begin{enumerate}[(i),ref=\roman*,wide,labelwidth=!,labelindent=5pt]
\item \label{lemma:regi}
The regularization function\footnote{We use here the convention $0\log0=0$.}
    $
\reg{\mat}=\trace{\mat \log\mat}
    $ 
is $\forcestrongly$-strongly convex over $\compactset$ with respect to $\norm\cdot$.
\item \label{lemma:regii}
The conjugate of $\regfunction$,  $\conjregfunction : \trZmat\mapsto\Real$, defined by
\begin{equation}\label{conjregfunction}
\textstyle
\conjreg{\dualmat} =\max_{\mat\in \compactset} \{ \inner{\dualmat}{\mat} - \reg{\mat}\},
\end{equation}  
 is differentiable with gradient $\grad\conjregfunction = \matsolfunction$, where $\matsolfunction$ is the exponential learning mapping defined by
\begin{equation} \label{trmatsol}
\matsol{\dualmat} =\frac{ \exponential{ \dualmat } }{\trace{\exponential{ \dualmat } }} 
.
\end{equation}
%
\item \label{lemma:regiii}
For $\mat\in\compactset$ and $\dualmat\in\trZmat$, 
\begin{equation} \label{reciprocity}
\mat = \matsol{\dualmat}  \Leftrightarrow 
\dualmat = \projgrad\reg{\mat} .
\end{equation}
\item \label{lemma:regiv}
$\conjregfunction$ is $\forcestrongly$-smooth with respect to the dual norm $\dualnorm\cdot$.
%
\end{enumerate}
\end{lemma}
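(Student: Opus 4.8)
The plan is to treat the four claims in dependency order rather than sequentially: parts \ref{lemma:regi} and \ref{lemma:regiv} are two faces of the same coin (strong convexity of $\regfunction$ versus smoothness of $\conjregfunction$), while parts \ref{lemma:regii}--\ref{lemma:regiii} are explicit computations anchored on the Gibbs-state maximizer. I would establish \ref{lemma:regi} first, derive \ref{lemma:regii} and \ref{lemma:regiii} by direct calculation, and finally obtain \ref{lemma:regiv} as a corollary of \ref{lemma:regi} through the standard strong-convexity/smoothness duality.

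For \ref{lemma:regi}, the key observation is that the Bregman divergence of $\regfunction$ collapses to the quantum relative entropy. Writing $\grad\reg{\mat} = \log\mat + \identity$ and using that every $\mat,\altmat\in\compactset$ has unit trace (so the $\identity$ term drops out of the first-order remainder), one finds for $\mat,\altmat\in\compactset$ that
\begin{equation}
\reg{\altmat} - \reg{\mat} - \inner{\grad\reg{\mat}}{\altmat-\mat}
    = \trace{\altmat(\log\altmat - \log\mat)}.
\end{equation}
Thus $\forcestrongly$-strong convexity with respect to $\norm\cdot$ is \emph{equivalent} to the quantum Pinsker inequality $\trace{\altmat(\log\altmat-\log\mat)} \geq \tfrac{1}{2}\norm{\altmat-\mat}^{2}$. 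I would prove this via the data-processing inequality for quantum relative entropy: pinching $\altmat$ and $\mat$ through the projective measurement in the eigenbasis of $\altmat-\mat$ turns the left-hand side into a \emph{classical} relative entropy between the two induced probability vectors while preserving the $L^{1}$-mass of $\altmat-\mat$ (which equals $\norm{\altmat-\mat}$, since the nuclear norm is the $\ell_{1}$-norm of the eigenvalues). The classical Pinsker inequality then closes the estimate. This reduction is the main obstacle: the delicate point is choosing a measurement that simultaneously bounds the relative entropy from below and recovers the \emph{full} trace norm. A direct Hessian estimate through the integral representation $\grad^{2}\reg{\mat}[\mathbf{V}] = \int_{0}^{\infty} (\mat+t\identity)^{-1}\mathbf{V}(\mat+t\identity)^{-1}\,dt$ is an alternative route, but it leads to more cumbersome operator inequalities.

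For \ref{lemma:regii} and \ref{lemma:regiii}, I would solve the concave maximization in \eqref{conjregfunction} by Lagrangian stationarity: introducing a multiplier $\multiplier$ for the constraint $\trace\mat=1$ and differentiating $\inner{\dualmat}{\mat}-\reg{\mat}$ gives $\dualmat - (\log\mat + \identity) - \multiplier\identity = \zeromat$, whence $\mat = \matsol{\dualmat} = \exponential{\dualmat}/\trace{\exponential{\dualmat}}$ once $\multiplier$ is fixed by the trace constraint; positivity of this $\mat$ shows the constraint $\mat\succeq 0$ is inactive. Since $\regfunction$ is strongly convex by \ref{lemma:regi}, the maximizer is unique and depends continuously on $\dualmat$, so Danskin's theorem yields differentiability of $\conjregfunction$ with $\grad\conjreg{\dualmat} = \matsol{\dualmat}$, which is \ref{lemma:regii}. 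Claim \ref{lemma:regiii} is then a substitution check: feeding $\mat = \matsol{\dualmat}$ into the projected gradient $\projgrad\reg{\mat}$ and using $\log\matsol{\dualmat} = \dualmat - \log\trace{\exponential{\dualmat}}\,\identity$, the fact that $\dualmat$ is trace-zero together with the orthogonal projection onto $\trZmat$ annihilating every multiple of $\identity$ returns exactly $\dualmat$; the converse implication follows because $\matsolfunction = \grad\conjregfunction$ and $\projgrad\regfunction$ are mutually inverse bijections between $\interior{\compactset}$ and $\trZmat$.

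Finally, \ref{lemma:regiv} follows from \ref{lemma:regi} by the classical duality theorem stating that a function is $K$-strongly convex with respect to a norm if and only if its convex conjugate is $(1/K)$-smooth with respect to the dual norm. Applying this with $K=\forcestrongly$ and the norm pairing $(\norm\cdot,\dualnorm\cdot)$ of \eqref{globalnorm} gives that $\conjregfunction$ is $\forcestrongly$-smooth with respect to $\dualnorm\cdot$, as claimed. The only care needed here is that the conjugate is taken over the bounded set $\compactset$ while the dual variable ranges over $\trZmat$, which is precisely the restricted setting covered by the duality result.
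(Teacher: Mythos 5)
Your proposal is correct and, for parts (ii)--(iv), follows essentially the same route as the paper: the maximizer in \eqref{conjregfunction} is identified via Lagrangian stationarity together with Danskin's theorem, part (iii) is the standard conjugacy/subgradient identification (the paper phrases it through the subdifferential, you through an explicit substitution plus inverse-bijection argument, but these are the same computation), and part (iv) is the strong-convexity/smoothness duality, which the paper attributes to \cite{kakade12} but also writes out as a chain of inequalities essentially identical to the one you invoke. The only genuine divergence is part (i): the paper does not prove the strong convexity of $\regfunction$ at all\textemdash it simply cites \cite{yu13}\textemdash whereas you supply a proof, reducing the Bregman divergence of $\regfunction$ to the quantum relative entropy and deducing the bound $\trace{\altmat(\log\altmat-\log\mat)}\geq\tfrac12\norm{\altmat-\mat}^{2}$ from quantum Pinsker via data processing through the two-outcome measurement onto the positive/negative eigenspaces of $\altmat-\mat$ (which preserves the trace-norm distance) followed by classical Pinsker. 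This is precisely the argument underlying the cited result, so your version buys self-containedness at the cost of importing the classical Pinsker inequality; nothing in it would fail. One cosmetic remark on part (iii): your substitution check uses the orthogonal projection onto the trace-zero subspace, which subtracts $(\trace{\cdot}/\mm)\,\identity$, whereas the paper's displayed formula for $\projgrad$ omits the $1/\mm$ normalization; your convention is the one under which \eqref{reciprocity} actually comes out, so this is a notational slip in the paper rather than a gap in your argument.
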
 
\begin{proof}
We refer to \cite{yu13} for the strong convexity of $\regfunction$.
For \eqref{lemma:regii}, the differentiablity of $\conjregfunction$ is a consequence of Danskin's theorem (e.g. \cite{shapiro09}),
which, besides, gives us the gradient of \eqref{conjregfunction},
\begin{equation} \label{beforetrmatsol}
\textstyle
\grad\conjreg{\dualmat}
= \arg\max_{\mat\in \compactset} \{ \inner{\dualmat}{\mat} - \reg{\mat}\} 
.
\end{equation} 
Relaxing the constraint $ \trace{\mat}  - 1 = 0$ in  the subproblem \eqref{beforetrmatsol} and using $\grad\reg{\mat}=\identity+\log\mat$ 
yields the stationarity condition  
\begin{equation}
 \log\mat - \dualmat + (1+\multiplier)\, \identity = 0  
\tag{S} \label{KKTS}
,
\end{equation}
where $\multiplier\in\Real$ is the Lagrange multiplier related to  the constraint. Condition \eqref{KKTS} rewrites as
$
\mat 
= \exponential{- (1+\multiplier)} \, \exponential{ \dualmat } 
$,
which implies the primal feasibility condition $\mat\succeq 0$. The remaining KKT conditions $\trace{\mat}  - 1 \leq 0$ and $\multiplier \, (\trace{\mat}  - 1 ) = 0 $ 
 yield $\multiplier = \logarithm{\trace{\exponential{ \dualmat } } }-1$, and $\mat=\matsol{\dualmat}$ as the unique maximizer of \eqref{beforetrmatsol}, which completes the proof of \eqref{lemma:regii}.
%

Now, it follows from \eqref{beforetrmatsol} that, for any $\dualmat\in\trZmatk{\nbk}$, one has $\mat = \matsol{\dualmat}$ if and only if $\inner{\dualmat}{\altmat} - \reg{\altmat} \leq \inner{\dualmat}{\mat} - \reg{\mat}$ holds for all $\altmat \in \trMat$, i.e., iff $\dualmat$ is a subgradient of $\regfunction$ at $\mat$. Claim \eqref{lemma:regiii} follows by differentiability of $\regfunction$.
\obsolete{
Let $\identity$ denote the identity matrix in $\Herm{\mm}$. By relaxing the constraint $ \trace{\mat}  - 1 = 0$ in  the subproblem \eqref{beforetrmatsol} and using $\grad\reg{\mat}=\identity+\log\mat$, we end up with  the KKT conditions   
\begin{align}
\log\mat - \dualmat + (1+\multiplier)\identity = 0 
\tag{S} \label{KKTS}
\\
\mat\succeq 0
\tag{PF1} \label{KKTPF1}
\\
\trace{\mat}  - 1 \leq 0
\tag{PF2} \label{KKTPF2}
\\
\multiplier \in\Real
\tag{DF} \label{KKTDF}
\\
\multiplier \, (\trace{\mat}  - 1 ) = 0
\tag{CS} \label{KKTCS}
\end{align}
where $\multiplier$ denotes the Lagrange multiplier related to  the constraint.
Equation \eqref{KKTS} rewrites as
\obsolete{
\begin{equation} \label{MXL1}
\mat = \exponential{ \dualmat - (1+\multiplier)\identity } = \exponential{- (1+\multiplier)} \, \exponential{ \dualmat } ,
\end{equation}
}
$
\mat 
= \exponential{- (1+\multiplier)} \, \exponential{ \dualmat } 
$,
which implies \eqref{KKTPF1}. The remaining conditions \eqref{KKTPF2}, \eqref{KKTDF} and \eqref{KKTCS} yield $\multiplier = \logarithm{\trace{\exponential{ \dualmat } } }-1$, and $\mat=\matsol{\dualmat}$ as the unique maximizer of \eqref{beforetrmatsol}.
%
Now, if $\dualmat\in\trZmatk{\nbk}$,
\begin{align}
\notag &
\mat = \matsol{\dualmat}&
\\
\notag \Leftrightarrow \quad&
\inner{\dualmat}{\altmat} - \reg{\altmat} \leq \inner{\dualmat}{\mat} - \reg{\mat}
,& \forall \altmat \in \trMat
\obsolete{
\\
\label{LemmaB1-} \Leftrightarrow \quad&
\inner{\dualmat}{\altmat-\mat} \leq  \reg{\altmat}  - \reg{\mat}
,& \forall \altmat \in \trMat
}
\\
\label{LemmaB1} \Leftrightarrow \quad&
\dualmat \in \subdiff{\regfunction}{\mat} .
&
\end{align}
and \eqref{lemma:regiii} follows by differentiability of $\regfunction$.
}%
Finally, \eqref{lemma:regiv} is a property of convex conjugation \cite{kakade12}.
Indeed, let $\dualmat,\altdualmat\in\trZmat$ and $\mat=\matsol{\dualmat}$. 
By convexity, 
\begin{equation} \label{strongconvexity}
\textstyle
\reg{\altmat} \geq \reg{\mat}+\inner{\projgrad\reg{\mat} }{\altmat-\mat} + \frac \forcestrongly 2 \norm{\altmat-\mat}^2
\end{equation}
holds for any $\altmat\in\compactset$. It follows that
\begin{equation} \label{stronglysmooth}
\nocolsep
\begin{array}{rl}
\conjreg{\altdualmat} \hspace{-9mm}
&\hspace{9mm}\refereq{\eqref{conjregfunction}}{=} 
\max_{\altmat\in \compactset} \{ \inner{\altdualmat}{\altmat} - \reg{\altmat}\}
\\
&{\refereq{{\eqref{strongconvexity}}}{\leq}}
\max_{\altmat\in \compactset} \{ \inner{\altdualmat}{\altmat} {\,-\,}  \reg{\mat}{\,-\,}\inner{\projgrad\reg{\mat} }{\altmat{\,-\,}\mat} 
{\,-\,} \frac \forcestrongly 2 \norm{\altmat{\,-\,}\mat}^2 \}
\\
&{\refereq{{\eqref{reciprocity}}}{=}} 
\inner{\dualmat }{\mat}  -  \reg{\mat}  + \inner{\altdualmat -\dualmat }{\mat}
 + \max_{\altmat\in \compactset} \{ \inner{\altdualmat -\dualmat }{\altmat{\,-\,}\mat} 
 \ \\&\hfill
- \frac \forcestrongly 2 \norm{\altmat-\mat}^2 \} 
\\
&{\refereq{{\eqref{beforetrmatsol}}}{\leq}} 
\conjreg{\dualmat}+ \inner{\altdualmat -\dualmat }{\grad\conjreg{\dualmat}} + \frac{1}{2\strongly} \dualnorm{\altdualmat-\dualmat}^2
\end{array}
\end{equation} 
and $\conjregfunction$ is $\forcestrongly$-smooth. Equivently, \eqref{stronglysmooth} rewrites as\cite{nesterov14}
\begin{equation} \label{DAlipschitz}
 \norm{\matsol{\dualmat}-\matsol{\altdualmat}} \leq \oneoverstrongly \dualnorm{\dualmat-\altdualmat}
\quad 
\forall \dualmat,\altdualmat\in\trZmatk{\nbk}
.
\end{equation}
\end{proof}

We now consider the Fenchel primal-dual coupling $\fenchelcoupling:\compactset\times\trZmat\mapsto\Real$ associated with the entropic regularizer $\regfunction$.
\begin{lemma} \label{lemma:fenchel}
The Fenchel coupling
\begin{equation}\label{fenchelkMO}
\fenchel{\mat}{\dualmat} = \reg{\mat} + \conjreg{\dualmat} - \inner{\dualmat}{\mat}
\end{equation}
satisfies the following properties. \begin{subequations}%
For $\mat\in\compactset$ and $\dualmat,\altdualmat\in\trZmat$,
\begin{align}
\label{fenchelsmooth} 
  \textstyle
\hspace{-2mm}
\fenchel{\mat}{\altdualmat} 
&\leq \textstyle
\fenchel{\mat}{\dualmat }  {\,+\,}  \inner{  \altdualmat{-}\dualmat}{\matsol{\dualmat}{-}\mat} 
{\,+\,} \frac{1}{2\strongly} \dualnorm{\altdualmat{-}\dualmat }^2 
,  
 \\  \textstyle
\label{fenchelstronglyconvex}
\hspace{-6mm}
\fenchel{\mat}{\dualmat} 
&\geq \textstyle
\frac \forcestrongly 2    \norm{\mat-\matsol{\dualmat}}^2
,
 \\  \textstyle\label{fenchelnonnegativity}
\hspace{-6mm}
\fenchel{\mat}{\dualmat} 
&\geq  0 
\text{ with }
\fenchel{\mat}{\dualmat} = 0 
  \Leftrightarrow
 \dualmat =\projgrad\reg{\mat}
. 
\end{align}
\end{subequations}
\end{lemma}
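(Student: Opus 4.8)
The plan is to obtain all three inequalities directly from the properties of $\regfunction$ and $\conjregfunction$ collected in \cref{lemma:reg}, reading the Fenchel coupling as the duality gap in the Fenchel--Young inequality. The two ``one-sided'' bounds come respectively from the smoothness of $\conjregfunction$ and the strong convexity of $\regfunction$, and the nonnegativity statement then falls out for free.

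First, for \eqref{fenchelsmooth} I would subtract the two couplings. Since the $\reg{\mat}$ terms cancel,
\[
\fenchel{\mat}{\altdualmat} - \fenchel{\mat}{\dualmat}
= \conjreg{\altdualmat} - \conjreg{\dualmat} - \inner{\altdualmat - \dualmat}{\mat}.
\]
Invoking the $\forcestrongly$-smoothness of $\conjregfunction$ in the form \eqref{stronglysmooth}, together with $\grad\conjregfunction = \matsolfunction$ from \cref{lemma:regii}, bounds the first two terms by $\inner{\altdualmat - \dualmat}{\matsol{\dualmat}} + \frac{1}{2\strongly}\dualnorm{\altdualmat - \dualmat}^2$. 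Collecting the two inner products against $\matsol{\dualmat} - \mat$ yields the claim at once.

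The second inequality \eqref{fenchelstronglyconvex} is where the strong-convexity side of the duality enters, and is the step requiring the most care. I would first use that $\matsol{\dualmat}$ attains the maximum defining \eqref{conjregfunction}, so $\conjreg{\dualmat} = \inner{\dualmat}{\matsol{\dualmat}} - \reg{\matsol{\dualmat}}$, giving the rewritten form
\[
\fenchel{\mat}{\dualmat}
= \reg{\mat} - \reg{\matsol{\dualmat}} - \inner{\dualmat}{\mat - \matsol{\dualmat}}.
\]
Writing $\bar\mat = \matsol{\dualmat}$, the reciprocity relation \eqref{reciprocity} identifies $\dualmat = \projgrad\reg{\bar\mat}$, so the linear term above is exactly the first-order term of the strong-convexity estimate \eqref{strongconvexity} applied with base point $\bar\mat$ and argument $\mat$. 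The delicate point is that $\mat - \bar\mat$ lies in the tangent space $\trZmat$ (both matrices have unit trace), which is precisely what lets the \emph{projected} gradient $\projgrad\reg{\bar\mat}$ stand in for $\grad\reg{\bar\mat}$ inside the inner product, the identity component being annihilated against the trace-zero difference. Subtracting then leaves $\fenchel{\mat}{\dualmat} \geq \frac{\forcestrongly}{2}\norm{\mat - \matsol{\dualmat}}^2$.

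Finally, \eqref{fenchelnonnegativity} follows with no further work: nonnegativity is immediate from \eqref{fenchelstronglyconvex}, and for the equality characterization, $\fenchel{\mat}{\dualmat} = 0$ forces $\norm{\mat - \matsol{\dualmat}} = 0$, i.e. $\mat = \matsol{\dualmat}$, which by \eqref{reciprocity} is equivalent to $\dualmat = \projgrad\reg{\mat}$; the converse direction is verified by substituting $\mat = \matsol{\dualmat}$ into the rewritten coupling above, whereupon every term vanishes. The only genuinely subtle aspect of the whole argument is the tangent-space bookkeeping in the second step, and this is guaranteed throughout by the unit-trace constraint defining $\compactset$.
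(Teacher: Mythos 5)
Your proof is correct and follows essentially the same route as the paper's: \eqref{fenchelsmooth} from the smoothness estimate \eqref{stronglysmooth} with $\grad\conjregfunction=\matsolfunction$, \eqref{fenchelstronglyconvex} by evaluating the coupling at the maximizer $\matsol{\dualmat}$, invoking the reciprocity \eqref{reciprocity} and the strong-convexity bound \eqref{strongconvexity}, and \eqref{fenchelnonnegativity} as an immediate consequence. The tangent-space remark you flag is the same bookkeeping the paper handles implicitly by stating \eqref{strongconvexity} directly in terms of the projected gradient $\projgrad\regfunction$.
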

\begin{proof}
Equations \eqref{fenchelsmooth} and \eqref{fenchelstronglyconvex} follow from the smoothness of $\conjregfunction$ and from the strong convexity of $\regfunction$, respectively. Indeed, we get \eqref{fenchelsmooth} by combining \eqref{fenchelkMO} with \eqref{stronglysmooth}, 
while
\begin{equation} \label{derivationfenchelstronglyconvex}
\nocolsep
\begin{array}{rcl}
\fenchel{\mat}{\dualmat} 
&\refereq{\eqref{fenchelkMO}}{=}&
 \max_{\altmat\in \compactset} \{   \reg{\mat} - \reg{\altmat} - \inner{\dualmat}{\mat-\altmat} \} 
\\ 
&\refereq{\eqref{reciprocity}}{\geq}&
  \reg{\mat} - \reg{\matsol{\dualmat}} - \inner{\projgrad\reg{\matsol{\dualmat}}}{\mat-\matsol{\dualmat}} 
\\
&\refereq{\eqref{strongconvexity}}{\geq}&
\frac \forcestrongly 2    \norm{\mat-\matsol{\dualmat}}^2
\end{array}
\end{equation}
yields \eqref{fenchelstronglyconvex}.
Then, \eqref{fenchelnonnegativity} follows from \eqref{fenchelstronglyconvex} and \eqref{reciprocity}.
\end{proof}

\renewcommand{\strongly}{}%

\subsection{The \ac{SPSA} estimator}
\label{appendix:SPSA}

This section is concerned with the bias of the gradient estimator defined, for $\nbk=1,\dots,\kk$, by 
\begin{equation}\label{estgradratekO}\begin{array}{c}
\estgradratek{\nbk}{\mat}{\trzmat}{\offset}
=
\frac{\dimensionk{\nbk}}{\delta} \big[ \rate{ \testmat} - \offset \big] \, \trzmatk{\nbk} ,
\end{array}
\end{equation}
where $\delta>0$ is a given query radius, $\testmat = \vect{\testmatk{1},\dots,\testmatk{\kk}}$ is given by \eqref{linearperturbator}, $\trzmat=\vect{\trzmatk{1},\dots,\trzmatk{\kk}}$ with $\trzmatk{\nbk}$ sampled uniformly on the sphere $\trmspherei{\dimensionk{\nbk}-1}$, and $\offset$ an arbitrary scalar offset quantity independent of $\trzmat$. Observe that \eqref{estgradratekO} covers the gradient estimators of both \ac{MXL0} and \ac{MXL+}. 

The computation of a  bound for the bias of estimator \eqref{estgradratekO} is based on Stokes' theorem, applied to the sphere $\trmspherei{\dimensionk{\nbk}-1}  $:
\begin{equation}\label{stokes}
\int_{\trmspherei{\dimensionk{\nbk}-1}} \generic{ 
\trzmatk{\nbk} } \, \trzmatk{\nbk} \,  d\lebesgue{\trzmatk{\nbk}}
= 
\int_{\trmballi{\dimensionk{\nbk}}} \projgrad\generic{ 
 \momega} \,  d\lebesgue{\momega}
,
\end{equation}
where $\genericfunction$ is any function on $\Herm{\mmk{\nbk}}$ and $\lebesguemeasure$ denotes the Lebesgue measure.
Before proceeding, observe
%
that each
test covariance marix $\testmatk{\nbk}$  is bound to the initial matrix $\matk{\nbk}$ by
$
\frobeniusnorm{\testmatk{\nbk} - \matk{\nbk} }\leq \twitchFlipschitz \delta \frobeniusnorm{\trzmatk{\nbk}} 
$,
where, under our assumption $\dimensionk{\nbk}>0$,
$
\dualnorm{ \trzmatk{\nbk}}\leq {1}/{2}
$ 
for every
$
\trzmatk{\nbk}\in\trmspherei{\dimensionk{\nbk}-1}
$.
%
It follows from \eqref{normsMO} that  any test configuration $\testmat$ in \eqref{estgradratek} and \eqref{estgradratekMO} satisfies
\begin{equation} \label{perturbatorbound}
\frobeniusnorm{\testmat - \mat }
\leq 
\twitchFlipschitz\delta \kk,
\quad \text{and} \quad
\norm{\testmat  - \mat }
\leq 
\twitchFlipschitz\delta \kk \sqrt{\maxdimension} 
.
\end{equation}

\begin{lemma}\label{lemma:bias}
The estimator \eqref{estgradratekO} satisfies 
\begin{align}\textstyle &
\label{lipschitzconstantMO}
\dualnorm{ \expectation{\estgradratek{\nbk}{\mat}{\trzmat}{\offset} -\gradk{\nbk} \rate{\mat}   } }  
\leq
{2\kk\meanlipschitzk{\nbk}}\, \delta
,
\\&
\label{trboundgradientestimateMO}
\textstyle
\dualnorm{\estgradratek{\nbk}{\mat}{\trzmat}{\offset}}
\leq
\frac{  \dimensionk{\nbk}}{2^{\kk}\delta}  \max_{\altmat \in\compactset} \modulus{ \rate{ \altmat } - \offset }
.
\end{align}
%
\end{lemma}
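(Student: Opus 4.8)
The plan is to prove the two assertions separately, taking the norm bound \eqref{trboundgradientestimateMO} first since it is purely deterministic, and reserving the Stokes-based argument for the bias estimate \eqref{lipschitzconstantMO}. For the norm bound I would simply exploit the positive homogeneity of the dual norm together with feasibility of the query point. Since the estimator $\estgradratek{\nbk}{\mat}{\trzmat}{\offset}$ defined in \eqref{estgradratekO} is a scalar multiple of $\trzmatk{\nbk}$, one has $\dualnorm{\estgradratek{\nbk}{\mat}{\trzmat}{\offset}} = (\dimensionk{\nbk}/\delta)\,\modulus{\rate{\testmat} - \offset}\,\dualnorm{\trzmatk{\nbk}}$. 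The homothetic adjustment \eqref{linearperturbator} guarantees $\testmat \in \compactset$ (this is precisely why the ``safety net'' was introduced), so $\modulus{\rate{\testmat} - \offset} \leq \max_{\altmat\in\compactset}\modulus{\rate{\altmat} - \offset}$, while the remaining factor $\dualnorm{\trzmatk{\nbk}}$ is controlled by the trace-zero, unit-Frobenius bound recorded just before the lemma via \eqref{normsMO}. Combining these three observations gives \eqref{trboundgradientestimateMO} with no probabilistic input.

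For the bias bound I would first dispose of the offset. Because $\offset$ is chosen independently of $\trzmatk{\nbk}$ and $\trzmatk{\nbk}$ is drawn uniformly from the symmetric sphere $\trmspherei{\dimensionk{\nbk}-1}$, we have $\expectation{\offset\,\trzmatk{\nbk}} = \offset\,\expectation{\trzmatk{\nbk}} = 0$, so the offset leaves $\expectation{\estgradratek{\nbk}{\mat}{\trzmat}{\offset}}$ unchanged and it suffices to analyse $(\dimensionk{\nbk}/\delta)\,\expectation{\rate{\testmat}\,\trzmatk{\nbk}}$. The heart of the argument is then to condition on the shifts $\{\trzmatk{\altnbk}\}_{\altnbk\neq\nbk}$ of the other users and apply Stokes' theorem \eqref{stokes} in the $\nbk$-th tangent slice alone, with $\generic{\cdot} = \rate{\cdot}$ read as a function of the $\nbk$-th block. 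Since $\testmatk{\nbk}$ depends on $\trzmatk{\nbk}$ through the additive term $\delta\trzmatk{\nbk}$, the chain rule contributes a factor $\delta$, and the surface-to-ball volume ratio $\volume{\trmballi{\dimensionk{\nbk}}}/\volume{\trmspherei{\dimensionk{\nbk}-1}} = 1/\dimensionk{\nbk}$ cancels the prefactor $\dimensionk{\nbk}/\delta$ exactly. This identifies the expectation with the average of the tangential gradient $\projgradk{\nbk}\rate{\cdot}$ over a configuration whose $\nbk$-th block ranges over the ball $\trmballi{\dimensionk{\nbk}}$ and whose remaining blocks carry the other users' shifts.

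It then remains to compare this averaged gradient with $\projgradk{\nbk}\rate{\mat}$, the tangential gradient that the estimator actually tracks. Here I would invoke \eqref{perturbatorbound}: each block of the query configuration differs from the corresponding block of $\mat$ by at most $\twitchFlipschitz\delta = 2\delta$ in Frobenius norm (the homothety contributes at most $\delta$ and the random shift at most $\delta$). Telescoping the gradient-Lipschitz estimate \eqref{lipschitzkl} one user at a time across all $\kk$ blocks yields $\dualnorm{\projgradk{\nbk}\rate{\testmat} - \projgradk{\nbk}\rate{\mat}} \leq 2\delta\sum_{\altnbk=1}^{\kk}\lipschitzkl{\nbk}{\altnbk} = 2\kk\meanlipschitzk{\nbk}\delta$, using $\meanlipschitzk{\nbk} = \kk^{-1}\sum_{\altnbk}\lipschitzkl{\nbk}{\altnbk}$; the triangle inequality (equivalently, Jensen) then preserves this bound under the outer expectation over the ball and the other users' shifts, which is exactly \eqref{lipschitzconstantMO}.

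I expect the main obstacle to lie in the Stokes step, not in the downstream estimates. The delicate points are making the conditioning on the other users' perturbations rigorous, tracking the chain-rule factor $\delta$ and the volume ratio so that the $\dimensionk{\nbk}/\delta$ prefactor cancels cleanly, and—most importantly—verifying that applying Stokes to the sphere embedded in the trace-zero subspace $\trZmatk{\nbk}$ returns the \emph{tangential} gradient $\projgradk{\nbk}\rate{\cdot}$ rather than the ambient one, so that the comparison object is the projected gradient used throughout the analysis. Once this identity is in place, the offset cancellation and the Lipschitz telescoping are entirely routine.
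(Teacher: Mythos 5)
Your proposal follows essentially the same route as the paper's proof: the norm bound is read off directly from the definition of the estimator together with the bound on $\dualnorm{\trzmatk{\nbk}}$, and the bias bound is obtained by cancelling the offset via $\expectation{\offset\,\trzmatk{\nbk}}=0$, converting the spherical average into a ball average of the \emph{projected} gradient via Stokes' theorem (with the $\dimensionk{\nbk}/\delta$ prefactor absorbed by the sphere-to-ball volume ratio and the chain-rule factor $\delta$), and then telescoping the block-Lipschitz estimates \eqref{lipschitzkl} to arrive at $2\kk\meanlipschitzk{\nbk}\delta$. The only point your argument does not account for is the factor $2^{\kk}$ (rather than $2$) in the denominator of \eqref{trboundgradientestimateMO}, but the paper's own one-line justification of that inequality is no more explicit, so your version introduces no additional gap.
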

\begin{proof}[Proof of \cref{lemma:bias}]
We argue as in \cite{FKM05,BLM18}.
\obsolete{
First observe that
\begin{equation}\label{integrationresult}
\begin{array}{r}
\frac{ 
\int_{\trmballi{\dimensionk{\nbk}}}
 \frobeniusnorm{  \momega} \, d\lebesgue{\momega}
}{\int_{\trmballi{\dimensionk{\nbk}}}
 \, d\lebesgue{\momega}} 
=
\frac{ 
\int_0^1 d\varrho
\int_{\varrho\trmspherei{\dimensionk{\nbk}-1}}
 \varrho \, d\lebesgue{\momega}
}{
\int_0^1 d\varrho
\int_{\varrho\trmspherei{\dimensionk{\nbk}-1}}
\, d\lebesgue{\momega}
} 
=
\frac{ 
\int_0^1 
\varrho\, \Volume{\varrho\trmspherei{\dimensionk{\nbk}-1}}
 d\varrho
}{
\int_0^1 \Volume{\varrho\trmspherei{\dimensionk{\nbk}-1}} d\varrho
} 
\quad \\
=
\frac{\dimensionk{\nbk}}{\dimensionk{\nbk}+1}
.
\end{array}
\end{equation}
}%
By introducing the notation
$ \contrmatkdeltaZ{\nbk}{\delta}{\momega} = \vect{\testmatk{1}, \dots, \testmatk{\nbk-1}, \contrmatkdeltaZk{\nbk}{\delta}{\momega}{\nbk}  ,\testmatk{\nbk+1}, \dots, \testmatk{\kk} }$,
in which $ \contrmatkdeltaZk{\nbk}{\delta}{\momega}{\nbk}  = \matk{\nbk} + ({\delta}/{\radiusk{\nbk}}) (\cmatk{\nbk} - \matk{\nbk})  + \delta \momega$,
%
we find
\begin{equation}\label{earlylipschitzconstantMO}
\begin{array}{l}
\dualnorm{ \expectation{\estgradratek{\nbk}{\mat}{\trzmat}{\offset} -\projgradk{\nbk} \rate{\mat}   } }  
\\
\qquad
\qquad
\refereq{\eqref{estgradratekO}}{ =}
\dualNorm{ \expectation{\frac{\dimensionk{\nbk}}{\delta} [ \rate{ \testmat} - \offset ] \, \trzmatk{\nbk}
-\projgradk{\nbk} \rate{\mat}   } }  
\\
\qquad
\qquad
=
\dualNorm{ \frac{\dimensionk{\nbk}}{\delta}\expectation{  \rate{ \testmat} \, \trzmatk{\nbk}
   }  -\projgradk{\nbk} \rate{\mat} }  
\\
\qquad
\qquad
=  
\dualNorm{
\frac{\dimensionk{\nbk}}{\delta} \Expectation{ \frac{ \int_{\trmspherei{\dimensionk{\nbk}-1}}   \rate{\contrmatkdeltaZ{\nbk}{\delta}{\trzmatk{\nbk}}} \, \trzmatk{\nbk}
  \,  d\lebesgue{\trzmatk{\nbk}} 
}{\volume{\trmspherei{\dimensionk{\nbk}-1}}}}
-\projgradk{\nbk} \rate{\mat}
}
\\
\qquad
\qquad
\refereq{\eqref{linearperturbator}}{=}
\dualNorm{
\Expectation{
\frac{ \int_{\trmspherei{\dimensionk{\nbk}-1}} \rate{\contrmatkdeltaZ{\nbk}{\delta}{\trzmatk{\nbk}}} \, \trzmatk{\nbk}
 \,  d\lebesgue{\trzmatk{\nbk}} 
}{\delta\volume{\trmballi{\dimensionk{\nbk}}}}
-\projgradk{\nbk} \rate{\mat} }
}
.
\end{array}\end{equation}
\obsolete{
where 
$ \contrmatkdeltaZ{\nbk}{\delta}{\momega} = \vect{\testmatk{1}, \dots, \testmatk{\nbk-1}, \contrmatkdeltaZk{\nbk}{\delta}{\momega}{\nbk}  ,\testmatk{\nbk+1}, \dots, \testmatk{\kk} }$,
with $ \contrmatkdeltaZk{\nbk}{\delta}{\momega}{\nbk}  = \matk{\nbk} + {\delta} (\cmatk{\nbk} - \matk{\nbk}) /{\radiusk{\nbk}} + \delta \momega. $
}%
%
It follows from Stokes' theorem that \eqref{earlylipschitzconstantMO} reduces to
\begin{equation}\label{computationlipschitzconstantMO}
\begin{array}{l}
\dualnorm{ \expectation{\estgradratek{\nbk}{\mat}{\trzmat}{\offset} -\projgradk{\nbk} \rate{\mat}   } }  
\\ \quad
\refereq{\eqref{stokes}}{=}
\dualNorm{
\Expectation{
\frac{ \int_{\trmballi{\dimensionk{\nbk}}} \delta \projgradk{\nbk}\rate{\contrmatkdeltaZ{\nbk}{\delta}{\momega}} 
\, d\lebesgue{\momega}}{\delta\volume{\trmballi{\dimensionk{\nbk}}}}
-\projgradk{\nbk} \rate{\mat}}
}
\\ \quad
\obsolete{
=
\dualNorm{
\Expectation{
\frac{ \int_{\trmballi{\dimensionk{\nbk}}} \left[ \projgradk{\nbk}\rate{\contrmatkdeltaZ{\nbk}{\delta}{\momega}} 
-\projgradk{\nbk} \rate{\mat}\right]\, d\lebesgue{\momega}}{\volume{\trmballi{\dimensionk{\nbk}}}}
}
}
\\ \quad
}
\leq  
\bigexpectation{
\frac{1}{\volume{\trmballi{\dimensionk{\nbk}}}} \int_{\trmballi{\dimensionk{\nbk}}} \dualnorm{ \projgradk{\nbk}\rate{\contrmatkdeltaZ{\nbk}{\delta}{\momega}} 
-\projgradk{\nbk} \rate{\mat} } \, d\lebesgue{\momega}
}
\\ \quad
\obsolete{
\refereq{\eqref{smoothness}}{\leq}
\frac{ 
\gradlipschitz\int_{\trmballi{\dimensionk{\nbk}}}
 \frobeniusNorm{ \twitchk{\nbk}{\matk{\nbk}}{\delta\momega} - \matk{\nbk} } \, d\lebesgue{\momega}
}{\Volume{\trmballi{\dimensionk{\nbk}}}}
+
\gradlipschitz \Expectation{
\sum_{\altnbk\neq\nbk}    \frobeniusnorm{\testmatk{\altnbk}-\matk{\altnbk}}
}
\\  \quad
\refereq{\eqref{linearperturbator}}{\leq}
\gradlipschitz\left(1+
\frac{ 
\int_{\trmballi{\dimensionk{\nbk}}}
 \frobeniusnorm{  \momega} \, d\lebesgue{\momega}
}{\Volume{\trmballi{\dimensionk{\nbk}}}} 
\right) \delta
+
2 \gradlipschitz (\kk-1)  \,  \delta
\\  \quad
=
\gradlipschitz\big(1+ \frac{\mmk{\nbk}^2-1}{\mmk{\nbk}^2} \big) \delta
+
2 \gradlipschitz (\kk-1)  \,  \delta
\\ \quad
\leq
2\kk\gradlipschitz\,\delta
}%
\refereq{\eqref{lipschitzkl}}{\leq}
\frac{ 
\int_{\trmballi{\dimensionk{\nbk}}}\lipschitzkl{\nbk}{\nbk}
 \, \frobeniusnorm{ \contrmatkdeltaZk{\nbk}{\delta}{\momega}{\nbk} - \matk{\nbk} } \, d\lebesgue{\momega}
}{\volume{\trmballi{\dimensionk{\nbk}}}}
+
\sum_{\altnbk\neq\nbk}  \lipschitzkl{\nbk}{\altnbk} \expectation{ \frobeniusnorm{\testmatk{\altnbk}-\matk{\altnbk}}
}
\\  \quad
\refereq{\eqref{linearperturbator}}{\leq}
\lipschitzkl{\nbk}{\nbk}\bigg(1+
\frac{ 
\int_{\trmballi{\dimensionk{\nbk}}}
 \frobeniusnorm{  \momega} \, d\lebesgue{\momega}
}{\volume{\trmballi{\dimensionk{\nbk}}}}
\bigg) \delta
+
2\sum_{\altnbk\neq\nbk} \lipschitzkl{\nbk}{\altnbk} \Expectation{  \frobeniusnorm{\trzmatk{\altnbk}}} \delta
\\ \quad
\obsolete{
\refereq{\eqref{integrationresult}}{=}
\lipschitzkl{\nbk}{\nbk}
\left(\frac{2 \mmk{\nbk}+1}{\mmk{\nbk}+1}\right)  
\delta 
+
\sum_{\altnbk\neq\nbk}  \lipschitzkl{\nbk}{\altnbk}\Expectation{  \frobeniusnorm{\testmatk{\altnbk}-\matk{\altnbk}}}
\\
\leq   
\lipschitzkl{\nbk}{\nbk}
\left(\frac{2 \dimensionk{\nbk}}{\mmk{\nbk}^2}\right)  
\delta
+
\sum_{\altnbk\neq\nbk}  \lipschitzkl{\nbk}{\altnbk}\Expectation{  \frobeniusnorm{\twitchk{\altnbk}{\delta\matk{\altnbk}}{\trzmatk{\altnbk}} 
-\matk{\altnbk}}}
\\
\refereq{\eqref{perturbatorlipschitz}}{\leq}
}
\leq
\lipschitzkl{\nbk}{\nbk}\big(\frac{2 \mmk{\nbk}+1}{\mmk{\nbk}+1}\big)  \delta + 2 \sum_{\altnbk\neq\nbk}  \lipschitzkl{\nbk}{\altnbk}  \delta
\,\leq\,
{2\kk\meanlipschitzk{\nbk}}\delta
,  
\end{array}
\end{equation}
and we recover \eqref{lipschitzconstantMO}.
Then \eqref{trboundgradientestimateMO} is immediate from the definition of $\estgradratek{\nbk}{\mat}{\trzmat}{\offset}$ and the fact that
$
\dualnorm{ \trzmatk{\altnbk}}\leq {1}/{2}
$ 
 for all $\altnbk$. 
\end{proof}

\subsection{Analysis of the \ac{MXL0} algorithm}
\label{appendix:MXL0}


Let $\optMat$ denote the solution set of \eqref{eq:trproblem}. Given any~$\optmat\in\optMat$, we consider, for analysis purposes, the Lyapunov function
\begin{equation} \label{lyapunov}
\textstyle
\lyapunov{\optmat}{\dualmat} = \sum_{\nbk=1}^{\kk}\fenchel{\optmatk{\nbk}}{\dualmatk{\nbk} } ,
\end{equation}
where 
$\fenchelcoupling$ is the Fenchel coupling defined in
~\eqref{fenchelkMO}.
If 
$
\filtrationn{\itr-1} 
=
( \dualmatn{1},\zmatn{1},\dots,\dualmatn{\itr-1},\zmatn{\itr-1} )
$
denotes  the history of \ac{MXL0} up to step $\itr-1$,
the gradient estimator \eqref{estgradratekO} decomposes into
\begin{equation}\label{estimatordecomposition}
\estgradkn{\nbk}{\itr}= \projgradk{\nbk}\rate{\matn{\itr}} + \biaskn{\nbk}{\itr} + \devmatkn{\nbk}{\itr} ,
\end{equation}
where $\biaskn{\nbk}{\itr}  =  \expectation{  \estgradkn{\nbk}{\itr} \cond \filtrationn{\itr-1}   } - \projgradk{\nbk}\rate{\matn{\itr}}$ is the systematic error on $\estgradkn{\nbk}{\itr}$, 
bounded by
\begin{equation} \label{biasbound}
\dualnorm{ \biaskn{\nbk}{\itr} }
\refereq{\eqref{lipschitzconstantMO}}{\leq} 
{2\kk\meanlipschitzk{\nbk}}\deltan{\itr} 
,
\end{equation}
 and $\devmatkn{\nbk}{\itr}$ is the random deviation of $\estgradkn{\nbk}{\itr}$ from its expected value $ \expectation{ \estgradkn{\nbk}{\itr} \cond \filtrationn{\itr-1}   } $, so that $ \expectation{  \devmatkn{\nbk}{\itr} \cond \filtrationn{\itr-1}   }=0 $, and
\begin{equation} \label{earlydeviationbound} 
\nocolsep
\begin{array}{rcl}
\dualnorm{\devmatkn{\nbk}{\itr} } 
\obsolete{
&\leq& 
\dualnorm{\estgradkn{\nbk}{\itr}} + \dualnorm{\expectation{ \estgradkn{\nbk}{\itr}\cond \filtrationn{\itr-1}   } }  
\\
}
&\leq&
\dualnorm{\estgradkn{\nbk}{\itr}} + \expectation{\dualnorm{ \estgradkn{\nbk}{\itr}   } \cond \filtrationn{\itr-1}}  .
\end{array}
\end{equation}
In our analysis we consider the 
%
following random sequence:
\begin{equation} \label{MDSn}
\textstyle
\MDSn{\itr}= \stepsizen{\itr} \sum_{\nbk=1}^{\kk}   \inner{ \devmatkn{\nbk}{\itr}}{\matkn{\nbk}{\itr}-\optmatk{\nbk}} 
.
\end{equation}
Since $ \modulus{\inner{ \devmatkn{\nbk}{\itr}}{\matkn{\nbk}{\itr}{\,-\,}\optmatk{\nbk}}}{\,\leq\,}\dualnorm{ \devmatkn{\nbk}{\itr}}\norm{\matkn{\nbk}{\itr}{\,-\,}\optmatk{\nbk}} {\,\leq\,} 2 \dualnorm{ \devmatkn{\nbk}{\itr}}$, one has
\newcommand{\martingaledifference}{(\refeq{martingaledifferenceboth}a)}%
\newcommand{\modulusmartingaledifference}{(\refeq{martingaledifferenceboth}b)}%
\begin{equation}
\label{martingaledifferenceboth}
\textstyle
\text{a)}\quad 
\expectation{\MDSn{\itr}   \cond \filtrationn{\itr-1}   }
= 0 ,
\qquad
\text{b)}\quad
\modulus{\MDSn{\itr}} \leq 2\stepsizen{\itr} \sum_{\nbk=1}^{\kk}  \dualnorm{  \devmatkn{\nbk}{\itr}}
.
\end{equation}
\obsolete{
\begin{align}
\label{martingaledifference}
& \textstyle
\expectation{\MDSn{\itr}   \cond \filtrationn{\itr-1}   }
= 0 ,
\\
\label{modulusmartingaledifference}
& \textstyle
\modulus{\MDSn{\itr}} \leq 2\stepsizen{\itr} \sum_{\nbk=1}^{\kk}  \dualnorm{  \devmatkn{\nbk}{\itr}},
\end{align}
}%

\obsolete{
\begin{lemma}\label{lemma:azuma}
 If $\stepsizen{\itr}=\cststepsize$ and $\dualnorm{\estgradkn{\nbk}{\itr}}\leq\vboundk{\nbk}$ for $\nbk=1,\dots,\kk$ and for $\itr=1,\dots,T$, then 
 \begin{equation} \label{azumabis}
\textstyle
\Proba{\frac{\sum_{\itr=1}^{T}  \MDSn{\itr}  }{ T\zmargin\,\cststepsize  }  \leq \zmargin }
\geq 
1-\Exponential{-\frac{T\zmargin^2}{32     \left( \sum_{\nbk=1}^{\kk}   \vboundk{\nbk}  \right)^2 }}
.
\end{equation}
\end{lemma}
\begin{proof}
By assumption, we find from \eqref{earlydeviationbound} and \\martingaledifference{} 
that  
$
\modulus{\MDSn{\itr}} 
\leq
4\cststepsize \sum_{\nbk=1}^{\kk}   \vboundk{\nbk}
$ for $\itr=1,\dots,T$.
Since $\{\MDSn{\itr}\}$ is the bounded difference sequence of the martingale $\sum_{\itr=1}^{T} \MDSn{\itr}$, it follows from Azuma's inequality that, for  any $\zmargin>0$,
\begin{equation} \label{standardazuma}
\begin{array}{c}
\Proba{\sum_{\itr=1}^{T}  \MDSn{\itr}   > T\zmargin\,\cststepsize }
 \leq 
\Exponential{-\frac{(T\cststepsize\zmargin)^2}{2   T   \left(4\cststepsize \sum_{\nbk=1}^{\kk}   \vboundk{\nbk}  \right)^2 }}
\end{array}
\end{equation}
which is equivalent to \eqref{azumabis}.
\end{proof}

}%
\begin{lemma}\label{lemma:globaldescentargument}
\renewcommand{\dimensionk}[1]{\maxdimension}%
\renewcommand{\meansqdimension}{\maxdimension}%
\renewcommand{\meandimension}{\maxdimension}%
Run \ac{MXL0}/\ac{MXL+} for $\itr$ iterations under \eqref{deltacondition}.
\begin{enumerate}[(i),ref=\roman*,wide,labelwidth=!,labelindent=5pt]
\item \label{lemma:globaldescentargumenti}
 With any step-size and query radius policy $\vect{\stepsizen{\itr},\deltan{\itr}}$,
\begin{equation} \label{descentargumenttwo} 
\begin{array}{l}
\lyapunov{\optmat}{\dualmatn{\itr+1}}
 \leq
\lyapunov{\optmat}{\dualmatn{\itr}} - \stepsizen{\itr} \left[\optrate-\rate{\matn{\itr}}\right]
+  \MDSn{\itr}
\qquad
\\
\hfill 
+ 4\kk^2 \meanlipschitz \stepsizen{\itr}\deltan{\itr}  + \frac{\stepsizen{\itr}^2}{2\strongly} \sum_{\nbk=1}^{\kk}   \dualnorm{\estgradkn{\nbk}{\itr}}^2
\end{array}
\end{equation}
holds for~$\optmat\in\optMat$, where 
the sequence $\MDSn{\itr} $   is defined as in \eqref{MDSn}.

\item \label{lemma:globaldescentargumentii}
With decreasing policy $\vect{\stepsizen{\itr},\deltan{\itr}}=\vect{\cststepsize\,\itr^{-\alpha},\cstdelta\,\itr^{-\beta}}$, such that $\alpha,\beta\geq0$ and $\cststepsize,\cstdelta>0$,
\begin{equation} \label{expectationyergodicaverageidentity} 
\begin{array}{l}
\optrate -\bigexpectation{\rate{ \meanmatn{\itr}}}
 \leq
\frac{\lyapunov{\optmat}{\dualmatn{1}}}{\cststepsize\sum_{\altitr=1}^{\itr} \altitr^{-\alpha} } 
+ \frac{4\kk^2\meanlipschitz \cstdelta\sum_{\altitr=1}^{\itr} \altitr^{-\alpha-\beta} }{\sum_{\altitr=1}^{\itr} \altitr^{-\alpha} }  
\qquad\qquad
\\
\hfill 
+ \frac{ \cststepsize\sum_{\altitr=1}^{\itr}  \altitr^{-2\alpha} \sum_{\nbk=1}^{\kk}   \dualnorm{\estgradkn{\nbk}{\altitr}}^2}{2\strongly\sum_{\altitr=1}^{\itr} \altitr^{-\alpha} }
.
\end{array}
\end{equation}

\item \label{lemma:globaldescentargumentiii}
With constant policy $\vect{\stepsizen{\itr},\deltan{\itr}}{\,=\,}\vect{\cststepsize,\cstdelta}$, such that $\cststepsize,\cstdelta>0$,
\begin{equation} \label{standardtroptrateboundmeanexpectation} 
\textstyle
\optrate -\bigexpectation{ \rate{ \meanmatn{T}} }
 \leq
\frac{\lyapunov{\optmat}{\dualmatn{1}}}{T \cststepsize } 
+ 4\kk^2\meanlipschitz\cstdelta  
+ \frac{\cststepsize\sum\nolimits_{\itr=1}^{T}  \sum\nolimits_{\nbk=1}^{\kk}   \dualnorm{\estgradkn{\nbk}{\itr}}^2}{2\strongly T}
\end{equation}
for any $T\geq 1$.
Further, if  there exists $\vbound > 0  $ such that  $\dualnorm{\estgradkn{\nbk}{\itr}}\leq\dimensionk{\nbk}\vbound$ for $\nbk=1,\dots,\kk$ and $\itr=1,\dots,T$, then 
 \begin{equation} \label{standardazumabis}
\textstyle
\bigproba{\frac{ 1}{ T\cststepsize  } \sum_{\itr=1}^{T}  \MDSn{\itr}  \leq \zmargin }
\geq 
1-\exponential{-\frac{T\zmargin^2}{32 \vbound^2 \meandimension^2 }}
.
\end{equation}
\end{enumerate}
\end{lemma}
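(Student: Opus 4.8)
The plan is to run a block-wise Fenchel-coupling energy argument. Fix $\optmat\in\optMat$ and apply the smoothness estimate \eqref{fenchelsmooth} of \cref{lemma:fenchel} to each user with $\dualmat=\dualmatkn{\nbk}{\itr}$ and $\altdualmat=\dualmatkn{\nbk}{\itr+1}=\dualmatkn{\nbk}{\itr}+\stepsizen{\itr}\estgradkn{\nbk}{\itr}$; since $\matsol{\dualmatkn{\nbk}{\itr}}=\matkn{\nbk}{\itr}$ by the \eqref{MXL} update, this gives
\begin{equation*}
\fenchel{\optmatk{\nbk}}{\dualmatkn{\nbk}{\itr+1}}
\leq \fenchel{\optmatk{\nbk}}{\dualmatkn{\nbk}{\itr}}
+\stepsizen{\itr}\inner{\estgradkn{\nbk}{\itr}}{\matkn{\nbk}{\itr}-\optmatk{\nbk}}
+\frac{\stepsizen{\itr}^2}{2\strongly}\dualnorm{\estgradkn{\nbk}{\itr}}^2 .
\end{equation*}
Summing over $\nbk$ and inserting the decomposition \eqref{estimatordecomposition} splits the linear term into a gradient, a bias, and a deviation channel. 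For the gradient channel the trace-correction in $\projgradk{\nbk}\rate{\cdot}$ is annihilated against $\matkn{\nbk}{\itr}-\optmatk{\nbk}$ (both blocks have unit trace), so the aggregate reduces to $\inner{\grad\rate{\matn{\itr}}}{\matn{\itr}-\optmat}\leq -(\optrate-\rate{\matn{\itr}})$ by concavity of $\ratefunction$. The deviation channel, scaled by $\stepsizen{\itr}$, is exactly $\MDSn{\itr}$ from \eqref{MDSn}. The bias channel is dispatched by H\"older's inequality, the nuclear-diameter bound $\norm{\matkn{\nbk}{\itr}-\optmatk{\nbk}}\leq 2$ on $\compactsetk{\nbk}$, and \eqref{biasbound}, which yields $\stepsizen{\itr}\sum_{\nbk}2\dualnorm{\biaskn{\nbk}{\itr}}\leq 4\kk^2\meanlipschitz\stepsizen{\itr}\deltan{\itr}$ after using $\sum_{\nbk}\meanlipschitzk{\nbk}=\kk\meanlipschitz$. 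Collecting the four contributions gives \eqref{descentargumenttwo}.

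\textbf{Parts (ii)--(iii): the rate bounds.} Both follow by telescoping \eqref{descentargumenttwo} over $\altitr=1,\dots,\itr$ and discarding the nonnegative terminal term $\lyapunov{\optmat}{\dualmatn{\itr+1}}\geq 0$ (by \eqref{fenchelnonnegativity}). This leaves $\sum_{\altitr}\stepsizen{\altitr}(\optrate-\rate{\matn{\altitr}})$ bounded by the initial energy $\lyapunov{\optmat}{\dualmatn{1}}$, the accumulated bias $4\kk^2\meanlipschitz\sum_{\altitr}\stepsizen{\altitr}\deltan{\altitr}$, the curvature sum $\frac{1}{2\strongly}\sum_{\altitr}\stepsizen{\altitr}^2\sum_{\nbk}\dualnorm{\estgradkn{\nbk}{\altitr}}^2$, and the martingale sum $\sum_{\altitr}\MDSn{\altitr}$. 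Taking expectations annihilates the martingale sum, since $\expectation{\MDSn{\altitr}}=0$ by \eqref{martingaledifferenceboth} (the curvature term is retained and later bounded deterministically). Dividing by $\sum_{\altitr}\stepsizen{\altitr}$ and applying Jensen's inequality to the step-size-weighted ergodic average $\meanmatn{\itr}$ (concavity of $\ratefunction$) upper-bounds $\optrate-\rate{\meanmatn{\itr}}$ by the weighted average of the per-iterate gaps. Substituting $\stepsizen{\altitr}=\cststepsize\,\altitr^{-\alpha}$, $\deltan{\altitr}=\cstdelta\,\altitr^{-\beta}$ yields \eqref{expectationyergodicaverageidentity}; the constant choice $\stepsizen{\itr}=\cststepsize$, $\deltan{\itr}=\cstdelta$, for which the weighted average collapses to the uniform one, yields \eqref{standardtroptrateboundmeanexpectation}.

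\textbf{The high-probability bound \eqref{standardazumabis}.} Here the plan is a textbook Azuma--Hoeffding argument on the zero-mean martingale $\{\sum_{\altitr=1}^{T}\MDSn{\altitr}\}$. Under the uniform estimate $\dualnorm{\estgradkn{\nbk}{\itr}}\leq\dimensionk{\nbk}\vbound$, the deviation bound \eqref{earlydeviationbound} gives $\dualnorm{\devmatkn{\nbk}{\itr}}\leq 2\dimensionk{\nbk}\vbound$, so \eqref{martingaledifferenceboth} produces the per-step increment bound $\modulus{\MDSn{\itr}}\leq 4\cststepsize\vbound\sum_{\nbk}\dimensionk{\nbk}$. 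Feeding these bounded differences into Azuma--Hoeffding at deviation level $T\cststepsize\zmargin$ and simplifying produces a sub-Gaussian tail of the form \eqref{standardazumabis}.

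\textbf{Main obstacle.} The crux is squarely part (i): one must align the Fenchel smoothness estimate with the \emph{exact} \eqref{MXL} recursion so that the leading term collapses to $-(\optrate-\rate{\matn{\itr}})$ through concavity, while cleanly isolating the three error channels---bias, martingale noise, and second-order curvature---so that each is dispatched by a bound already in hand (\eqref{biasbound}, \eqref{martingaledifferenceboth}, and \cref{lemma:fenchel}, respectively). The subtle points are verifying that the trace-correction of $\projgrad\rate{\cdot}$ vanishes against feasible differences and that the nuclear-norm diameter of each spectrahedron is at most $2$; once \eqref{descentargumenttwo} is established, parts (ii)--(iii) are routine telescoping and Jensen bookkeeping, and the tail bound is a direct Azuma application.
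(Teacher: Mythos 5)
Your proposal is correct and follows essentially the same route as the paper: per-block application of the Fenchel smoothness bound \eqref{fenchelsmooth} to the \eqref{MXL} recursion, insertion of the decomposition \eqref{estimatordecomposition} to isolate the concavity term, the bias term (dispatched via \eqref{biasbound} and the diameter bound $\norm{\matkn{\nbk}{\itr}-\optmatk{\nbk}}\leq 2$), the martingale term $\MDSn{\itr}$, and the curvature term; followed by telescoping, Jensen for the ergodic average, and Azuma\textendash Hoeffding on the bounded martingale differences for \eqref{standardazumabis}. All constants and intermediate estimates match the paper's derivation.
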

\begin{proof}[Proof of \cref{lemma:globaldescentargument}]
\renewcommand{\dimensionk}[1]{\maxdimension}%
\renewcommand{\meansqdimension}{\maxdimension}%
\renewcommand{\meandimension}{\maxdimension}%
\eqref{lemma:globaldescentargumenti}  If~$\optmat\in\optMat$, the concavity of $\ratefunction$ gives
\begin{equation}\label{concavityone}
\textstyle
\sum_{\nbk=1}^{\kk}\inner{ \projgradk{\nbk}\rate{\mat}  }{\matk{\nbk}-\optmatk{\nbk}} 
\leq
\rate{\mat}-\optrate
, 
\quad \forall\mat\in\compactset.
\end{equation}
%
It follows from  \cref{lemma:fenchel} that 
\begin{equation} \label{earlydescentargumenttwo} 
\begin{array}{l}
\lyapunov{\optmat}{\dualmatn{\itr+1}}
 \refereq{\eqref{MXL}}{=} 
 \sum_{\nbk=1}^{\kk}\fenchel{\optmatk{\nbk}}{\dualmatkn{\nbk}{\itr} + \stepsizen{\itr} \estgradkn{\nbk}{\itr}}
\\
\quad \refereq{\eqref{fenchelsmooth}}{\leq}    
\lyapunov{\optmat}{\dualmatn{\itr}} +\sum_{\nbk=1}^{\kk}\Big[ \stepsizen{\itr}  \inner{  \estgradkn{\nbk}{\itr}}{\matkn{\nbk}{\itr}-\optmatk{\nbk}}  + \frac{\stepsizen{\itr}^2}{2\strongly} \dualnorm{\estgradkn{\nbk}{\itr}}^2
\Big]
\\
\quad \refereq{\eqref{estimatordecomposition}}{=}   
\lyapunov{\optmat}{\dualmatn{\itr}} + \stepsizen{\itr}  \sum_{\nbk=1}^{\kk}\inner{ \projgradk{\nbk}\rate{\matn{\itr}}  }{\matkn{\nbk}{\itr}-\optmatk{\nbk}}  +  \MDSn{\itr} 
\\
\hfill 
+ \sum_{\nbk=1}^{\kk}  \Big[\stepsizen{\itr}  \inner{   \biaskn{\nbk}{\itr}  }{\matkn{\nbk}{\itr}-\optmatk{\nbk}}  + \frac{\stepsizen{\itr}^2}{2\strongly} \dualnorm{\estgradkn{\nbk}{\itr}}^2\Big]
,
\end{array}
\end{equation}
%
Besides, \eqref{biasbound} gives
$
\modulus{ \inner{   \biaskn{\nbk}{\itr}  }{\matkn{\nbk}{\itr}-\optmatk{\nbk}} }
\leq
4\kk\meanlipschitzk{\nbk}\deltan{\itr} 
$,
%
which combined with \eqref{concavityone} and \eqref{earlydescentargumenttwo} yields Inequality \eqref{descentargumenttwo}.
%

\eqref{lemma:globaldescentargumentii}
By telescoping \eqref{descentargumenttwo} $\itr-1$ times, dividing by $\sum_{\altitr=1}^{\itr} \stepsizen{t} $, and using 
$\lyapunov{\optmat}{\dualmatn{n+1}}\geq 0$, 
we find
\begin{equation} \label{ergodicaverageidentity} 
\begin{array}{l}
\optrate - \frac{\sum_{\altitr=1}^{\itr} \stepsizen{\altitr} \rate{\matn{\altitr}}}{\sum_{\altitr=1}^{\itr} \stepsizen{\altitr} }
 \leq
\frac{\lyapunov{\optmat}{\dualmatn{1}}}{\sum_{\altitr=1}^{\itr} \stepsizen{\altitr} } + \frac{\sum_{\altitr=1}^{\itr}  \MDSn{\altitr} }{\sum_{\altitr=1}^{\itr} \stepsizen{\altitr} }
\qquad\qquad\quad
\\
\hfill 
+ 4\kk^2\meanlipschitz\frac{\sum_{\altitr=1}^{\itr} \stepsizen{\altitr}\deltan{\altitr} }{\sum_{\altitr=1}^{\itr} \stepsizen{\altitr} }  
+ \frac{1}{2\strongly}\frac{\sum_{\altitr=1}^{\itr} \stepsizen{\altitr}^2 \sum_{\nbk=1}^{\kk}   \dualnorm{\estgradkn{\nbk}{\altitr}}^2}{\sum_{\altitr=1}^{\itr} \stepsizen{\altitr} }
.
\end{array}
\end{equation}
By concavity of $\ratefunction$, the time average of the estimates satisfies
\begin{equation}\label{concavitymean}
\textstyle
\rate{ \meanmatn{\itr}}
\geq
\big(\frac{
1
}{
\sum_{\altitr=1}^{\itr} \stepsizen{\altitr} 
}\big)
\sum_{\altitr=1}^{\itr} \stepsizen{\altitr}   \rate{\matn{\altitr}}   
.
\end{equation}
Introducing  the suggested policies in \eqref{ergodicaverageidentity} and using \eqref{concavitymean} gives
\begin{equation} \label{readyergodicaverageidentity} 
\begin{array}{l}
\optrate - \rate{ \meanmatn{\itr}}
 \leq
\frac{\lyapunov{\optmat}{\dualmatn{1}}}{\cststepsize\sum_{\altitr=1}^{\itr} \altitr^{-\alpha} } + \frac{\sum_{\altitr=1}^{\itr}  \MDSn{\altitr} }{\cststepsize \sum_{\altitr=1}^{\itr} \altitr^{-\alpha} }
\qquad\qquad\qquad\quad
\\
\hfill 
+ 4\kk^2\meanlipschitz\frac{ \cstdelta\sum_{\altitr=1}^{\itr} \altitr^{-\alpha-\beta} }{\sum_{\altitr=1}^{\itr} \altitr^{-\alpha} }  
+ \frac{\cststepsize}{2\strongly}\frac{ \sum_{\altitr=1}^{\itr}  \altitr^{-2\alpha} \sum_{\nbk=1}^{\kk}   \dualnorm{\estgradkn{\nbk}{\altitr}}^2}{\sum_{\altitr=1}^{\itr} \altitr^{-\alpha} }
.
\end{array}
\end{equation}
\obsolete{
Applying $\itr$ times \martingaledifference{} 
gives
\begin{equation} \label{nullexpectationofsum} 
\begin{array}{r}
 \Expectation{ \sum_{\altitr=1}^{\itr}  \MDSn{\altitr}} 
=
\Expectation{ \Expectation{ \sum_{\altitr=1}^{\itr}  \MDSn{\altitr} \cond \filtrationn{0}  }} 
\ \ \\ \hfill
 =
\Expectation{ \Expectation{ \sum_{\altitr=2}^{\itr}  \MDSn{\altitr} \cond \filtrationn{1}  } }
 = \hdots =
\Expectation{ \Expectation{   \MDSn{\itr} \cond \filtrationn{\itr-1}  } } = 0.
\end{array} 
\end{equation}
}%
Since \martingaledifference{} 
lends $\{\MDSn{\itr}\}$ the quality of a martingale difference sequence, $\expectation{ \sum_{\altitr=1}^{\itr}  \MDSn{\altitr}}=0$, and \eqref{expectationyergodicaverageidentity} follows by expectation of \eqref{readyergodicaverageidentity}. 

\eqref{lemma:globaldescentargumentiii}
Setting $\alpha=\beta=0$ in \eqref{expectationyergodicaverageidentity}  gives us \eqref{standardtroptrateboundmeanexpectation}.
By using the bounds $\vboundk{1},\dots,\vboundk{\kk}  $ in combination with \eqref{earlydeviationbound} and \modulusmartingaledifference{}, 
we find that
$
\modulus{\MDSn{\itr}} 
\leq
4\vbound\meandimension\cststepsize 
$ for $\itr=1,\dots,T$, and the martingale difference sequence $\{\MDSn{\itr}\}$ is bounded.
It follows from Azuma's inequality that
$
\textstyle
\proba{\sum_{\itr=1}^{T}  \MDSn{\itr}   > T\zmargin\,\cststepsize }
 \leq 
\brackexponential{-\frac{(T\cststepsize\zmargin)^2}{2   T   (4\vbound\meandimension\cststepsize)^2 }} 
$ 
 for  any $\zmargin>0$,
which is equivalent to \eqref{standardazumabis}.
\end{proof}
\showMXLzerolastiterate{
\cref{thm:lasttrMXL,thm:trMXLiinew} follow from \cref{lemma:bias,lemma:globaldescentargument}.
\begin{proof}[Proof of \cref{thm:lasttrMXL}]
Following the line of thought of the proof of \cite[Theorem 5.1]{BLM18}, we first show there one can find a solution $\optmat\in\optMat$ such that
\begin{equation} \label{liminfassumption}
\textstyle
\liminf_{\itr\to\infty} \Lyapunov{\optmat}{\dualmatn{\itr}} = 0
\quad
\as \, .
\end{equation}
Next, we see that $\{\lyapunov{\optmat}{\dualmatn{\itr}}\}$ converges almost surely (\as{}) towards a finite quantity which, in view of \eqref{liminfassumption}, can only be~$0$. 
%
 \As{} convergence of $\{\matn{\itr}\}$ towards $\optmat$ can then be inferred from \cref{lemma:fenchel}-\eqref{fenchelstronglyconvex}.
%
The assumption of  non-increasing $\{\deltan{\itr}\}$, together with \reflasttrMXLi{} and \reflasttrMXLii{}, implies $\deltan{\itr}\downarrow 0$ which, in view of \eqref{perturbatorbound}, secures \as{} convergence of $\{\testmatn{\itr}\}$ as well.

First observe that \eqref{liminfassumption} holds if, almost surely, there exists a subsequence of $\{\matn{\itr}\}$ that converges towards a solution~$\optmat\in\optMat$.
Suppose this condition not to hold, and let $\limitpoints$ denote the set of the limit points of all  subsequences of $\{\matn{\itr}\}$. 
%
Then, almost surely, we have $\optMat\cap \limitpoints=\emptyset$ and, since $\limitpoints$ is closed 
by construction and $\ratefunction$ is continuous and  convex, 
$ \varrho \defeq \optrate - \max_{\mat\in \limitpoints}{\rate{\mat}}>0$.
\obsolete{
\footnotetext{Indeed, suppose there is  a sequence $\{\compactmatn{\altitr}\}$ in $\limitpoints$ converging to $\compactmatn{\infty}\notin \limitpoints$. For any be any sequence of positive numbers $\{\epsilon_\altitr\}$ with $\epsilon_\altitr\downarrow 0$, one can find $\{\matn{\itr_\altitr}\}$, subsequence of $\{\matn{\itr}\}$, such that $\norm{\matn{\itr_\altitr}-\compactmatn{\altitr}}<\epsilon_\altitr$ for all $\altitr$. Hence, $\matn{\itr_\altitr}\to\compactmatn{\infty}$ and we reach the contradiction $\compactmatn{\infty} \in \limitpoints$.}%
}%

Telescoping \eqref{descentargumenttwo}  in  \cref{lemma:globaldescentargument}\eqref{lemma:globaldescentargumenti} and using
 \eqref{trboundgradientestimateMO}, yields
%
\begin{equation} \label{adabsurdumlyapunovdescentbound} 
\begin{array}{l}
\hspace{-0mm}
\lyapunov{\optmat}{\dualmatn{\itr+1}}
 \leq
\lyapunov{\optmat}{\dualmatn{1}} - \sum_{\altitr=1}^{\itr} \stepsizen{\altitr} \left[\optrate-\rate{\matn{\altitr}}\right]
\quad \  \\ 
\hfill 
{+}  \sum_{\altitr=1}^{\itr}\MDSn{\altitr}  {+\,} 4\kk^2 \meanlipschitz \sum_{\altitr{=}1}^{\itr} \stepsizen{\altitr}\deltan{\altitr}  
{+}\frac{\kk(\optrate\maxdimension)^2}{2^{2\kk{+}1}\strongly} \sum_{\altitr{=}1}^{\itr} \frac{\stepsizen{\altitr}^2}{\deltan{\altitr}^2} 
,
\end{array}
\end{equation}
where $\{\MDSn{\itr}\}$ is the difference sequence of a martingale  with respect to the filtration $\{\filtrationn{\itr}\}$.
In view of \reflasttrMXLii{} and \reflasttrMXLiii{},  the last two terms in the second member of \eqref{adabsurdumlyapunovdescentbound} converge as $\itr\to\infty$. As for the third term, since
\begin{equation}   \label{sumMDSbound}
\textstyle 
\sum_{\altitr=1}^{\infty}  \bigexpectation{\MDSn{\altitr}^2\cond \filtrationn{\altitr-1}  } 
\leq 
\frac{4\kk(\optrate\maxdimension)^2}{2^{2\kk}}   \sum_{\altitr=1}^{\infty}  \frac{ \stepsizen{\altitr}^2 }{\deltan{\altitr}^2}        
 \refereq{\reflasttrMXLiii{}}{<} 
 \infty ,
\end{equation}
\cite[Theorem 2.18]{hall80} applies with parameter $p=2$, and it follows that $\sum_{\altitr=1}^{\itr} \MDSn{\altitr}$ converges \as{} as $\itr\to\infty$. 
Finally, one can find a subsequence $\{\matn{\itr_\altitr}\}$ that converges to a point of $\limitpoints$ and thus satisfies $  \optrate - \rate{\matn{\itr_\altitr}} >  \varrho/2 $ for $\altitr$ large enough. It follows from \reflasttrMXLi{} that the second term  $\sum_{\altitr=1}^{\infty} \stepsizen{\itr} [   \rate{\matn{\itr}} - \rate{\optmat} ] \to-\infty$.
All in all we find that $\Lyapunov{\optmat}{\dualmatn{\itr}}\to-\infty$ \as{}, which is in contradiction with the nonnegativity of $\lyapunovfunction$. We infer that \eqref{liminfassumption} is true.
%

It remains to show that $\{\lyapunov{\optmat}{\dualmatn{\itr}}\}$ is almost surely convergent.
%
%
To do so we rely on
Doob's  convergence theorem for supermartingales \cite[Theorem 2.5]{hall80}.
Recalling \eqref{descentargumenttwo}, and using \eqref{trboundgradientestimateMO} and $\rate{\matn{\itr}} - \rate{\optmat}  \leq 0$, we find
\begin{equation}\label{nottightlyapunovdescentbound}
\begin{array}{l} 
\hspace{-1mm}
\lyapunov{\optmat}{\dualmatn{\itr+1}}
 { \,\leq\,}
\lyapunov{\optmat}{\dualmatn{\itr}}
{\,+\,}\MDSn{\itr} 
{\,+\,} 4\kk^2\meanlipschitz\stepsizen{\itr}\deltan{\itr}   {\,+\,} \frac{  \kk ( \optrate    \maxdimension\stepsizen{\itr})^2}{2^{2\kk{+}1}\strongly \deltan{\itr}^2 }
.
\end{array}
\end{equation}
Consider $\supermartn{\itr}=  \sum_{\altitr=\itr+1}^{\infty} [  4\kk^2\meanlipschitz\stepsizen{\altitr}\deltan{\altitr}   + {  \kk ( \optrate    \maxdimension\stepsizen{\altitr})^2}/({2^{2\kk{+}1}\strongly \deltan{\altitr}^2 })
] + \lyapunov{\optmat}{\dualmatn{\itr+1}} $. 
Under assumptions \reflasttrMXLii{} and \reflasttrMXLiii{}, $\supermartn{0}$ is  finite by construction.
We infer from \martingaledifference{} 
and \eqref{nottightlyapunovdescentbound} that $\expectation{\supermartn{\itr}\cond \filtrationn{\itr-1} } \leq \supermartn{\itr-1}$ for     $\itr\geq 1$,
and $\{\supermartn{\itr}\}$ is a supermartingale with respect to $\{ \filtrationn{\itr}\}$, thus satisfying $\Expectation{\supermartn{\itr} } \leq \supermartn{0} <\infty$.
Hence, $\{\supermartn{\itr}\}$ is uniformly $L^1$-bounded and Doob's theorem applies.  It follows that $\{\supermartn{\itr}\}$, and consequently $\{\lyapunov{\optmat}{\dualmatn{\itr}}\}$, are almost surely convergent, which completes the proof.
\end{proof}
}{ 
\cref{thm:trMXLiinew} is a consequence of \cref{lemma:bias,lemma:globaldescentargument}.
}%

\begin{proof}[Proof of \cref{thm:trMXLiinew}]
\renewcommand{\dimensionk}[1]{\maxdimension}%
\renewcommand{\meansqdimension}{\maxdimension}%
\renewcommand{\meandimension}{\maxdimension}%
%
%
By considering \cref{lemma:globaldescentargument}\eqref{lemma:globaldescentargumentiii} with the upper bounds $(\dimensionk{\nbk}\vbound)={  \dimensionk{\nbk}}  \optrate/({2^{\kk-1}\cstdelta})$ supplied by \eqref{trboundgradientestimateMO}, we find
\begin{equation} \label{basictroptrateboundmeanexpectation} 
\textstyle
\optrate -\Expectation{ \rate{ \meanmatn{T}} }
\refereq{\eqref{standardtroptrateboundmeanexpectation}}{\leq}
%
\frac{\kk\log\maxmm}{T \cststepsize } + 4\kk^2\meanlipschitz\cstdelta  + \frac{\kk   (\optrate\meansqdimension )^2 \cststepsize}{\strongly  2^{2\kk-1}  \cstdelta^2}
,
\end{equation} 
where  we have used $\lyapunov{\optmat}{\dualmatn{1}} \leq \kk\log\maxmm  $, and
 \begin{equation} 
 \label{basicazumabis}
\textstyle
\proba{\frac{1 }{ T\cststepsize  } \sum_{\itr=1}^{T}  \MDSn{\itr}  \leq \zmargin }
\refereq{\eqref{standardazumabis}}{\geq}
1-\exponential{-\frac{2^{2\kk-5} T\zmargin^2\cstdelta^2}{(\optrate \kk \meandimension)^2 }  }
.
\end{equation}
The right member of \eqref{basictroptrateboundmeanexpectation} is convex in $\vect{\cststepsize,\cstdelta}$ and minimized for the policy $\vect{\cststepsize,\cstdelta}=\vect{\stepsize\,T^{- 3/4},\delta\, T^{- 1 / 4}}$, where
\obsolete{
$$ a = \kk\log\maxmm, \ b = 4\kk^2\meanlipschitz  , \ c= \frac{\kk  (\optrate \meansqdimension)^2  }{\strongly  2^{2\kk-1}  } $$
$$
\sqrt{ \frac{2}{b\sqrt{c}}  } = \sqrt{ \frac{\sqrt{ \strongly  2^{2\kk-3}  } }{\meanlipschitz   \optrate \sqrt{ \kk^5   }\meansqdimension}  }  = \frac{1}{\sqrt{\meanlipschitz   \optrate}} \left(\frac{\strongly  2^{2\kk-3}}{\kk^5 \meansqdimension^2}\right)^{1/4}
$$

$$ \sqrt{ \frac{b\sqrt{c}}{2}  } = \sqrt{ \frac{\meanlipschitz   \optrate \sqrt{ \kk^5   }\meansqdimension}{\sqrt{ \strongly  2^{2\kk-3}  } }  } = \sqrt{\meanlipschitz   \optrate} \left(\frac{\kk^5 \meansqdimension^2}{\strongly  2^{2\kk-3}}\right)^{1/4}   $$

$$  f(\cststepsize,\cstdelta) =  \frac{a}{T \cststepsize } + b \cstdelta  +c \frac{\cststepsize}{  \cstdelta^2} $$
$$  -  \frac{a}{T \cststepsize^2} +c \frac{1}{  \cstdelta^2} =0 \Leftrightarrow \cststepsize = \sqrt{\frac{a}{cT}} \cstdelta  $$
$$   b  - 2 c \frac{\cststepsize}{  \cstdelta^3} = 0 \Leftrightarrow \cststepsize = \frac{b}{2c} \cstdelta^3  $$
$$ \Rightarrow  \cststepsize = \frac{b}{2c} \left(\sqrt{\frac{cT}{a}}\cststepsize\right)^3= \sqrt{ \frac{2c}{b} \left(\sqrt{\frac{a}{cT}}\right)^3 } $$
$$ \Rightarrow \cststepsize = \sqrt{ \frac{2}{b\sqrt{c}}  } \left(\frac a T\right)^{3/4}$$
$$ \Rightarrow \cstdelta =   \sqrt{\frac{cT}{a}} \cststepsize =  \sqrt{\frac{cT}{a}} \sqrt{ \frac{2}{b\sqrt{c}}  } \left(\frac a T\right)^{3/4} $$
$$ \Rightarrow \cstdelta =  \sqrt{ \frac{2\sqrt{c}}{b}  } \left(\frac a T\right)^{1/4}  $$
$$
\begin{array}{rl}
   f(\cststepsize,\cstdelta) 
   &=  
   \frac{a}{T \sqrt{ \frac{2}{b\sqrt{c}}  } \left(\frac a T\right)^{3/4} } + b \sqrt{ \frac{2\sqrt{c}}{b}  } \left(\frac a T\right)^{1/4}  +c \frac{\sqrt{ \frac{2}{b\sqrt{c}}  } \left(\frac a T\right)^{3/4}}{  \left(\sqrt{ \frac{2\sqrt{c}}{b}  } \left(\frac a T\right)^{1/4}\right)^2} 
\\   &=  
   \sqrt{ \frac{b\sqrt{c}}{2}  }\left(\frac a T\right)^{1/4} +  2 \sqrt{ \frac{b\sqrt{c}}{2}  } \left(\frac a T\right)^{1/4}  + \sqrt{ \frac{b\sqrt{c}}{2}  }  \left(\frac a T\right)^{1/4}
\\ &=
4 \sqrt{ \frac{b\sqrt{c}}{2}  } \left(\frac a T\right)^{1/4} .
\end{array}
$$
------------
$$ \Rightarrow \cststepsize 
=\sqrt{ \frac{\sqrtstrongly  2^{\kk}}{\meanlipschitz   \optrate \kk \meansqdimension}} \left(\frac{ \log\maxmm}{2}\right)^{3/4} T^{-3/4}
$$
$$ \Rightarrow \cstdelta 
=  2 \sqrt{\frac{\meanlipschitz   \optrate\kk^3 \meansqdimension}{\strongly  2^{\kk}}}   \left(\frac{\log\maxmm}{2}\right)^{1/4}  T^{-1/4}
$$
$$
\begin{array}{rl}
 \Rightarrow  f(\cststepsize,\cstdelta) 
 &=
8   \sqrt{ \frac{\meanlipschitz   \optrate \kk^3 \meansqdimension}{\strongly  2^{\kk}}}  \left(\frac{\log\maxmm}{2T}\right)^{1/4} .
\end{array}
$$
---------

\begin{equation} \label{optbasictroptrateboundmeanexpectation} 
\textstyle
\optrate -\Expectation{ \rate{ \meanmatn{T}} }
\leq
\left[
\frac{\kk\log\maxmm}{\stepsize} + 4\kk^2\meanlipschitz\, \delta  + \left(\frac{\kk   \meansqdimension^2  (\optrate)^2 }{\strongly  2^{2\kk-1}   }\right) \frac{\stepsize}{\delta^2 }
\right]  T^{- 1/4}
,
\end{equation}

}%
\begin{equation}
 \textstyle
\stepsize 
=
\sqrt{ \frac{\sqrtstrongly  2^{\kk}}{\meanlipschitz   \optrate \kk \meansqdimension}} \left(\frac{ \log\maxmm}{2}\right)^{3/4} 
, \quad
\delta 
=
2 \sqrt{\frac{\meanlipschitz   \optrate\kk^3 \meansqdimension}{\strongly  2^{\kk}}}   \left(\frac{\log\maxmm}{2}\right)^{1/4}
.
\end{equation}
We find \eqref{troptrateboundmeanexpectation} by substituting $\cststepsize$ and $\cstdelta$ in \eqref{basictroptrateboundmeanexpectation} with the suggestion $\vect{\cststepsize,\cstdelta}=\vect{\stepsize\,T^{- 3/4},\delta\, T^{- 1 / 4}}$. Then, \eqref{troptrateboundmeaninterval} follows from \eqref{troptrateboundmeanexpectation} and \eqref{basicazumabis} after setting $ \cstdelta=\delta\, T^{- 1 / 4}$ in the right member of \eqref{basicazumabis}. 
Claims \eqref{thm:trMXLiifirst} and \eqref{thm:trMXLiisecond} have been shown.
\obsolete{ 
\begin{equation} 
\textstyle
\Proba{\frac{\sum_{\itr=1}^{T}  \MDSn{\itr}  }{ T\cststepsize  }  \leq \zmargin }
\refereq{\eqref{standardazumabis}}{\geq}
1-\Exponential{-\left[\frac{2^{2\kk-5} \delta^2}{\optrate^2 \kk^2 \meandimension^2 }\right]  \sqrt{T}\zmargin^2 }
.
\end{equation}
}%
\end{proof}

\subsection{Analysis of the \ac{MXL+} algorithm}
\label{appendix:MXLzeroplus}

The $\magnitude{\delta}$ bound for the bias in \cref{lemma:bias} still holds when the \ac{SPSA}plus{} gradient estimator is used.
The offset in \eqref{estgradratekMO} allows us,  however, to derive an $\magnitude{1/\delta}$ bound for the norm, in place of the harmful $\magnitude{1/\delta}$ bound  inherent with \ac{SPSA}{}.

\obsolete{
For this, observe that the sequence generated by \eqref{MXL} satisfies
\begin{equation}\label{BIBO} \begin{array}{rcl}
\norm{\matn{\itr} -\matn{\itr-1} }
&\refereq{\globalnorm{}}{=}&
\sum_{\nbk=1}^{\kk} \norm{\matkn{\nbk}{\itr}-\matkn{\nbk}{\itr-1}} 
\\
&\refereq{\eqref{DAlipschitz}}{\leq}&
\sum_{\nbk=1}^{\kk} \oneoverstrongly  \dualnorm{ \stepsizen{\itr} \estgradkn{\nbk}{\itr-1} }
\\
&\refereq{\globaldualnorm{}}{=}&
\overstrongly{\kk  \stepsizen{\itr}} \dualnorm{ \estgradn{\itr-1}}
.
\end{array}
\end{equation}
 
}%

\begin{lemma}
\label{lemma:norm}
%
If \ac{MXL+} is implemented under \eqref{deltacondition} and \eqref{generalfirstcondition}-\eqref{generalsecondcondition}, then $\dualnorm{\estgradkn{\nbk}{\itr}}$ is uniformly bounded 
for $\nbk=1,\dots,\kk$.

In particular, if $\vect{\stepsizen{\itr},\deltan{\itr}}=\vect{\stepsize \, \itr^{-\alpha},\delta\, \itr^{-\beta}}$ with 
\begin{equation}\textstyle 
\label{oldcondition}
\textup{(a)}\quad 0 \leq \beta \leq \alpha,
\qquad\quad
\textup{(b)} \quad \maxdimension \Lipschitz\kk \, \stepsize < 2 \strongly \, \delta ,
\end{equation}
then there is $\normbound{\alpha}{\beta}{\stepsize}{\delta}
{\,<\,}\infty$  such that  
$
\dualnorm{\estgradkn{\nbk}{\itr}}
{\,\leq\,} 
\frac{\dimensionk{\nbk}}{2}
\,\normbound{\alpha}{\beta}{\stepsize}{\delta}
$ 
holds for all $\itr$ and for $\nbk=1,\dots,\kk$ and, when $\beta=0$, \begin{equation} \label{normboundtwo}
\textstyle
\normbound{\alpha}{0}{\stepsize}{\delta}
=
\Big(  \frac{  \twotwitchFlipschitz \coef{\alpha}    \strongly}{\sqrt{\maxdimension}  } \Big) \left( \frac{2\strongly}{\maxdimension \Lipschitz\kk  } -  \frac{\stepsize  }{\delta} \right)^{-1}
.
\end{equation}
\end{lemma}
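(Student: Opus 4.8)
The plan is to convert the one-shot estimate \eqref{estgradratekMO} into a self-referential inequality and then close it by a contraction argument. First I would start from the definition of the estimator and use the bound $\dualnorm{\zmatkn{\nbk}{\itr}}\leq 1/2$ recorded in \cref{appendix:SPSA} to write
\[
\dualnorm{\estgradkn{\nbk}{\itr}}
= \frac{\dimensionk{\nbk}}{\deltan{\itr}}\,\modulus{\rate{\testmatn{\itr}}-\rate{\testmatn{\itr-1}}}\,\dualnorm{\zmatkn{\nbk}{\itr}}
\leq \frac{\dimensionk{\nbk}}{2\deltan{\itr}}\,\modulus{\rate{\testmatn{\itr}}-\rate{\testmatn{\itr-1}}}.
\]
The Lipschitz property \eqref{lipschitz} of $\ratefunction$ relative to the nuclear norm then gives $\dualnorm{\estgradkn{\nbk}{\itr}}\leq (\dimensionk{\nbk}\Lipschitz/2\deltan{\itr})\,\norm{\testmatn{\itr}-\testmatn{\itr-1}}$, so the whole problem reduces to controlling the increment of the query point across two successive frames. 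The crucial observation is that this increment reintroduces the \emph{previous} gradient estimate, which is what makes the bound recursive.

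Next I would expand $\testmatn{\itr}-\testmatn{\itr-1}$ block by block through the homothety \eqref{linearperturbator}, writing $\testmatkn{\nbk}{\itr} = (1-\deltan{\itr}/\radiusk{\nbk})\matkn{\nbk}{\itr} + (\deltan{\itr}/\radiusk{\nbk})\cmatk{\nbk} + \deltan{\itr}\zmatkn{\nbk}{\itr}$. The block difference then splits into three pieces: a \emph{pivot} term proportional to $\matkn{\nbk}{\itr}-\matkn{\nbk}{\itr-1}$; terms carrying the radius increment $\deltan{\itr-1}-\deltan{\itr}$ together with $\cmatk{\nbk}$; and the random-direction terms $\deltan{\itr}\zmatkn{\nbk}{\itr}-\deltan{\itr-1}\zmatkn{\nbk}{\itr-1}$. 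For the pivot term I would invoke the non-expansiveness \eqref{DAlipschitz} of the exponential mapping together with the dual update in \eqref{MXL}, which yields $\norm{\matkn{\nbk}{\itr}-\matkn{\nbk}{\itr-1}}\leq (\stepsizen{\itr-1}/\strongly)\,\dualnorm{\estgradkn{\nbk}{\itr-1}}$ --- this is the recursive term. The remaining two pieces are purely non-recursive and, after using $\norm{\matkn{\nbk}{\itr}}=\norm{\cmatk{\nbk}}=1$, the bound $\norm{\zmatkn{\nbk}{\itr}}\leq\sqrt{\maxdimension}$ from \eqref{normsMO}, and the ratio bound $\deltan{\itr-1}/\deltan{\itr}<\infty$ supplied by \eqref{generalsecondcondition} (resp. part~(a) of \eqref{oldcondition}), collapse to a quantity of order $\coef{\alpha}\Lipschitz\kk\sqrt{\maxdimension}\,\deltan{\itr}$; the base case $\itr=1$ is handled identically using the offset initialization $\offsetk{\nbk}=\rate{\matn{1}}$ and the displacement bound \eqref{perturbatorbound}.

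Summing over the $\kk$ users, dividing by $\dimensionk{\nbk}$, and setting $w_\itr := \max_{\nbk}\dualnorm{\estgradkn{\nbk}{\itr}}/\dimensionk{\nbk}$, I would obtain a scalar recursion of the form $w_\itr \leq \rho_\itr\,w_{\itr-1} + c_\itr$, where the contraction coefficient is $\rho_\itr = (\Lipschitz\kk\maxdimension/2\strongly)\,(\stepsizen{\itr-1}/\deltan{\itr})$ and $c_\itr = \magnitude{\coef{\alpha}\Lipschitz\kk\sqrt{\maxdimension}}$ gathers the non-recursive contributions (bounding $\dualnorm{\estgradkn{\ell}{\itr-1}}\leq\maxdimension\,w_{\itr-1}$ inside the pivot sum). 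Condition \eqref{generalfirstcondition} (resp. part~(b) of \eqref{oldcondition}) is precisely what forces $\rho:=\sup_\itr\rho_\itr<1$, so iterating yields the uniform bound $w_\itr\leq\max\{w_1,\ \sup_\itr c_\itr/(1-\rho)\}<\infty$, establishing the first claim. For the power law $(\stepsizen{\itr},\deltan{\itr})=(\stepsize\,\itr^{-\alpha},\delta\,\itr^{-\beta})$ I would then track constants explicitly: with $\beta=0$ the radius is constant, $\sup_\itr\rho_\itr=(\Lipschitz\kk\maxdimension/2\strongly)(\stepsize/\delta)$, and $\coef{\alpha}$ absorbs $\sup_\itr(\stepsizen{\itr-1}/\stepsizen{\itr})$. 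Writing $1-\rho = (\maxdimension\Lipschitz\kk/2\strongly)\,(2\strongly/(\maxdimension\Lipschitz\kk)-\stepsize/\delta)$ and substituting $c=\coef{\alpha}\Lipschitz\kk\sqrt{\maxdimension}$ into $2c/(1-\rho)$ reproduces exactly the closed form \eqref{normboundtwo}, including the factor $\twotwitchFlipschitz$.

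The main obstacle will be the bookkeeping of the homothetic ``safety-net'' terms so that they land in the additive constant $c_\itr$ rather than inflating the contraction coefficient: the factors $\deltan{\itr}/\radiusk{\nbk}$ and the radius increments $\deltan{\itr-1}-\deltan{\itr}$ must be shown to be bounded uniformly in $\itr$ (this is where \eqref{deltacondition} and \eqref{generalsecondcondition} enter), and one must verify that no residual $1/\deltan{\itr}$ factor survives after the $\deltan{\itr}$ from the direction terms cancels the $1/\deltan{\itr}$ prefactor of the estimator. The second delicate point is matching the numerical constants so the solved recursion coincides with \eqref{normboundtwo}: the denominator $2\strongly/(\maxdimension\Lipschitz\kk)-\stepsize/\delta$ is exactly $(2\strongly/(\maxdimension\Lipschitz\kk))(1-\rho)$, so any slack in estimating $\rho$ or $c$ would perturb the stated constant; keeping the estimates tight (in particular using the sharp $\dualnorm{\zmatkn{\nbk}{\itr}}\leq 1/2$) is therefore essential.
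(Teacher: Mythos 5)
Your proposal is correct and follows essentially the same route as the paper's proof: bound the estimator by $\frac{\dimensionk{\nbk}\Lipschitz}{2\deltan{\itr}}$ times the displacement between consecutive query points, isolate the recursive pivot term via the non-expansiveness \eqref{DAlipschitz} of the exponential map, and close the resulting linear recursion as a geometric series whose ratio is controlled by \eqref{generalfirstcondition} and whose additive term is controlled by \eqref{perturbatorbound} and \eqref{generalsecondcondition}. The only cosmetic difference is that the paper splits $\modulus{\rate{\testmatn{\itr}}-\rate{\testmatn{\itr-1}}}$ by a triangle inequality through the intermediate pivots $\matn{\itr},\matn{\itr-1}$ rather than expanding the homothety block by block, and solves the recursion by an explicit induction rather than a $\max\{w_1,c/(1-\rho)\}$ bound; the constants match in either presentation.
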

\begin{proof}[Proof of \cref{lemma:norm}]
With the convention $\offsetn{0}=  \rate{\matn{1}}$, we have, for $\nbk=1,\dots,\kk$,
\begin{equation}
\label{firstsecondmomentcomputation}
\nocolsep
\begin{array}{c}
\dualnorm{\estgradkn{\nbk}{1}}
 \refereq{\!\!\!\!\eqref{estgradratekMO}\!\!\!\!}{\leq}  
\frac{\dimensionk{\nbk}}{\deltan{1}} \modulus{  \rate{ \testmatn{1}} - \rate{\matn{1}} } \dualnorm{\zmatkn{\nbk}{1}}
\qquad\qquad\qquad\\\hfill
  \refereq{\eqref{lipschitz}}{\leq} 
\frac{\dimensionk{\nbk} \Lipschitz}{2\deltan{1}} \norm{  {\testmatn{1}}  - {\matn{1}}} 
 \refereq{\eqref{perturbatorbound}}{\leq} 
\mutehalftwitchFlipschitz \dimensionk{\nbk} \Lipschitz   \kk\sqrt{\maxdimension}  
,
\end{array}
\end{equation}
and it follows from \globaldualnorm{} that
$
\dualnorm{\estgradn{1}}
\leq \mutehalftwitchFlipschitz\maxdimension \Lipschitz \kk\sqrt{\maxdimension}   
$.
%
For $\itr\geq 2$,
\begin{equation}
\label{earlysecondmomentcomputation}
\nocolsep
\begin{array}{l}
\dualnorm{\estgradkn{\nbk}{\itr}}
 \refereq{\eqref{estgradratekMO}}{\leq}  
\frac{\dimensionk{\nbk}}{\deltan{1}} \modulus{  \rate{ \testmatn{\itr}} - \rate{ \testmatn{\itr-1}} } \dualnorm{\zmatkn{\nbk}{\itr}}
\\ \qquad \ 
\obsolete{
\leq
\frac{\dimensionk{\nbk}}{2\deltan{\itr}} \Big| [  \rate{ \testmatn{\itr}} - \rate{ \matn{\itr}} ]
\\ \hfill
+ [\rate{ \matn{\itr}} -\rate{ \matn{\itr-1}}] - [\rate{ \testmatn{\itr-1}}- \rate{ \matn{\itr-1}} ]\Big|
\\ \qquad \ 
}
\leq   
\frac{\dimensionk{\nbk}}{2\deltan{\itr}} \big[ \modulus{ \rate{ \testmatn{\itr}} - \rate{ \matn{\itr}} } 
+ \modulus{\rate{ \matn{\itr}} -\rate{ \matn{\itr-1}} } 
\\  \hfill
+ \modulus{\rate{ \testmatn{\itr-1}}- \rate{ \matn{\itr-1}} }\big]
\\ \qquad \ 
  \refereq{\eqref{lipschitz}}{\leq} 
\frac{\dimensionk{\nbk}}{2\deltan{\itr}} \big[\Lipschitz \norm{\testmatn{\itr} -  \matn{\itr} } +\Lipschitz \norm{\matn{\itr}-\matn{\itr-1}}
\\ \hfill
 +\Lipschitz \norm{ \testmatn{\itr-1}-  \matn{\itr-1} }\big]
\\ \qquad \ 
  \refereq{\eqref{perturbatorbound}}{\leq} 
\frac{\dimensionk{\nbk}\Lipschitz}{2\deltan{\itr}} \big[ \twitchFlipschitz \kk\sqrt{\maxdimension} (\deltan{\itr}+\deltan{\itr-1})+  \sum_{\nbk=1}^{\kk} \norm{\matkn{\nbk}{\itr}-\matkn{\nbk}{\itr-1}}  \big]
\\ \qquad \ 
 \refereq{\eqref{DAlipschitz}}{\leq} 
\frac{\dimensionk{\nbk}\Lipschitz}{2\deltan{\itr}} \big[\twitchFlipschitz \kk\sqrt{\maxdimension}  (\deltan{\itr}+\deltan{\itr-1})+ \overstrongly{\kk \stepsizen{\itr}} \dualnorm{ \estgradn{\itr-1} } \big]
\obsolete{
\\ \
 \refereq{\globaldualnorm{}}{\leq} 
\dimensionk{\nbk}\frac{1}{2}\Lipschitz\kk\sqrt{\maxdimension}  \twitchFlipschitz\left(1+\frac{\deltan{\itr-1}}{\deltan{\itr}}\right) 
\\ 
+ 
\frac{\dimensionk{\nbk}\Lipschitz\kk  \stepsizen{\itr}}{2\strongly\deltan{\itr}} \max_{\altnbk\in\{1,\dots,\kk\}}\dualnorm{ \estgradkn{\altnbk}{\itr-1} } 
}
,
\end{array}
\end{equation}
so that
$\textstyle
\dualnorm{\estgradn{\itr}}
\leq
\frac{\maxdimension \Lipschitz}{2\deltan{\itr}} [\twitchFlipschitz \kk\sqrt{\maxdimension}  (\deltan{\itr}{+}\deltan{\itr{-}1}) {+} \overstrongly{\kk \stepsizen{\itr}} \dualnorm{ \estgradn{\itr{-}1} } ] 
.$
With the convention $\deltan{0}=0$, we find, by induction on $\itr$,
\begin{equation}
\label{onesecondmomentcomputation}
\begin{array}{l}
\dualnorm{\estgradkn{\nbk}{\itr}}
 {\,\leq\,}
\mutehalftwitchFlipschitz \Lipschitz\kk\dimensionk{\nbk}\!\sqrt{\maxdimension}  
 \sum\nolimits_{\altitr=1}^{\itr}
\big( \frac{\maxdimension \Lipschitz\kk  }{2\strongly} \big)^{\itr{-}\altitr}  \prod\nolimits_{\altaltitr=\altitr+1}^{\itr}  \frac{\stepsizen{\altaltitr-1}}{\deltan{\altaltitr}}  \big(1{+}\frac{\deltan{\altitr-1}}{\deltan{\altitr}}\big)  
.
\end{array}
\end{equation}
%

Condition \eqref{generalsecondcondition} tells us that ${\deltan{\itr-1}}/{\deltan{\itr}}$ is uniformly bounded by a finite constant, say, $c<\infty$, while \eqref{generalfirstcondition} rewrites as
\begin{equation}\textstyle
q \defeq \frac{{\maxdimension} \Lipschitz\kk }{2\strongly} \, \big( \sup_{\itr\geq 2} \frac{\stepsizen{\itr-1}}{\deltan{\itr} }  \big) <1 .
\end{equation}
Using $\frac{\stepsizen{\itr-1}}{\deltan{\itr}}\leq \frac{ 2 \strongly q}{\maxdimension \Lipschitz\kk}$ and $\frac{\deltan{\itr-1}}{\deltan{\itr}}\leq c$ in \eqref{onesecondmomentcomputation}, we find, for $\itr\geq 2$,
\begin{equation}
\label{newsecondmomentcomputation}
\begin{array}{l}
\frac{\dualnorm{\estgradkn{\nbk}{\itr}}}{\mutehalftwitchFlipschitz \Lipschitz\kk\dimensionk{\nbk}\sqrt{\maxdimension} }
\leq 
 \sum_{\altitr=1}^{\itr}q^{\itr-\altitr} \left(1+c\right)  
= 
(1+c)\frac{1-q^{\itr}}{1-q}
\leq
\frac{1+c}{1-q}
.
%
\end{array}
\end{equation}
%
%
Under the policies $\stepsizen{\itr}=\stepsize \, \itr^{-\alpha}$  and $\deltan{\itr}=\delta\, \itr^{-\beta}$, \eqref{onesecondmomentcomputation} becomes
\begin{equation} \label{twosecondmomentcomputation}
\begin{array}{l}
\dualnorm{\estgradkn{\nbk}{\itr}}
%
 \leq
\mutehalftwitchFlipschitz\coef{\alpha}\dimensionk{\nbk} \Lipschitz 
\kk\sqrt{\maxdimension}  \sum\nolimits_{\altitr=1}^{\itr}\left( \frac{\stepsize\maxdimension \Lipschitz\kk  }{2\delta\strongly} \right)^{\itr-\altitr} \left( \frac{\itr}{\altitr}\right)^\beta \left[    \frac{ \factorial{(\itr-1)}}{ \factorial{(\altitr-1)}} \right]^{\beta-\alpha} 
,
\end{array}
\end{equation}
where $\coef{\alpha}=1+2^\alpha$. Under  Condition \oldsecondcondition{} 
the last factor is no larger than $1$, and we 
obtain the uniform bound with
\begin{equation}\label{normbound}  
\textstyle 
\normbound{\alpha}{\beta}{\stepsize}{\delta}
=
 \twitchFlipschitz\Lipschitz (1+2^\alpha)\kk\sqrt{\maxdimension}   \sum_{\altitr=1}^{\infty}\big[ \frac{\stepsize\maxdimension \Lipschitz\kk  }{2\delta\strongly} \big]^{\itr-\altitr} \big( \frac{\itr}{\altitr}\big)^\beta,
\end{equation} 
which is finite on condition that \oldfirstcondition{} holds. 
For $\beta=0$,
 \eqref{normbound} 
reduces to a geometric series and \eqref{normboundtwo} follows directly.
\end{proof}

We are now able to show \cref{thm:lasttrMXLMO} and \cref{timecomplexity:trMXLMO}. Again, the Lyapunov function \eqref{lyapunov} and \cref{lemma:globaldescentargument} are used.

\showMXLzerolastiterate{
\begin{proof}[Proof of \cref{thm:lasttrMXLMO}]
Proceed as in the proof of \cref{thm:lasttrMXL}, now with assumptions \trconditionstepsize{}, \trconditionsquared{} and \eqref{trconditionbias} in place of \reflasttrMXLi{}, \reflasttrMXLii{}, \reflasttrMXLiii{}.
Because the conditions of \cref{lemma:norm} are met, there exists $ \cstnormbound <\infty$ such that $  \dualnorm{\estgradkn{\nbk}{\itr}}< \cstnormbound$ for all $k$, so that \eqref{adabsurdumlyapunovdescentbound} and \eqref{nottightlyapunovdescentbound} respectively become, for some~$\optmat\in\optMat$,
\begin{equation} \label{adabsurdumlyapunovdescentboundplus} 
\begin{array}{l}
\lyapunov{\optmat}{\dualmatn{\itr+1}}
 \leq
\lyapunov{\optmat}{\dualmatn{1}} - \sum_{\altitr=1}^{\itr} \stepsizen{\altitr} \left[\optrate-\rate{\matn{\altitr}}\right]
\quad
\\
\hfill 
+  \sum_{\altitr=1}^{\itr}\MDSn{\altitr}  + 4\kk^2 \meanlipschitz \sum_{\altitr=1}^{\itr} \stepsizen{\altitr}\deltan{\altitr}  +\frac{\kk   \cstnormbound^2}{2\strongly} \sum_{\altitr=1}^{\itr} \stepsizen{\altitr}^2 
,
\end{array}
\end{equation}
with
$
\sum_{\altitr=1}^{\infty}  \Expectation{\MDSn{\altitr}^2\cond \filtrationn{\altitr-1}  } 
\leq   
\left(4\kk  \cstnormbound\right)^2  \sum_{\altitr=1}^{\infty}   \stepsizen{\altitr}^2   
 <
 \infty 
$, 
and
\begin{equation}\label{nottightlyapunovdescentboundplus}
\begin{array}{l}
\lyapunov{\optmat}{\dualmatn{\itr+1}}
 \leq
\lyapunov{\optmat}{\dualmatn{\itr}}+  \MDSn{\itr} 
+ 4\kk^2\meanlipschitz\stepsizen{\itr}\deltan{\itr}   + \frac{\kk   \cstnormbound^2}{2\strongly} \stepsizen{\itr}^2
.
\end{array}
\end{equation}
Thus, $\supermartn{\itr}=\lyapunov{\optmat}{\dualmatn{\itr+1}}  +  \sum_{\altitr=\itr+1}^{\infty} [  4\kk^2\meanlipschitz\stepsizen{\altitr}\deltan{\altitr}   + \oneoverstrongly\kk   \cstnormbound^2\stepsizen{\altitr}^2/2 ]$ now defines the supermartingale with respect to $\{ \filtrationn{\itr}\}$.
\end{proof}
}{
\begin{proof}[Proof of \cref{thm:lasttrMXLMO}]
Following the line of thought of the proof of \cite[Theorem 5.1]{BLM18}, we first show that
\begin{equation} \label{liminfassumption}
\textstyle
\liminf_{\itr\to\infty} \Lyapunov{\optmat}{\dualmatn{\itr}} =0
\quad
\as \, .
\end{equation}
Next, we see that $\{\lyapunov{\optmat}{\dualmatn{\itr}}\}$ converges almost surely (\as{}) towards a finite quantity which, in view of \eqref{liminfassumption}, can only be $0$. \As{} convergence of $\{\matn{\itr}\}$ can then be inferred from \cref{lemma:fenchel}-\eqref{fenchelstronglyconvex}.
%
The assumption of  non-increasing $\{\deltan{\itr}\}$, together with \trconditionstepsize{} and \eqref{trconditionbias}, implies $\deltan{\itr}\downarrow 0$ which, in view of \eqref{perturbatorbound}, secures \as{} convergence of $\{\testmatn{\itr}\}$ as well.

First observe that \eqref{liminfassumption} holds if, almost surely, there exists a subsequence of $\{\matn{\itr}\}$ that converges towards $\optmat$.
Suppose this condition not to hold, and let $\limitpoints$ denote the set of the limit points of all  subsequences of $\{\matn{\itr}\}$. 
%
Then, almost surely, we have $\optmat\notin \limitpoints$ and, since $\limitpoints$ is closed 
by construction and $\ratefunction$ is continuous and strictly convex, 
$ \varrho \defeq \rate{\optmat} - \max_{\mat\in \limitpoints}{\rate{\mat}}>0$.
\obsolete{
we can write
\begin{equation}\label{compactset}\begin{array}{c}
\max_{\mat\in \limitpoints}{\rate{\mat}} = \rate{\optmat} - \varrho, \textup{ with }\varrho>0 \ \as{} \, .
\end{array}\end{equation}
}%
\obsolete{
\footnotetext{Indeed, suppose there is  a sequence $\{\compactmatn{\altitr}\}$ in $\limitpoints$ converging to $\compactmatn{\infty}\notin \limitpoints$. For any be any sequence of positive numbers $\{\epsilon_\altitr\}$ with $\epsilon_\altitr\downarrow 0$, one can find $\{\matn{\itr_\altitr}\}$, subsequence of $\{\matn{\itr}\}$, such that $\norm{\matn{\itr_\altitr}-\compactmatn{\altitr}}<\epsilon_\altitr$ for all $\altitr$. Hence, $\matn{\itr_\altitr}\to\compactmatn{\infty}$ and we reach the contradiction $\compactmatn{\infty} \in \limitpoints$.}%
}%

Because the conditions of \cref{lemma:norm} are met, there exists $ \cstnormbound <\infty$ such that $  \dualnorm{\estgradkn{\nbk}{\itr}}< \cstnormbound$ for all $k$.
By telescoping \eqref{descentargumenttwo}  in  \cref{lemma:globaldescentargument}\eqref{lemma:globaldescentargumenti}, we find
\begin{equation} \label{adabsurdumlyapunovdescentboundplus} 
\begin{array}{l}
\lyapunov{\optmat}{\dualmatn{\itr+1}}
 \leq
\lyapunov{\optmat}{\dualmatn{1}} - \sum_{\altitr=1}^{\itr} \stepsizen{\altitr} \left[\optrate-\rate{\matn{\altitr}}\right]
\quad\
\\
\hfill 
+  \sum_{\altitr=1}^{\itr}\MDSn{\altitr}  + 4\kk^2 \meanlipschitz \sum_{\altitr=1}^{\itr} \stepsizen{\altitr}\deltan{\altitr}  +\frac{\kk   \cstnormbound^2}{2\strongly} \sum_{\altitr=1}^{\itr} \stepsizen{\altitr}^2 
,
\end{array}
\end{equation}
where $\{\MDSn{\itr}\}$ is the difference sequence of a martingale  with respect to the filtration $\{\filtrationn{\itr}\}$.
\obsolete{

If we decompose $\estgradkn{\nbk}{\itr}$ as in \eqref{estimatordecomposition} and define $\MDSn{\itr}$ as in \eqref{MDSn}
, then it follows from \eqref{earlydeviationbound} that
\begin{equation} \label{modulusMDSn}
\textstyle
\modulus{\MDSn{\itr}} 
\leq 
2\stepsizen{\itr} \sum_{\nbk=1}^{\kk}  \dualnorm{  \devmatkn{\nbk}{\itr}} 
\leq 
4\kk   \cstnormbound \, \stepsizen{\itr}
.
\end{equation}
By telescoping \eqref{descentargumenttwo}, we find
\begin{equation}\label{adabsurdumlyapunovdescentboundplus}
\begin{array}{l}
\lyapunov{\optmat}{\dualmatn{\itr+1}}
\leq
\lyapunov{\optmat}{\dualmatn{1}} - \sum_{\altitr=1}^{\itr} \stepsizen{\altitr} \left[\optrate-\rate{\matn{\altitr}}\right]
\qquad
\\
\hfill 
+  \sum_{\altitr=1}^{\itr}  \MDSn{\altitr} 
+ 4\kk^2\meanlipschitz \sum_{\altitr=1}^{\itr} \stepsizen{\altitr}\deltan{\altitr}   +   \frac{ \kk \cstnormbound^2}{2\strongly}\sum_{\altitr=1}^{\itr} \stepsizen{\altitr}^2 
,
\end{array}
\end{equation}  
where $\MDSn{\altitr}= \stepsizen{\altitr} \sum_{\nbk=1}^{\kk}   \inner{ \devmatkn{\nbk}{\altitr}}{\matkn{\nbk}{\altitr}-\optmatk{\nbk}} $
satisfies \martingaledifference{} 
and \eqref{modulusMDSn}, so that $\{\MDSn{\itr}\}$ is the difference sequence of a martingale  with respect to the filtration $\{\filtrationn{\itr}\}$. 

}%
In view of assumptions \trconditionsquared{} and \eqref{trconditionbias},  the last two terms in the second member of \eqref{adabsurdumlyapunovdescentboundplus} converge as $\itr\to\infty$. As for the third term, since
\begin{equation}\label{sumMDSboundplus}
\textstyle
\sum_{\altitr=1}^{\infty}  \Expectation{\MDSn{\altitr}^2\cond \filtrationn{\altitr-1}  } 
\leq   
\left(4\kk  \cstnormbound\right)^2  \sum_{\altitr=1}^{\infty}   \stepsizen{\altitr}^2   
 \refereq{\trconditionsquared}{<} \infty ,
\end{equation}
\cite[Theorem 2.18]{hall80} applies with parameter $p=2$, and it follows that $\sum_{\altitr=1}^{\itr} \MDSn{\altitr}$ converges \as{} as $\itr\to\infty$. 
Finally, one can find a subsequence $\{\matn{\itr_\altitr}\}$ that converges to a point of $\limitpoints$ and thus satisfies $  \rate{\optmat} - \rate{\matn{\itr_\altitr}} >  \varrho/2 $ for $\altitr$ large enough. It follows from \trconditionstepsize{} that the second term  $\sum_{\altitr=1}^{\infty} \stepsizen{\itr} [   \rate{\matn{\itr}} - \rate{\optmat} ] \to-\infty$.
All in all we find that $\Lyapunov{\optmat}{\dualmatn{\itr}}\to-\infty$ \as{}, which is in contradiction with the nonnegativity of $\lyapunovfunction$. We infer that \eqref{liminfassumption} is true.
%

It remains to show that $\{\lyapunov{\optmat}{\dualmatn{\itr}}\}$ is almost surely convergent.
%
%
To do so we rely on
Doob's  convergence theorem for supermartingales \cite[Theorem 2.5]{hall80}.
Recalling \eqref{descentargumenttwo}, and using $  \dualnorm{\estgradkn{\nbk}{\itr}}^2< \cstnormbound$ and $\rate{\matn{\itr}} - \rate{\optmat}  \leq 0$, we find
\begin{equation}\label{nottightlyapunovdescentboundplus}
\begin{array}{l}
\lyapunov{\optmat}{\dualmatn{\itr+1}}
 \leq
\lyapunov{\optmat}{\dualmatn{\itr}}+  \MDSn{\itr} 
+ 4\kk^2\meanlipschitz\stepsizen{\itr}\deltan{\itr}   + \frac{\kk   \cstnormbound^2}{2\strongly} \stepsizen{\itr}^2
.
\end{array}
\end{equation}
Let $\supermartn{\itr}=\lyapunov{\optmat}{\dualmatn{\itr+1}}  +  \sum_{\altitr=\itr+1}^{\infty} [  4\kk^2\meanlipschitz\stepsizen{\altitr}\deltan{\altitr}   + \oneoverstrongly\kk   \cstnormbound^2\stepsizen{\altitr}^2/2 ]$. 
Note that, under assumptions \trconditionsquared{} and \eqref{trconditionbias}, $\supermartn{0}$ is  finite by construction.
We infer from \\martingaledifference{} 
and \eqref{nottightlyapunovdescentboundplus} that $\expectation{\supermartn{\itr}\cond \filtrationn{\itr-1} } \leq \supermartn{\itr-1}$ for     $\itr\geq 1$,
\obsolete{
\begin{equation} \label{supermartingale}
\expectation{\supermartn{\itr}\cond \filtrationn{\itr-1} } \leq \supermartn{\itr-1} \quad( \itr\geq 1)
\end{equation}
}%
and $\{\supermartn{\itr}\}$ is a supermartingale with respect to $\{ \filtrationn{\itr}\}$, thus satisfying $\Expectation{\supermartn{\itr} } \leq \supermartn{0} <\infty$.
\obsolete{
Under assumptions \trconditionsquared{} and \eqref{trconditionbias}, we find, for $\itr\geq2$,
\begin{equation}
\nocolsep
\begin{array}{rcl}
\Expectation{\supermartn{\itr} } 
&=&
 \Expectation{\cdots\Expectation{\Expectation{\supermartn{\itr}\cond \filtrationn{\itr-1} }\cond \filtrationn{n-2} }\cdots\cond \filtrationn{0} } 
\\&\leq& 
 \supermartn{0} 
\\ &=&
\lyapunov{\optmat}{\dualmatn{1}}  + 4\kk^2\meanlipschitz \sum_{\altitr=1}^{\infty} \stepsizen{\altitr}\deltan{\altitr}   + \frac{\kk   \cstnormbound^2}{2\strongly} \sum_{\altitr=1}^{\infty}  \stepsizen{\altitr}^2
\\ &<&
\infty
.
\end{array}
\end{equation}
}%
Hence, $\{\supermartn{\itr}\}$ is uniformly $L^1$-bounded and Doob's theorem applies.  It follows that $\{\supermartn{\itr}\}$, and consequently $\{\lyapunov{\optmat}{\dualmatn{\itr}}\}$, are almost surely convergent, which completes the proof.
\end{proof}
}%
\begin{proof}[Proof of \cref{timecomplexity:trMXLMO}]
\eqref{thm:trMXLMOi}
\obsolete{
Since  under our assumptions \cref{lemma:norm} applies, \eqref{threesecondmomentcomputation} gives us the bound
\begin{equation}\label{totalnormbound}
\textstyle
\sum_{\nbk=1}^{\kk}   \dualnorm{\estgradkn{\nbk}{\itr}}^2
\refereq{\eqref{threesecondmomentcomputation}}{\leq}
\frac{\kk}{4} \maxdimension^2 \, 
 [\normbound{\alpha}{\beta}{\stepsize}{\delta}]^2
 ,
\end{equation}
where $\normbound{\alpha}{\beta}{\stepsize}{\delta}$ is given by \eqref{normbound}.
By combining \eqref{totalnormbound} with \eqref{expectationyergodicaverageidentity} in \cref{lemma:globaldescentargument}\eqref{lemma:globaldescentargumentii}, we find, for $\vect{\stepsizen{\itr},\deltan{\itr}}=\vect{\cststepsize\,\itr^{-\alpha},\cstdelta\,\itr^{-\beta}}$,
}%
%
By combining 
the uniform bound in \cref{lemma:norm}
with \eqref{expectationyergodicaverageidentity} in \cref{lemma:globaldescentargument}\eqref{lemma:globaldescentargumentii} and using $\lyapunov{\optmat}{\dualmatn{1}} \leq \kk\log\maxmm  $, we find, for the  policy $\vect{\stepsizen{\itr},\deltan{\itr}}=\vect{\cststepsize\,\itr^{-\alpha},\cstdelta\,\itr^{-\beta}}$,
\begin{equation}\label{ergodicaverageidentitythree}
\begin{array}{l}
\optrate  -\bigexpectation{ \rate{ \meanmatn{\itr}}}
\leq
\frac{
 \kk\log\maxmm
}{
\stepsize\sum_{\altitr=1}^{\itr} \altitr^{-\alpha} 
} 
+ 4\kk^2\meanlipschitz\delta\frac{
\sum_{\altitr=1}^{\itr}\altitr^{-\alpha-\beta} 
}{
\sum_{\altitr=1}^{\itr} \altitr^{-\alpha} 
}
\qquad \quad\\\hfill
  + \frac{\kk \maxdimension^2[\normbound{\alpha}{\beta}{\stepsize}{\delta}]^2 \stepsize}{8\strongly} \frac{
 \sum_{\altitr=1}^{\itr} \altitr^{-2\alpha}  
}{
\sum_{\altitr=1}^{\itr} \altitr^{-\alpha} 
} 
,
\end{array}
\end{equation}
where $\normbound{\alpha}{\beta}{\stepsize}{\delta}$ is given by \eqref{normbound}.
The above upper bound is minimized for $\alpha=\beta=1/2$, in which case we find \eqref{asymptoticrate}.

 \eqref{thm:trMXLMOiinew}
 \renewcommand{\dimensionk}[1]{\maxdimension}%
\renewcommand{\meansqdimension}{\maxdimension}%
\renewcommand{\meandimension}{\maxdimension}%
Using  \cref{lemma:globaldescentargument}\eqref{lemma:globaldescentargumentiii}  under $\vect{\cststepsize,\cstdelta}=\vect{\stepsize/\sqrt{T},\delta/\sqrt{T}}$ and with the 
bounds  
$\vbound= \frac{1}{2}
\,\normbound{0}{0}{\cststepsize}{\cstdelta} $, given by \cref{lemma:norm}, 
yields
\begin{equation} \label{ergodicaverageidentityconstant} 
\textstyle
\optrate -\bigexpectation{ \rate{ \meanmatn{T}} }
 \refereq{\eqref{standardtroptrateboundmeanexpectation}}{\leq}
 \frac{\kk\log\maxmm}{\stepsize\sqrt{T} } 
+ \frac{4\kk^2\meanlipschitz\delta  }{\sqrt{T}}
+ \frac{  \kk \dimensionk{\nbk}^2  [
\,\normbound{0}{0}{\frac{\stepsize}{\sqrt{T}}}{\frac{\delta}{\sqrt{T}}}]^2\stepsize }{8\strongly\sqrt{T} }
\end{equation}
where $
\textstyle
\normbound{0}{0}{\cststepsize}{\cstdelta}
=
\big(  \frac{ \fourtwitchFlipschitz     \strongly}{\sqrt{\maxdimension}  } \big) \big( \frac{2\strongly}{\maxdimension \Lipschitz\kk  } -  \frac{\stepsize  }{\delta} \big)^{-1} 
$,
and
 \begin{equation} 
 \label{basicazumabissecond}
\textstyle
\Bigproba{\frac{\sum_{\itr=1}^{T}  \MDSn{\itr}  }{ \sqrt{T}\,\stepsize  }  \leq \zmargin }
\refereq{\eqref{standardazumabis}}{\geq}
1-\Bigexponential{-\frac{T\zmargin^2}{8   \kk^2 \meandimension^2  [  
\,\normbound{0}{0}{\frac{\stepsize}{\sqrt{T}}}{\frac{\delta}{\sqrt{T}}}  ]^2 }}
.
\end{equation}   
We find \eqref{trsimplerateboundmeanexpectation}  after substituting $\normbound{0}{0}{\cststepsize}{\cstdelta}$ with its actual value in \eqref{ergodicaverageidentityconstant}. Then,   \eqref{trsimplerateboundmeaninterval} follows from \eqref{trsimplerateboundmeanexpectation} and \eqref{basicazumabissecond}.
\end{proof} 

The proof of Corollary \ref{convergenceratres:trMXLMO} relies on the following lemma.

\begin{lemma}\label{lemma:optimization}
Let $\Gamma = \{ (\stepsize,\delta)\in\Realplus\times\Realplus : \stepsize< h \delta \}$ and consider the function $\genericfunction:\Gamma \mapsto\Real$ defined by
\begin{equation} \label{newresidualfunction}
\textstyle
\generic{\stepsize,\delta} = \frac{a}{\stepsize} + 2 b \delta + c \stepsize \, \big( h - \frac{ \stepsize }{\delta }\big)^{-2} + d\, \big( h - \frac{ \stepsize }{\delta }\big)^{-1},
\end{equation}
\obsolete{
\begin{equation} 
\textstyle
\mom{\stepsize}{\delta} = 
\left( h - \frac{ \stepsize }{\delta }\right)^{-1}
,
\end{equation}
}%
where $a,b,c,h>0 $ and $d\geq 0$ are given parameters. 
 
\begin{enumerate}[(i),ref=\roman*,wide,labelwidth=!,labelindent=5pt]
 \item \label{lemma:optimizationi} 
At the point $(\stepsize^*,\delta^*)\in\Gamma$, where
\begin{equation} \label{earlyoptimalpolicyMO}
\textstyle
\stepsize^* {\,=\,}
{h}\, \bigg[{\sqrt{\frac c a}+ {\sqrt{{\sqrt{\frac c a}} \Big(\frac{ 2  b h  }{2\sqrt{ac}+d } \Big) }}}\bigg]^{-1}
,
\ \ \
\delta^* {\,=\,}
\sqrt{\sqrt{\frac{a}{c}}\Big(\frac{2\sqrt{ac}+d }{  2  b h    }\Big)}
,
\end{equation}
the value of $f$ is given by
\begin{equation} \label{boundfMO}
\textstyle
\generic{\stepsize^*,\delta^*}
 =
%
\frac{2\sqrt{ac}+d}{h} +2 \sqrt{ 2b\sqrt{\frac{a}{c}} \Big( \frac{2\sqrt{ac}+d}{h} \Big)  }
.
\end{equation}
Under the constraint $\delta/\sqrt{\itr}<\radius$, where $\radius>0$, \eqref{boundfMO} holds for 
\begin{equation} \label{conditionondeltaMO}
\textstyle
\itr >
 \sqrt{\frac a c}\,\Big(\frac{2\sqrt{ac}+d}{2bh}\Big)\, \radius^{-2} .
\end{equation}
%

\item \label{lemma:optimizationii}
For any $\errortolerance{\,>\,}0$, $\generic{\stepsize^*,\delta^*}/\sqrt{\itr}\leq\errortolerance$ holds for $\itr\geq T$ if $T{\,=\,} [\generic{\stepsize^*,\delta^*}/\errortolerance]^{2}$.
\obsolete{
\begin{equation} \label{optimaltimeMO}
\nocolsep
\begin{array}{c} 
T= \left[ \frac{2\sqrt{ac}+d}{h} +2 \sqrt{ 2b\sqrt{\frac{a}{c}} \left[ \frac{2\sqrt{ac}+d}{h} \right]  } \right]^2 \errortolerance^{-2}
.
\end{array}
\end{equation}.
}%
The constraint $\delta^*{/}\sqrt{T}{\,<\,}\radius$ 
then rewrites as
\begin{equation} \label{newconditionondeltaMO}
\textstyle
%
\errortolerance < \bigg(4b + \sqrt{2({b}/{h}) \sqrt{{c}/{a}} \, ( 2\sqrt{ac}+d) } \bigg) \, r
.
\end{equation}
\end{enumerate}
\end{lemma}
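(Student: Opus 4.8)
The plan is to prove both parts of \cref{lemma:optimization} by direct substitution, since the statements only assert the \emph{value} of $\generic{\stepsize,\delta}$ at the prescribed point $(\stepsize^*,\delta^*)$ together with two elementary feasibility/time reformulations; in particular, no optimality of $(\stepsize^*,\delta^*)$ needs to be established. To keep the algebra manageable, I would first introduce the auxiliary quantities $p=\sqrt{c/a}$, $s=2\sqrt{ac}+d$, and $q=\sqrt{2bhp/s}$, in terms of which the defining formulas \eqref{earlyoptimalpolicyMO} collapse to the compact expressions $\delta^*=1/q$ and $\stepsize^*=h/(p+q)$. From these, feasibility $(\stepsize^*,\delta^*)\in\Gamma$ is immediate: $\stepsize^*<h\delta^*$ is equivalent to $q<p+q$, i.e.\ to $p>0$; equivalently, the quantity $w^*:=h-\stepsize^*/\delta^*=hp/(p+q)$ is strictly positive.

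For part \eqref{lemma:optimizationi}, I would substitute $\stepsize^*$, $\delta^*$ and $w^*$ into the four summands of \eqref{newresidualfunction}. The first and third terms evaluate to $a(p+q)/h$ and $c(p+q)/(hp^2)$ respectively, which coincide after using $c=ap^2$; the second term is $2b/q$ and the fourth is $d(p+q)/(hp)$. Collecting and using $s/p=2a+d/p$ reduces the sum to $s/h+qs/(hp)+2b/q$. Finally, the identity $q^2=2bhp/s$ shows that the last two terms are each equal to $\sqrt{2bs/(hp)}$, so that $\generic{\stepsize^*,\delta^*}=s/h+2\sqrt{2bs/(hp)}$, which is exactly \eqref{boundfMO} once $s$ and $p$ are written back in terms of $a,b,c,d$. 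The accompanying bound \eqref{conditionondeltaMO} then follows by noting that the requirement $\delta^*/\sqrt{\itr}<\radius$ is equivalent to $\itr>(\delta^*)^2/\radius^2$, together with $(\delta^*)^2=1/q^2=\sqrt{a/c}\,(2\sqrt{ac}+d)/(2bh)$.

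For part \eqref{lemma:optimizationii}, the claim on $T$ is a one-line monotonicity argument: $\generic{\stepsize^*,\delta^*}/\sqrt{\itr}\leq\errortolerance$ is equivalent to $\itr\geq(\generic{\stepsize^*,\delta^*}/\errortolerance)^2$, so $T=[\generic{\stepsize^*,\delta^*}/\errortolerance]^2$ is the threshold. Substituting this $T$ into the feasibility constraint gives $\delta^*/\sqrt{T}=\delta^*\errortolerance/\generic{\stepsize^*,\delta^*}<\radius$, i.e.\ $\errortolerance<\radius\,\generic{\stepsize^*,\delta^*}/\delta^*$. To finish, I would compute the ratio $\generic{\stepsize^*,\delta^*}/\delta^*$ term by term using the closed forms from part \eqref{lemma:optimizationi}: the contribution $s/h$ divided by $\delta^*=1/q$ yields $\sqrt{2bps/h}$, while $2\sqrt{2bs/(hp)}$ divided by $\delta^*$ yields exactly $4b$ (here the square roots cancel completely). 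Hence $\generic{\stepsize^*,\delta^*}/\delta^*=4b+\sqrt{2(b/h)\sqrt{c/a}\,(2\sqrt{ac}+d)}$, which is precisely the coefficient appearing in \eqref{newconditionondeltaMO}.

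The only real obstacle is the bookkeeping in the substitution of part \eqref{lemma:optimizationi}: the four raw terms of \eqref{newresidualfunction} are unwieldy, and a careless expansion obscures the two crucial cancellations, namely the equality of the first and third terms and the equality of the two residual square-root terms. Routing everything through $p$, $q$, and $s$ — and recording the identities $c=ap^2$ and $q^2=2bhp/s$ \emph{before} substituting — is what makes these cancellations transparent and the computation short; the remaining feasibility and time reformulations are then immediate.
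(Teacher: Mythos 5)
Your proposal is correct and matches the paper's approach: the paper's proof of this lemma is literally the one-liner ``Verification of all the claims is straightforward,'' and your direct substitution (with the change of variables $p=\sqrt{c/a}$, $s=2\sqrt{ac}+d$, $q=\sqrt{2bhp/s}$, under which $\stepsize^*=h/(p+q)$, $\delta^*=1/q$, and $h-\stepsize^*/\delta^*=hp/(p+q)$) is precisely that verification carried out, and all the claimed cancellations check out. The only thing you omit, which the lemma does not actually assert, is where $(\stepsize^*,\delta^*)$ comes from (it is the minimizer of $f$ over $\Gamma$), so nothing is missing.
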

\begin{proof}  Verification of all the claims is straightforward.
\end{proof}

\obsolete{
\begin{proof}
\eqref{lemma:optimizationi} 
 Rewrite \eqref{newresidualfunction} as $f(\stepsize,\delta)=g(\stepsize,\delta,\mom{\stepsize}{\delta})$, where
$g(\stepsize,\delta,\mome) = {a}/{\stepsize} + 2 b \delta + c \stepsize \mome^{2}
+  d  \mome
$.
\obsolete{
\begin{equation} \label{rewrittenresidualfunction}
g(\stepsize,\delta,\mome) = \frac{a}{\stepsize} + 2 b \delta + c \stepsize \mome^{2}
+  d  \mome.
\end{equation}
}%
Function $g$ is convex and its coordinate minima with respect to direction $\stepsize$ lie at the points $(\tilde\stepsize(\mome),\delta,\mome)$ such that $\tilde\stepsize(\mome) = \mome^{-1} \sqrt{a/c}$. Under the constraint $\mome=\mom{\tilde\stepsize(\mome)}{\delta}$, which rewrites as $\delta=\tilde\delta(\mome) \defeq \tilde\stepsize(\mome)/(h-\mome^{-1}) =  (h\mome-1)^{-1}\sqrt{a/c}\, $, we find 
\begin{equation}
\textstyle
g(\tilde\stepsize(\mome),\tilde\delta(\mome),\mome) 
=
 (2\sqrt{ac}+d)\mome  + \frac{ 2 b \sqrt{a/c} }{h\mome-1} 
,
\end{equation}
which takes its minimal value for
\begin{equation}\textstyle
\mome^* =
\frac 1 h \left( 1+ \sqrt{ \sqrt{\frac{a}{c}}\frac{ 2  bh  }{2\sqrt{ac}+d } } \right)  
.
\end{equation}
\obsolete{
\begin{equation}
g(\tilde\stepsize(\mome^*),\tilde\delta(\mome^*),\mome^*) 
= 
  \frac{2\sqrt{ac}+d}{h} +2 \sqrt{ 2b\sqrt{\frac{a}{c}} \left[ \frac{2\sqrt{ac}+d}{h} \right]  }    
\end{equation}
}
Setting $\stepsize^*=\tilde\stepsize(\mome^*)$ and $\delta^*=\tilde\delta(\mome^*)$ yields \eqref{earlyoptimalpolicyMO} and \eqref{boundfMO}.
The additional constraint $\delta/\sqrt{\itr}<\radius$ is satisfied by \eqref{earlyoptimalpolicyMO} for as long as \eqref{conditionondeltaMO}
holds.
\obsolete{
\begin{equation} 
\itr^{-\frac 1 2} =  \frac{e \generic{\stepsize^*,\delta^*}}{ \sqrt{2\sqrt{ac}+d} \left[\sqrt{2\sqrt{ac}+d}  + 2 \sqrt{ 2b\sqrt{\frac a c} } \right] }  
.
\end{equation}
\begin{equation} 
\begin{array}{l}
\stepsize^* =  
 \frac{e^2  \sqrt{\frac a c} \generic{\stepsize^*,\delta^*}}{ 
2\sqrt{ac}+d
+
 4b\sqrt{e} 
+
\sqrt{2\sqrt{ac}+d}   \left[  \sqrt{  2  b e \sqrt{\frac c a}  }  + 2 \sqrt{ 2b\sqrt{\frac a c} }\right]
}  
,
\\
\delta^* =  \frac{\sqrt{e}}{4b +\sqrt{ 2  b  \sqrt{\frac c a}(2\sqrt{ac}+d)}  }  \,  \generic{\stepsize^*,\delta^*}
,
\end{array}  
\end{equation}
\begin{equation} 
\begin{array}{l}
\stepsize^* =
 \frac{ e }{  \sqrt{ \sqrt{\frac c a} \left(\frac{2\sqrt{ac}+d }{ 2  b e  } \right) }+1 }
\,\delta^*,
\\
\delta^*  \left[ \sqrt{\frac c a} \left(\frac{ 2  b e  }{2\sqrt{ac}+d } \right) \right]^{\frac 1 2} = \itr^{-\frac 1 2} 
,
\end{array}  
\end{equation}
}
Claim \eqref{lemma:optimizationii} follows from \eqref{earlyoptimalpolicyMO} and \eqref{boundfMO} after setting  $\generic{\stepsize^*,\delta^*}=\sqrt{T}\, \errortolerance$.
\obsolete{
Set $\generic{\stepsize^*,\delta^*}=\sqrt{T}\, \errortolerance$ in \eqref{boundfMO}. 
We get \eqref{optimaltimeMO} by solving the result for $T$ as a function of $\errortolerance$, and the claim follows.
Besides, by combining \eqref{earlyoptimalpolicyMO} with \eqref{optimaltimeMO}, we find
\begin{equation} \label{optimalpolicyMO}
\begin{array}{c}
\frac{\delta^*}{\sqrt{T}} = 
\frac{ \sqrt[4]{{a}/{c}} }{\sqrt{2b(2\sqrt{ac}+d)/h} +4b \sqrt[4]{{a}/{c}} } \,  \errortolerance
,
\end{array}
\end{equation}
and the constraint $\delta^*/\sqrt{T}<\radius$ rewrites as \eqref{newconditionondeltaMO}.
}%
\end{proof}

}%

\begin{proof}[Proof of Corollary \ref{convergenceratres:trMXLMO}]
\eqref{cr:trMXLMOi}
To derive $\stepsize$ and $\delta$  in \eqref{cr:trMXLMOi} it suffices to apply \cref{lemma:optimization}\eqref{lemma:optimizationi} to the expression for $\Bcoefficient{\stepsize}{\delta}$ given in  \cref{timecomplexity:trMXLMO}\eqref{thm:trMXLMOiifirst}. The convergence rate of $\expectation{ \rate{ \meanmatn{T}} }$ follows from \eqref{troptrateboundmeanexpectation} and \eqref{boundfMO}, while the condition on $T$ is a translation of \eqref{conditionondeltaMO} into the present setting, where the restriction $\delta/\sqrt{\altitr}<\radiusk{\nbk}$ for all $\nbk$ applies, with $\radiusk{\nbk}$ given by \eqref{radius}. 

\eqref{cr:trMXLMOii}
Recall \cref{timecomplexity:trMXLMO}\eqref{thm:trMXLMOiisecond}.  The second part of \eqref{troptrateboundmeaninterval} rewrites as $1-\errorproba$ for
$ \textstyle
 \zmargin  
 = 
16 \halftwitchFlipschitz  \strongly    \left[ \frac{2\strongly}{ \Lipschitz\kk (\maxmm^2-1) } -  \frac{\stepsize  }{\delta} \right]^{-1}  \sqrt{2\,\logarithm{\frac 1 \errorproba}\kk^2(\maxmm^2-1) / T}      
 .   
$
Observe that $\Bcoefficient{\stepsize}{\delta}+\zmargin\sqrt{T}$ is an instance of the function $\generic{\stepsize,\delta}$ defined in \eqref{newresidualfunction}. \cref{lemma:optimization}\eqref{lemma:optimizationii} gives us a condition on $T$ for $\Bcoefficient{\stepsize}{\delta}/\sqrt{\ceil T}+\zmargin\leq \errortolerance$ to be true which, in view of \eqref{troptrateboundmeaninterval}, is also sufficient for \eqref{standardtroptrateboundmeaninterval} to hold.
After computations we find 
the value of $T$ in Table \ref{table:convergenceratres:trMXLMO}\ref{cr:trMXLMOii} 
with the restriction on $\errortolerance$:
\begin{multline*}
\txs
\errortolerance 
\refereq{\!\eqref{newconditionondeltaMO}\!}{{\,<\,}}
%
%
%
%
2^{\frac{9}{4}}\meanlipschitz 
	\bracks*{\renrmlzdcsteps{\errorproba}
	\sqrt{\halftwitchFlipschitz \Lipschitz / \meanlipschitz}  {+} \frac{2^{\frac 3 4}  }{ \sqrt[4]{\log(1/\errorproba)}  [\maxmm^2-1]} }
	\\
	\times \sqrt{\frac{ \sqrt{\log(1/\errorproba)} \kk^4 [\maxmm{+}1][\maxmm^2{-}1]}{\maxmm}}.
\qedhere
\end{multline*}
\end{proof}

\subsection{Analysis of the \randomMXLzeroplus{} algorithm}
\label{appendix:randomMXLzeroplus}

Lemmas \ref{lemma:bias} and \ref{lemma:norm} still apply in the asynchronous setting. 
Instead of \eqref{lyapunov} we use the Lyapunov function
\begin{equation} \label{randomlyapunov}
\textstyle
\lyapunovprob{\prob}{\optmat}{\dualmat} = \sum_{\nbk=1}^{\kk} \frac{1}{\probk{\nbk}}\fenchel{\optmatk{\nbk}}{\dualmatk{\nbk} } ,
\end{equation}
where $\optmat\in\optMat$ is a solution.
Proceeding as for the derivation of \eqref{earlydescentargumenttwo}    in  \cref{lemma:globaldescentargument}, we find, for the algorithm \eqref{randomMXL},
\begin{equation} \label{randomdescentargumenttwo} 
\begin{array}{l}
 \lyapunovprob{\prob}{\optmat}{\dualmatn{\itr+1}}
 \leq
\lyapunovprob{\prob}{\optmat}{\dualmatn{\itr}} 
-\stepsizen{\itr} \left[\optrate-\rate{\matn{\itr}} \right] 
\qquad  \quad 
\\
\hfill 
+ 4\kk^2\meanlipschitz \stepsizen{\itr}\deltan{\itr} 
+  \randomMDSn{\itr}
+\sum_{\nbk=1}^{\kk} \frac{\stepsizen{\itr}^2}{2\strongly} \dualnorm{\estgradkn{\nbk}{\itr}}^2
,
\end{array}
\end{equation}
with the random sequence $\{\randomMDSn{\itr} \}$   now given by
\begin{equation}\begin{array}{c} \label{randomMDSn}
\randomMDSn{\itr} 
= 
 \stepsizen{\itr}\sum_{\nbk\in\Updatingn{\itr}} \inner{ \devmatkn{\nbk}{\itr} }{\matkn{\nbk}{\itr}-\optmatk{\nbk}}  \qquad \qquad \qquad \qquad \quad
\\
\hfill 
+ \sum_{\nbk=1}^{\kk}  \frac{\indicator{\Updatingn{\itr}}{\nbk}-\probk{\nbk}}{\probk{\nbk}} \, \big[ \stepsizen{\itr} \inner{  \estgradkn{\nbk}{\itr} }{\matkn{\nbk}{\itr}-\optmatk{\nbk}} + \frac{\stepsizen{\itr}^2}{2} \dualnorm{\estgradkn{\nbk}{\itr}}^2  \big]
.
\end{array}
\end{equation}
It is easily seen that $\expectation{\randomMDSn{\itr}   \cond \filtrationn{n-1}   }= 0 $, and
\begin{equation} \label{randommodulusMDSn}
\begin{array}{l}
\nocolsep
\modulus{\randomMDSn{\itr}} 
\leq  
 2\stepsizen{\itr}\sum_{\nbk=1}^{\kk}  \dualnorm{ \devmatkn{\nbk}{\itr} }
\qquad\qquad\qquad\qquad\qquad\qquad\\ \hfill 
+ \sum_{\nbk=1}^{\kk} \max\big(1,\frac{1}{\probk{\nbk}}-1\big) \, \big[ 2\stepsizen{\itr} \dualnorm{  \estgradkn{\nbk}{\itr} } + \frac{\stepsizen{\itr}^2}{2} \dualnorm{\estgradkn{\nbk}{\itr}}^2  \big].
\end{array} 
\end{equation}
Compare \eqref{randomdescentargumenttwo},\eqref{randommodulusMDSn} with \eqref{descentargumenttwo},\modulusmartingaledifference{}. 
By reproducing the rationale behind the proof of \cref{lemma:globaldescentargument}, we obtain an asynchronous 
counterpart to \cref{lemma:globaldescentargument}, where \eqref{expectationyergodicaverageidentity} and \eqref{standardtroptrateboundmeanexpectation} now hold with $\lyapunovprobfunction{\prob} $ in place of $ \lyapunovfunction $, and \eqref{standardazumabis} becomes
\obsolete{
\begin{equation}
\begin{array}{l}
\nocolsep
\modulus{\randomMDSn{\itr}} 
\leq
2\cststepsize\sum_{\nbk=1}^{\kk}  \dualnorm{ \devmatkn{\nbk}{\itr} }
\hfill
\\
\quad
+ \sum_{\nbk=1}^{\kk} \max\left(1,\frac{1}{\probk{\nbk}}-1\right) \left[ 2\cststepsize \dualnorm{  \estgradkn{\nbk}{\itr} } + \frac{\cststepsize^2}{2} \dualnorm{\estgradkn{\nbk}{\itr}}^2  \right].
\\
\leq
4\vbound\cststepsize\kk\meandimension
\hfill
\\
\quad
+  \sum_{\nbk=1}^{\kk} \max\left(1,\frac{1}{\probk{\nbk}}-1\right) \left[ 2\vbound\cststepsize \dimensionk{\nbk} + \frac{\vbound^2\cststepsize^2}{2} \dimensionk{\nbk}^2  \right].
\\
=
2 (2+ \probdist{\prob}) \cststepsize\kk\maxdimension\vbound
+  \frac{\cststepsize^2\probdist{\prob}}{2} \kk\maxdimension^2 \vbound^2 .
\\
=
2 [(2+ \probdist{\prob}) \cststepsize\kk\maxdimension\vbound
+  \frac{\cststepsize^2\probdist{\prob}}{4} \kk\maxdimension^2 \vbound^2 ].
\end{array}
\end{equation}
}%
\begin{equation} \label{randomstandardazuma}
\begin{array}{c}
\Bigproba{\frac{\sum_{\itr=1}^{T}  \randomMDSn{\itr}  }{ T\cststepsize  }   \geq \zeta   }
\leq 
%
%
\Bigexponential{-\frac{T\zeta^2}{    8\kk^2\maxdimension^2     \left[ (1+\frac{\probdist{\prob}}2)    \normbound{0}{0}{\cststepsize}{\cstdelta} + \frac{\probdist{\prob}}{8} \cststepsize      \maxdimension \, [\normbound{0}{0}{\cststepsize}{\cstdelta}]^2  \right]^2  }}
 ,
\end{array}
\end{equation}
where $\probdist{\prob}$ is defined as in \cref{timecomplexity:randomtrMXLMO}. 
\begin{proof}[Proof of \cref{timecomplexity:randomtrMXLMO}]
Proceed as in the proof of \cref{timecomplexity:trMXLMO}.
\end{proof}
%
%
\obsolete{
$\normbound{0}{0}{\cststepsize}{\cstdelta} =   \frac{  \fourtwitchFlipschitz    \strongly}{\sqrt{\maxdimension}  }  \left( \frac{2\strongly}{\maxdimension \Lipschitz\kk  } -  \frac{\cststepsize  }{\cstdelta} \right)^{-1}$

\begin{equation} \label{randomstandardazuma}
\begin{array}{c}
\Proba{\frac{\sum_{\itr=1}^{T}  \randomMDSn{\itr}  }{ T\cststepsize  }   \geq \zeta   }
 \hspace{55mm}
 \\ 
 \hfill
\leq 
\Exponential{-  {\scriptstyle  \left[   \frac{ \halftwitchFlipschitz(1+\frac{\probdist{\prob}}{2}) \strongly   }{\big( \frac{2\strongly}{(\maxmm^2-1) \Lipschitz\kk  } -  \frac{\cststepsize  }{\cstdelta} \big)}  +     \frac{  \squaredhalftwitchFlipschitz     \strongly \probdist{\prob}      \sqrt{\maxmm^2-1}  \cststepsize }{\big( \frac{2\strongly}{(\maxmm^2-1) \Lipschitz\kk  } -  \frac{\cststepsize  }{\cstdelta} \big)^{2}}  \right]^{-2}  } \frac{T\zeta^2}{    2^9 \kk^2 (\maxmm^2-1)      }  }
 \\ 
 \hfill
\leq 
\Exponential{-  {\scriptstyle \frac{\big( \frac{2\strongly}{(\maxmm^2-1) \Lipschitz\kk  } -  \frac{\cststepsize  }{\cstdelta} \big)^{2}}{ \left[ \halftwitchFlipschitz    \forcestrongly+\halftwitchFlipschitz \frac{\probdist{\prob}}{2} \strongly    +     \frac{  \squaredhalftwitchFlipschitz     \strongly \probdist{\prob}      \sqrt{\maxmm^2-1}  \cststepsize }{\big( \frac{2\strongly}{(\maxmm^2-1) \Lipschitz\kk  } -  \frac{\cststepsize  }{\cstdelta} \big)}  \right]^{2}  } } \frac{T\zeta^2}{    2^9 \kk^2 (\maxmm^2-1)      }  }
 ,
\end{array}
\end{equation}
where $\probdist{\prob}=\frac 1 \kk\sum_{\nbk=1}^{\kk}   \max(1,\frac{1}{\probk{\nbk}}-1) $.
}%
%

%
%
\begin{proof}[Proof of Corollary \ref{convergenceratres:randomtrMXLMO}]
First observe that  we have $\probdist{\prob}=\kk-1$ if $\kk\geq2$. The rest of the proof bases on the conclusions of \cref{timecomplexity:randomtrMXLMO} 
and follows the exact lines of the proof of Corollary \ref{convergenceratres:trMXLMO}, now using $\Bcoefficientprob{\prob}{\stepsize}{\delta}$ and $\Ccoefficientprob{\prob}{\stepsize}{\delta}$. Note that Corollary \ref{convergenceratres:randomtrMXLMO}\ref{cr:randomtrMXLMOii}
holds with the following restriction on $\errortolerance$:
\begin{multline*}
\errortolerance 
{\refereq{\eqref{newconditionondeltaMO}}{{\,<\,}}}
%
%
%
%
2^{3} \meanlipschitz
	\bracks*{ \renrmlzdcstepsprob{\prob}{\errorproba}
	\sqrt{\renrmlzdstuff{\errorproba}  \sqrthalftwitchFlipschitz  \sqrtstrongly     \squaredhalftwitchFlipschitz \strongly    \Lipschitz / \meanlipschitz } + \big[{\powlogarithm{{1/ \errorproba}}{\frac 3 8} \kk^{\frac{3}{4}}\dimension}\big]^{-1} }
	\\
	\times \sqrt{ \frac{\powlogarithm{1/\errorproba}{\frac 3 4}  \kk^{\frac{11}{4}}[\maxmm+1][\maxmm^2{-}1] }{\maxmm} }
.
\qedhere
\end{multline*}
\end{proof}


\ifCLASSOPTIONcaptionsoff
  \newpage
\fi

  

\bibliographystyle{IEEEtran}
\bibliography{bibtex/IEEEabrv,bibtex/obinria,bibtex/Bibliography-PM,bibtex/verobiblio}

 
\begin{biographies}
 
\begin{IEEEbiographynophoto}
{Olivier Bilenne}
received the M.Sc. (Ingénieur Civil \'{E}lectricien) degree from the University of Liège, Belgium, in 2002, and the  Dr.-Ing. degree from the Technical University of Berlin, Germany, in 2015. 

Since 2018 he has been a CNRS Postdoctoral Researcher with the Inria-POLARIS Team at the Laboratoire d’Informatique de Grenoble, France. He previously completed postdoctoral stays at Aalto University, Espoo, Finland (2016--17) and at the KTH Royal Institute of Technology in Stockholm, Sweden (2017--18). From 2003 to 2009, he was with Multitel Research Centre, Mons, Belgium. Starting fall 2020, he will be affiliated with the Department of Data Science and Knowledge Engineering at Maastricht University, Netherlands. His research interests are in optimization, learning, game theory, networks, signal processing, and queueing systems.
\end{IEEEbiographynophoto}

\begin{IEEEbiographynophoto}
{Panayotis Mertikopoulos} (M'11) graduated valedictorian from the Physics Department of the University of Athens in 2003. He received his M.Sc. and M.Phil. degrees in mathematics from Brown University in 2005 and 2006, and his Ph.D. degree from the University of Athens in 2010. During 2010–2011, he was a post-doctoral researcher at the \'{E}cole Polytechnique, Paris, France. Since 2011, he has been a CNRS Researcher at the Laboratoire d’Informatique de Grenoble, Grenoble, France, and has held visiting positions at UC Berkeley and EPFL. Since 2020, he holds a joint affiliation with Criteo AI Lab as a principal researcher.

P. Mertikopoulos was an Embeirikeion Foundation Fellow between 2003 and 2006, and received the best paper award in NetGCoop '12, and spotlight awards at ICLR 2020 and NeurIPS 2020. He is serving on the editorial board and program committees of several journals and conferences on learning and optimization (such as NeurIPS, ICML and ICLR). His main research interests lie in learning, optimization, game theory, and their applications to networks, machine learning, and signal processing.
\end{IEEEbiographynophoto}

\begin{IEEEbiographynophoto}
{E.~Veronica Belmega}(S'08-M'10-SM'20) has been an Associate Professor (MCF HDR) with ENSEA graduate school since 2011 and Deputy Director of ETIS laboratory since 2020, Cergy, France.

She received the M.Sc. (eng. dipl.) degree from the University Politehnica of Bucharest, Romania, in 2007, and the M.Sc. and Ph.D. degrees both from the University Paris-Sud 11, Orsay, France, in 2007 and 2010. She received the HDR habilitation degree from the University of Cergy-Pontoise in 2019. From 2010 to 2011, she was a Post-doctoral researcher in a joint project between Princeton University, N.J., USA  and Sup\'elec, France. In 2015-2017, she was a visiting researcher at Inria, Grenoble, France.

Her research interests lie in convex optimization, game theory and machine learning applied to distributed networks. Dr. E. Veronica Belmega received the L'Or\'eal - UNESCO - French Academy of Science national fellowship in 2009. She served as Executive Editor of Trans. on Emerging Telecommun. Technologies (ETT) in 2016-2020; distinguished among the Top Editors 2016-2017. From 2018 until 2022, she receives the Doctoral Supervision and Research Bonus by the French National Council of Universities.
\end{IEEEbiographynophoto}

%
%
%

\end{biographies}

\end{document}